\theoremstyle{definition}
\newtheorem{definition}{Definition}
\newtheorem{theorem}{Theorem}
\theoremstyle{definition}
\newtheorem{Lemma}{Lemma}
\theoremstyle{definition}
\newtheorem*{remark}{Remark}
\theoremstyle{definition}
\theoremstyle{definition}
\theoremstyle{definition}
\title{A system of axioms for Minkowski spacetime\footnote{We are grateful to John Burgess, Dino Calosi, Harold Hodes and Chris Wüthrich for discussion and comments on parts of this paper.}}
\author{Lorenzo Cocco \space \space \space \space \space \space Joshua Babic }
\date{}
\begin{document}

\maketitle

\begin{abstract}
We present an elementary system of axioms for the geometry of Minkowski spacetime. It strikes a balance between a simple and streamlined set of axioms and the attempt to give a direct formalization in first-order logic of the standard account of Minkowski spacetime in [Maudlin 2012] and [Malament, unpublished]. It is intended for future use in the formalization of physical theories in Minkowski spacetime. The choice of primitives is in the spirit of [Tarski 1959]: a predicate of betwenness and a four place predicate to compare the square of the relativistic intervals. Minkowski spacetime is described as a four dimensional `vector space' that can be decomposed everywhere into a spacelike hyperplane - which obeys the Euclidean axioms in [Tarski and Givant, 1999] - and an orthogonal timelike line. The length of other `vectors' are calculated according to Pythagoras' theorem.
We conclude with a \textit{Representation Theorem} relating models $\mathfrak{M}$ of our system $\mathcal{M}^1$ that satisfy second order continuity to the mathematical structure $\langle \mathbb{R}^{4}, \eta_{ab}\rangle$, called `Minkowski spacetime' in physics textbooks.   
\end{abstract}

\section{Introduction and motivation}

The aim of this paper is to provide an elementary system of axioms that characterizes the geometry of Minkowski spacetime. It will be pursued in the style of Tarski; that is, with a primitive predicate of betweenness and a quaternary predicate to compare the relativistic intervals between points.\newline A system of this sort is needed, first of all, for certain investigations on the foundations of relativity. One question that we believe deserves attention is that of the theoretical equivalence of two types of formulations of relativity. There are `dynamical' formulations of relativity, framed in terms of observers, coordinates systems and the like [Andréka, Németi et al. 2011]. We can contrast them with `geometric' formulations of the theory that eschew this apparatus, describing the intrinsic features of a manifold of spacetime points. \\ 
Robb [1914, 1936] was the first to provide an axiomatic description of the geometry of spacetime, and in particular of its causal structure. He was soon followed by Reichenbach [1924]. Their systems are not formalized and make use of some unnecessary set theory. An excellent set of axioms that is entirely elementary and in first order logic has been formulated by Goldblatt [1987] in terms of orthogonality. Many other axiomatizations have been proposed.\footnote{The systems of Mundy [1986a, 1986b] are notable examples. Mundy [1986a] is close to that of Robb[1936] and is based on lightlike connectibility. Mundy [1986b] is the most similar to ours, but requires five primitives: three primitive notions of betweenness, timelike, spacelike and lightlike betweenness, and two primitive notions of congruence, temporal and spatial congruence. Other systems worth mentioning are that of [Ax 1978] and [Schutz 1997], although they both heavily rely on set-theoretic machinery. [Ax 1978] is a `dynamical' system (in our terminology). It employs variables of two sorts: one ranging over particles and one ranging over signals. It construes segments as sets of `particles'.} \newline Unfortunately all of these systems are rather unwieldy to work with, when one attempts to extract physics from them.  Just to account for the description of spacetime along the lines of [Maudlin 2012] and [Malament, unpublished] requires several pages of definitions and derivations. On the other hand, our preferred standard of theory equivalence is a modification of one due to [Barrett \& Halvorson 2016].\footnote{They themselves modify an earlier proposal of [Quine 1975]. The book [Halvorson 2019] surveys several such notions of equivalence for scientific theories and argues that a plausible candidate should be intermediate in strength between mutual interpretability and bi-interpretability. In a future paper, we will propose an ulterior refinement of [Barrett \& Halvorson 2016][Quine 1975] [Spector 1958] and defend that it is the best criterion of equivalence. We need to allow for the translation of theories with different domains of discourse,  as in all the generalized notions of interpretation described in [Alscher, 2016; chap. 1] and several natural examples of reconstrual in mathematics [Halvorson 2019, pp. 143-145] } Any reasonable definition requires a `dictionary' between talk of coordinates and  spacetime notions. We also need to derive the translation of the axioms of [Andréka, Németi et al. 2011] from the geometric theories and \textit{vice versa}. In future work, we plan to describe such a translation and consider some of its philosophical implications. But we have found it more convenient to give first an equivalent but more manageable theory, with simpler extralogical primitives, to act as an intermediate.\footnote{The variables of the system of [Andréka, Németi et al. 2011] range over bodies, observers and real numbers. Their primitive predicates are those of the theory of real-closed fields and a primitive predicate Cooordinatization $obxyzt$ that applies to an observer, a body and four coordinates in the obvious circumstances. The calculus of segments seems to be needed to translate this talk of localization relative to coordinates.} \newline 
A second, intrinsic justification for our system is that it allows a straightforward proof of the \textit{Representation Theorems} of [Tarski 1959] and [Tarski and Sczerba, 1979] for Minkowski Spacetime.\footnote{The price to pay is that our axioms cannot be stated simply in primitive notation. Tarski and Givant [1999, p.192/f], and most logicians working on geometry, attach much importance to avoiding defined symbols. This does not appear to us to be a decisive defect. In axiomatic set theory, nobody would take the pains to write down the axiom `\textit{V=L}', or \textit{Martin's axiom}, or the \textit{Proper Forcing Axiom} only in terms of quantifiers, truth functions and the epsilons. This does not disqualify them as possible additions to ZFC [Jensen 1972].} We use the results of [Tarski, 1959] for Euclidean space to show that: (1) every model of a [second order version] of our theory admits of a coordinatization into $\mathbb{R}^4$ and (2) any two such coordinatizations $f$ and $f'$ are equivalent up to rescaling $U$ and a Poincaré transformation $L$ (sec. VI). In addition, it is plausible that the system below can be be more easily supplemented to axiomatize a field theory, for example  electrodynamics.\footnote{Consider the problem of formalizing Maxwell's theory on the systems of Goldblatt [1987] and Mundy [1986a, 1986b]. To formulate a nominalistic analog to a system of partial differential equations - in the style of [Field 1980] - and set up an initial value problem, we are forced to introduce by definition the apparatus to describe a foliation and employ it in the axioms. This means that the main advantage that the systems of Goldblatt [1987] and Mundy [1986a] have over ours, the fact that they can be stated elegantly without abbreviative definitions, disappears when we come to relativistic electrodynamics.} It acts as a useful `buffer' between `dynamical' and geometric formulations of the theory. A proof of the equivalence of our system to our target system - in the sense of W.V.O Quine [1975] -  will \textit{ipso facto} carry over to other geometric systems of axioms that are interderivable.        
It will be evident enough how to derive from our system all the axioms in the appendix to Goldblatt [1987]. Derivability in the reverse direction can be established by more theoretical considerations.  
Goldblatt sketches in the appendix to his book a proof that his own system is complete and decidable, and he demonstrates that his primitive of orthogonality is interdefinable with that of causal connectibility. He derives his result from quantifier elimination for the theory of real closed fields. [Pambuccian 2006] constructs an explicit definition of betweenness and congruence in terms of causal connectibility.\footnote{Beth's definability theorem and a first order strengthening of the  Alexandrov-Zeeman's theorem - according to which every automorphism of Minkowski spacetime preserves congruence relations - already imply that such a definition must exist. [Sklar 1985] says that Malament proved a similar theorem in his PhD thesis; [Malament 2019] attributes a version of the theorem to Robb. Pambuccian [2006] has explicitly found such an adequate definition in terms of lightlike connectibility.} Since the system of Goldblatt [1987] and ours are almost self-evidently sound, we get that a derivation must exist without having to go through the hurdle of providing one. This closes the circle. Our system, that of Goldblatt [1987], and a proper formalization of [Robb 1936] must all be equivalent. 

\begin{remark}
\textit{The system that is most similar to what we are about to propose is the axiomatization of Galilean spacetime sketched by Hartry Field in chapt.4 of [Field 1980]. We use the same methods to form a theory for relativistic spacetimes. The main idea is to employ the already existing systems for affine spaces of dimension four [Tarski and Sczerba, 1979] and for Euclidean geometry [Tarski and Givant, 1999] as basic building blocks of our account. The system that we propose is nominalistic. We will return to the connection with [Field 1980] and the nominalization of physics at the end.}     
\end{remark}

\section{The language}

As in Tarski's system for Euclidean geometry [Tarski and Givant 1999], we assume only one type of  entity in the range of the variables: points. The logical vocabulary consists of the identity symbol `=', negation `$\neg$', conjunction `$\land$' the existential quantifier `$\exists$', and auxiliary symbols. The variables are $x$, $y$, $z$ ... $x'$, $y'$... In defiance of the usual conventions, we use $v_1$, $v_2$, $v_3$ as metavariables ranging over variables to state some schemata. \newline   
The two extralogical primitives are a ternary predicate of betweenness: \newline 

(1)  $Bet(x,y,z)$  \newline 

and a quaternary predicate to compare lengths: \newline   

(2) $<_{\equiv}(x,y,z,w)$ \newline 

that holds of four points $x$, $y$, $z$ and $w$ when the square of the relativistic intervals between x and y is less than that between z and w. By the relativistic interval between two points we mean the geometric quantity that is measured, under appropriated coordinates, by the algebraic expression:  

\begin{center} $\sqrt{(t_1-t_2)^2 - (x_1- x_2)^2 - (y_1- y_2)^2 - (z_1 - z_2)^2}$  \end{center} 

The square of the interval is, therefore, the real valued quantity

\begin{center} $(t_1-t_2)^2 - (x_1- x_2)^2 - (y_1- y_2)^2 - (z_1 - z_2)^2$ \end{center}

We stress that, for reasons of simplicity, we work with the square of the interval rather than the interval. This partitions pairs of points into three categories: those such that the term above is negative, those such that the term above is positive and those such that the term above is zero. This of course embodies a convention about signs. It means that spacelike separated points, for example, will count as having negative `length', since the square of the above quantity is a negative number. As we have mentioned, we do not even attempt to formulate the axioms in primitive notation and, for this reason, the next section is devoted to a battery of definitions.

\begin{remark}
\textit{Our primitive vocabulary contains the predicate $<_{\equiv}(x,y,z,w)$ in lieu of the usual congruence predicate $\equiv(x,y,z,w)$ [Tarski, 1959; Tarski and Givant, 1999]. It is natural to ask whether we could have based our system on congruence instead. The predicate `$<_{\equiv}(x,y,z,w)$' is simply not definable in terms of `$Bet(x,y,z)$' and `$\equiv(x,y,z,w)$' in plane geometry. The Minkwoski two dimensional plane admits of an automorphism of the system of congruence - a bijection that sends congruent segments to congruent segments - but inverts relationships of shorter and longer. Anticipating a bit on our account of representation, we can specifiy it in coordinates as the transformation (x,t) $\mapsto$ (t,x) (swapping of space and time coordinates). In Minkowski spacetime a definition is possible. We can distinguish spacelike segments by the fact that they have congruent orthogonal segments and define `shorter than' as usual. The chain of definitions is cumbersome and we have preferred to adopt `$<_{\equiv}(x,y,z,w)$' as an undefined predicate.                   }     
\end{remark}

\section{A battery of definitions}

Our plan is to describe the geometry of a flat spacetime by specifying axioms that (a) characterize it as a four dimensional vector space and (b) fix the `length' of arbitrary segments. We fix their length by decomposing them into a basis. The length of our initial segment is expressed as a function of those of its projections or components. This requires the machinery of linear algebra. We also need the notion of orthogonality and a development of the theory of proportions; essentially of a device to mimic algebraic computations within the theory. The crucial definition is that of the \textit{orthogonality of two segments}. The development of linear algebra depends on orthogonality rather than orthogonality being defined as in linear algebra. One cannot just start from a given `chosen' basis and define the dot product - or a particular linear form - as a linear function of the components relative to the `preferred' basis. The definitions that follow build up the conceptual tools that we need:

\subsection{Basic definitions}

The first definition introduces the usual congruence predicate `$\equiv$'. 

\begin{enumerate}[start=0,label={(\bfseries D\arabic*):}]
\item $\equiv(x,y,z,w)$ $\leftrightarrow_{df}$ $\neg <_{\equiv} (x,y,z,w) \land \neg <_{\equiv} (z,w,x,y)$  


A lightlike segment is a segment of zero `length': a segment that is congruent to the degenerate segment between a point and itself.  

\item $L(x,y)$ $\leftrightarrow_{df}$ $\equiv (x,y,x,x)$ 

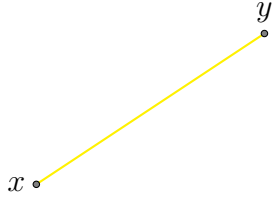
\begin{figure}[H]
    \centering
    \begin{tikzpicture}
    \tkzDefPoint(0,0){x}
    \tkzDefPoint(3,2){v}
    \tkzDrawSegment[color=yellow,thick](x,v)
    \tkzDrawPoints(x,v)
    \tkzLabelPoint[left](x){$x$}
    \tkzLabelPoint[above](v){$y$}
    \end{tikzpicture}
    \caption{The points $x$ and $y$ are lightlike separated (yellow).}
\end{figure}

A spacelike segment is a segment of negative `length' (in blue). 

\item $S(x,y)$ $\leftrightarrow_{df}$ $<_{\equiv}(x,y,x,x)$ \newline \newline

\begin{figure}[H]
    \centering
    \begin{tikzpicture}
    \tkzDefPoint(0,0){x}
    \tkzDefPoint(3,0){w'}
    \tkzDrawSegment[color=blue, thick](x,w')
    \tkzDrawPoints(x,w')
    \tkzLabelPoint[left](x){$x$}
    \tkzLabelPoint[right](w'){$y$}
    \end{tikzpicture}
    \caption{The points $x$ and $y$ are spacelike separated.}
\end{figure}
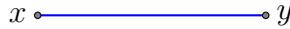

A timelike segment is a segment of positive `length' (in red). 

\item $T(x,y)$ $\leftrightarrow_{df}$ $ <_{\equiv}(x,x,x,y)$

\begin{figure}[H]
    \centering
    \begin{tikzpicture}
    \tkzDefPoint(0,0){x}
    \tkzDefPoint(0,3){v}
    \tkzDrawSegment[color=red,thick](x,v)
    \tkzDrawPoints(x,v)
    \tkzLabelPoint[left](x){$x$}
    \tkzLabelPoint[above](v){$y$}
    \end{tikzpicture}
    \caption{The points $x$ and $y$ are timelike separated.}
\end{figure}
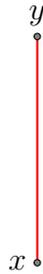

\subsection{Black boxes from the axiomatization of geometry}

The following definitions are imported wholesale from the literature on the axiomatization of geometry and need no further explanation: they define collinearity in terms of betweenness, coplanarity of four points and a preliminary definition of parallelism between the lines on which $xy$ and $zw$ stand. Later we will settle on another definition.       

\item $Coll(x,y,z)$ $\leftrightarrow_{df}$ $Bet (x,y,z) \lor Bet (x,z,y) \lor Bet (y,x,z)$

\item $Copl(x,y,z,w) \leftrightarrow_{df} \exists v  ((Coll(x,y,v) \land Coll(z,v,w)) \lor (Coll(x,z,v) \land Coll(y,v,w)) \lor (Coll(y,z,v) \land Coll(x,v,w))) $

\item $Par_W (x,y,z,w)$ $\leftrightarrow_{df}$ $Copl(x,y,z,w) \land ((Coll(x,y,z) \land Coll(x,y,w)) \lor \neg \exists v (Coll((x,y,v) \land Coll(z,w,y))$ 

\item $Intersect(x,y,z,w)$ $\leftrightarrow_{df}$ $\exists v(Bet(x,v,y) \land Coll(z,w,v))$   

The points $x$, $y$, $z$ and $w$ form a parallelogram when the segments that unite them are pairwise parallel.    

\item $Parallelogram (x,y,z,w)$ $\leftrightarrow_{df}$ $Par_W(x,y,z,w) \land Par_W(x,w,y,z)$

\subsection{Orthogonality}

The main business of this section is to provide a definition of the ternary predicate orthogonality in terms of congruence and betweenness: the segment from x to y is orthogonal to that from x to z. The definition that we give is a definition by cases. The three cases we need to treat separately are (1) the segment from x to y is lightlike, (2) the segment from x to y is spacelike or (3) the segment from x to y is timelike. The strategy is easily grasped by considering how one might define orthogonality in Euclidean geometry. In Euclidean geometry, the orthogonal projection of a point z on a line passing through x and y is simply the \textit{closest point on the line}. This definition can be reproduced wholesale in the case when the segment from x to y is \textit{timelike}. When the segment from x to y is not timelike, the state of affairs is reversed or more complicated. The presence of null and negative lines complicates the business. In all scenarios, a vector from z to some v that falls on the line determined by x and y will give us a right triangle if and only if the segments are orthogonal. The segment from z to v is going to be the hypotenuse of it. Pythagoras has taught us that the square of the hypotenuse is a sum of squares: if the basis of the triangle is spacelike, then the cathetus from x to z is going to contribute negatively to the length of the hypotenuse. This means that the path from z to x is going to be the \textit{longest} straigth path to the line xy.      

\item $Case_1 (x,y,z)$ $\leftrightarrow_{df}$ $S(x,y) \land \forall v (Col (v,y,x) \rightarrow (<_{\equiv} (v,z,x,z) \lor  v=x )) $

\begin{figure}[H]
    \centering
    \begin{tikzpicture}
    \tkzDefPoint(0,0){x}
    \tkzDefPoint(2,0){y}
    \tkzDefPoint(0,2){z}
    \tkzDefPoint(3,0){v}
    
    \tkzDrawSegment[color=blue, thick](x,v)
    \tkzDrawSegment[color=red, thick](x,z)
    \tkzDrawSegment[dashed](v,z)
    \tkzDrawPoints(x, y, z, v)
    \tkzLabelPoint[left](x){$x$}
    \tkzLabelPoint[below](y){$y$}
    \tkzLabelPoint[left](z){$z$}
    \tkzLabelPoint[right](v){$v$}
    \end{tikzpicture}
    \caption{Case 1 of orthogonality}
\end{figure}
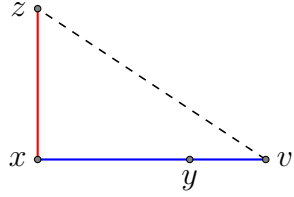

If the basis is timelike, we get the reverse situation. This puts us back, as we noted, in the old Euclidean case. The orthogonal projection of z onto the line xy is the closest point on the line: 

\item $Case_2 (x,y,z)$ $\leftrightarrow_{df}$ $T(x,y) \land \forall v (Col (v,y,x) \rightarrow (<_{\equiv} (x,z,v,z) \lor  v=x))$

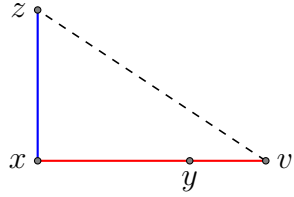
\begin{figure}[H]
    \centering
    \begin{tikzpicture}
    \tkzDefPoint(0,0){x}
    \tkzDefPoint(2,0){y}
    \tkzDefPoint(0,2){z}
    \tkzDefPoint(3,0){v}
    
    \tkzDrawSegment[color=red, thick](x,v)
    \tkzDrawSegment[color=blue, thick](x,z)
    \tkzDrawSegment[dashed](v,z)
    \tkzDrawPoints(x, y, z, v)
    \tkzLabelPoint[left](x){$x$}
    \tkzLabelPoint[below](y){$y$}
    \tkzLabelPoint[left](z){$z$}
    \tkzLabelPoint[right](v){$v$}
    \end{tikzpicture}
    \caption{Case 2 of orthogonality}
\end{figure}

The last case that needs to be treated is when the base of the triangle is lightlike. There are two ways to deal with it. With the two notions of orthogonality at hand, we have enough material to define an \textit{orthogonal basis} and the arithmetic of segments (cf. next section). This apparatus is enough to develop linear algebra. We can then define a nominalistic proxy of the Lorentzian form between two segments. Two orthogonal segments are going to be two segments such that the form gives zero when applied to them. The approach we adopt is more elegant and consists in reducing the third case to the former two. Let us assume again that xy is lightlike and that xz is a candidate to orthogonality. Either (a) z is collinear to x and y or (b) xz is spacelike. We can decompose xy in a spacelike component $xy'$ and a timelike component xw so that xw is orthogonal to xz. At this point, by the distributivity of the Lorentzian product, we see that xz is orthogonal to xy if and only if it is orthogonal to $xy'$. This means that a spacelike xz is orthogonal to a lightlike xy just in case there is a decomposition of xy such that xz is orthogonal to both components in the senses already treated:    

\item $Case_3 (x,y,z)$ $\leftrightarrow_{df}$ $L(x,y) \land [Coll(x,y,z) \lor (\exists w \exists y'(T(w,x) \land S(w,y) \land Case_1(x,w,z) \land Parallelogram(x,w,y,y') \land Case_2(x,z,y'))]$

\begin{figure}[H]
    \centering
    \begin{tikzpicture}
    \tkzDefPoint(0,0){x}
    \tkzDefPoint(2,0){z}
    \tkzDefPoint(2,2.1){z'}
    \tkzDefPoint(2,-1.14){z''}
    \tkzDefPoint(1.25,1.45){y}
    \tkzDefPoint(0,2.1){w}
    \tkzDefPoint(1.25,-0.7){y'}
    \tkzDrawSegment[color=blue, thin](x,z)
    \tkzDrawSegment[color=red, thin](x,w)
    \tkzDrawSegment[thick, color=yellow](x,y)
    \tkzDrawSegment[thin, color=blue](x,y')
    \tkzDrawSegment[dashed](w,y)
    \tkzDrawSegment[dashed](y,y')
    \tkzDrawLine[dashed](w,z')
    \tkzDrawLine[dashed](y',z'')
    \tkzLabelPoint[left](x){$x$}
    \tkzLabelPoint[right](y){$y$}
    \tkzLabelPoint[below](z){$z$}
    \tkzLabelPoint[above](w){$w$}
    \tkzLabelPoint[left](y'){$y'$}
    \tkzDrawPoints(x, y, z, w, y')
    \end{tikzpicture} \newline
    \caption{Case 3 of orthogonality} 
     \end{figure}
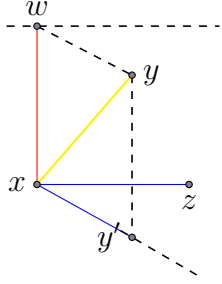 

We can now define orthogonality by a disjunction: 

\item $Orth'(x,y,z)$ $\leftrightarrow_{df}$ $Case_1(x,y,z) \lor Case_2(x,y,z) \lor Case_3(x,y,z)$
\item $Orth (x,y,z)$ $\leftrightarrow_{df}$ $Orth' (x,y,z)  \lor x=y \lor x=z \newline$

and give another definition of parallelism in terms of orthogonality:\footnote{It is useful to stipulate that a point $x$ - that is, a degenerate segment - is respectively orthogonal to lines through $x$ and parallel to lines that do do not pass through $x$. } 

\item $Par(x,y,z,w)$  $\leftrightarrow_{df}$ [$Par_W(x,y,z,w) \land \exists z' (Coll(z, w, z') \land Orth(x,y,z') \land Orth(z', x, z))] \lor x=y \lor z=w $ 

\subsection{Linear Algebra}

We are ready to define the apparatus of linear algebra. When is a segment, or vector, generated from other vectors? To generate a vector $\vec{ox}$ (that stems from a given origin $o$) from other vectors $\vec{o}y$, $\vec{oz}$ and $\vec{ow}$, means that we can reach the `top' $x$ from the `tail' $o$ by travelling along directions that are parallel to the vectors $\vec{oy}$, $\vec{oz}$, $\vec{ow}$.   

\item $Gen_{3D}(o,x,y,z,v)$  $\leftrightarrow_{df}$ $\exists x' \exists y' (Coll(o,x,x') \land Par (x',y',o,y) \land Par (o,z,y',v))$

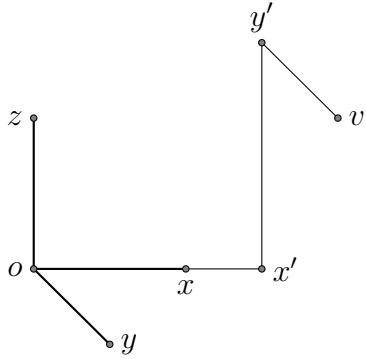
\begin{figure}[H]
    \centering
    \begin{tikzpicture}
    \tkzDefPoint(0,0){o}
    \tkzDefPoint(2,0){x}
    \tkzDefPoint(3,0){x'}
    \tkzDefPoint(3,3){y'}
    \tkzDefPoint(0,2){z}
    \tkzDefPoint(1,-1){y}
    \tkzDefPoint(4,2){v}
    
    \tkzDrawSegment[thick](o,x)
    \tkzDrawSegment[thin](x,x')
    \tkzDrawSegment[thin](x',y')
    \tkzDrawSegment[thick](o,y)
    \tkzDrawSegment[thick](o,z)
    \tkzDrawSegment[thin](y',v)
    \tkzDrawPoints(o,x',x, y, y', z,v)
    \tkzLabelPoint[left](o){$o$}
    \tkzLabelPoint[below](x){$x$}
    \tkzLabelPoint[right](y){$y$}
    \tkzLabelPoint[right](x'){$x'$}
    \tkzLabelPoint[above](y'){$y'$}
    \tkzLabelPoint[left](z){$z$}
    \tkzLabelPoint[right](v){$v$}

    \end{tikzpicture}
    \caption{The point v is a linear combination of $\vec{ox}$, $\vec{oy}$, $\vec{oz}$ in three dimensions.}
\end{figure}

Notation for the generation of more than one vector is easily introduced.

\item $Gen(o,x,y,z,a,b,c,d)$  $\leftrightarrow_{df}$ $Gen_{3D}(o,x,y,z,a) \land Gen_{3D}(o,x,y,z,b) \land  Gen_{3D}(o,x,y,z,c) \land Gen_{3D}(o,x,y,z,d) \newline$  

It is useful to take into account the intermediate steps that are made when moving from $o$ to a given $w$ along one of the specified directions. A specific trajectory may be called a \textit{development} of $w$ from $\vec{ox}$, $\vec{oy}$, $\vec{ox}$ and $\vec{ot}$. There are of course multiple ways to reach w from $o$, depending with which direction one starts from, and also on the different ways of proceeding. We will later impose an axiom that makes parallel trajectories in two separate developments congruent to each other. To express it, we need to refer to these intermediate steps. We read the next predicates as `w can be reached from $x$, $y$, $z$ and $t$ via $x'$, $y'$, $z'$'. 

This is the standard fashion of reaching a point: 

\item $Reached_4(v_0, v_1, v_2, v_3, v_4, x', y', z', w)$ $\leftrightarrow_{df}$ $Coll(v_0, v_1, x')$ $\land$ $Par(v_0, v_2, x', y')$ $\land$ $Par (v_0, v_3, y', z')$ $\land$ $Par(v_0, v_4, w, z')$ $\land$ $Orth(x',v_0, y')$ $\land$ $Orth(y',x',z')$ $\land$ $Orth(z',y',w)$ \newline

\item $Reached(o,x,t,v,w)$ $\leftrightarrow_{df}$ $Coll(o, x,v) \land Orth(v,o,w) \land Par (o,t,v,w) \newline $

An arbitrary development is defined by permuting the order of directions:

\item $Develop(v_0, v_1, v_2, v_3, v_4, x', y', z', v)$ $\leftrightarrow_{df}$ $\bigvee Reached_4(v_0, v_{\sigma(1)}, v_{\sigma(2)}, v_{\sigma(3)}, v_{\sigma(4)}, x', y', z', v) \newline $
[where $\sigma$ ranges over permutations of the set of indices $\{1,2,3,4\}]$. \\

Generation in four dimension is defined in terms of development: 

\item $Gen_{4D}(o,x,y,z,t,v)$  $\leftrightarrow_{df}$ $\exists x' \exists y' \exists z' Develop(o,x,y,z,t,x',y',z',v)$ \\

A basis is for us a quintuple of points. It consists of an origin o and four points that determine four mutually orthogonal directions in spacetime. 

\item $Basis (o, x, y, z, t)$ $\leftrightarrow_{df}$ $T(o,t) \land S(o,x) \land S(o,y) \land S(o,z) \land Orth(o,t,x) \land Orth(o,t,y) \land Orth(o,t,z) \land Orth(o,x,y) \land Orth(o,x,z) \land Orth(o,y,z) $

\subsection{Segments of opposite length}

The next definition plays a crucial role in the economy of our system. It is a key ingredient of all our main axioms. It defines the relation that obtains between a timelike and spacelike vector when the square of the interval, or the `length' of these segments, differ only in term of `sign': when they are of equal absolute value. To define the notion we have to transport one of the segments to a congruent one orthogonal to the other. At this point we can call them of opposite length if their sum is a null vector: their contributions to the hypotenuse cancel out.     

\item $Opp(x,y,z,w)$ $\leftrightarrow_{df}$ $\exists w' \exists v (Orth(x,y,w') \land \equiv (x, w', z, w) \land Par(w', v, x,y ) \land \equiv (w', v, x, y) \land L(x, v))$

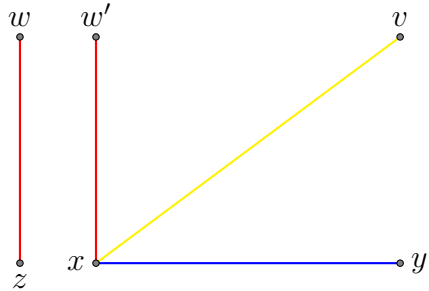
\begin{figure}[H]
    \centering
    \begin{tikzpicture}
    \tkzDefPoint(0,0){x}
    \tkzDefPoint(4,0){w'}
    \tkzDefPoint(0,3){y}
    \tkzDefPoint(4,3){v}
    \tkzDefPoint(-1,0){z}
    \tkzDefPoint(-1,3){w}
    \tkzDrawSegment[color=red,thick](x,y)
    \tkzDrawSegment[color=blue, thick](x,w')
    \tkzDrawSegment[color=yellow,thick](x,v)
    \tkzDrawSegment[color=red,thick](z,w)
    \tkzDrawPoints(x,w',y, z, w, v)
    \tkzLabelPoint[left](x){$x$}
    \tkzLabelPoint[right](w'){$y$}
    \tkzLabelPoint[above](y){$w'$}
    \tkzLabelPoint[above](v){$v$}
    \tkzLabelPoint[below](z){$z$}
    \tkzLabelPoint[above](w){$w$}
    \end{tikzpicture}
    \caption{The points $x,y,z$ and $w$ are opposite.}
\end{figure}

\end{enumerate}

\subsection{Streckenrechnung}

Following the ideas of Hilbert [1899], one can define algebraic operations on the points of a line. Given a line in Euclidean space, fix two arbitrary points to play the role of the null element 0 and the neutral element 1. We can use the method to define addition and multiplication of line segments that are collinear to 0 and 1 in such a way that the line satisfies the axioms of a real closed field. Of course lines in different models will lead to different fields or rings. Each of them, however, simulates well enough the familiar field of real numbers (an informal presentation of the construction is in [Hartshorne 2000, ch.4]). Segment arithmetic is of crucial importance for Field's Program. It allows us to translate numerical statements about real numbers into purely geometrical ones. It also plays an important role in our attempt to construct coordinate systems within the geometrical theory. Tarski and Sczerba [1979] use it construct and classify the coordinatizations of various spaces modulo the `passive transformations' between them; they fix an origin in an affine space $A$ with certain properties, find a line $\ell$ living inside it that satisfies the field axioms, they define vector operations among the points and make the structure $<A,F>$ into a vector field $V$ over a field $F'$.  

The arithmetic of segments is now needed to compute within the theory the `length' of the hypotenuse of a right triangle relative to the `length' of the sides. It will allow us to postulate the existence of a third segment whose `length' is the sum or the product of the length of any two given segments. It will also allow us to postulate a segment on any given line such that its `length' is the square or the square root of the length of any given segment. The apparatus is imported as a block from [Schwabhäuser, Szmielew and Tarski 1983]. But their definitions are meant in the context of Euclidean geometry. We will therefore need to restrict the variables so that all the segments involved are spacelike. Some of their initial definitions can be restated more simply for our purposes in terms of congruence:                 

\begin{enumerate}[start=20,label={(\bfseries D\arabic*):}]

\item $Add(x,y,z, w, v, l)$ $\leftrightarrow_{df}$ $ S(x,y) \land S(z,w) \land \exists v_1 \exists v_2 \exists v_3 (Bet(v_1,v_2,v_3) \land \newline \equiv(x,y,v_1,v_2) \land \equiv(z,w,v_2,v_3) \land \equiv(v,l,v_1,v_3))$

\item $Dupl(x,y,z,w)$ $\leftrightarrow_{df}$ $Add(x, y, x, y, v, l)$ 

\item $Square(x,y,z,w)$ $\leftrightarrow_{df}$ $\exists v_1 \exists v_2 \exists v_3 (\lnot Bet(v_1, v_2, v_3) \land Orth(v_1, v_2, v_3) \land \newline \equiv(x,y,v_1,v_2) \land \equiv(x, y,v_1,v_3) \land Dupl(z,w,v_2, v_3))$ 

\item $Sqrt(x,y,z,w)$ $\leftrightarrow_{df}$ $Square(z,w,x,y) $ 
\end{enumerate}

The next definition formalizes of that of Hartshorne [2000, p. 170]\footnote{An alternative approch, purely in terms of betweenness, can be found in the treatise of [Schwabhäuser, Szmielew and Tarski 1983, p. 160]}: 

\begin{enumerate}[start=24,label={(\bfseries D\arabic*):}]
\item $Product(o, e, x, y, z, w, l,v)$ $\leftrightarrow_{df}$ $\exists y' \exists w' \exists v'$ ($Orth(e,o,y')$ $\land$ $Orth(w',o, v')$ $\land$ $\equiv (o,w',z,w)$ $\land$ $\equiv(e,y',x,y)$ $\land$  $\equiv(l,v,w',v')$
 
The next definition has no particular intrinsic significance and merely abbreviates the result of the calculation in axiom \textbf{(SUM1)}. It is included only to avoid cluttering the axiom and to improve readability. 

\item $Remterm (o,e, x,y,w, v_5)$ $\leftrightarrow_{df}$  $\exists v_1 \exists v_2 \exists v_3 \exists v_4 ((Square(w,y,x,v_1) \land \\ Prod(o,e,x,w, w,y, x, v_2) \land Dup(x, v_2, x, v_3) \land Add(x, v_1, x, v_3, x, v_4) \land Sqrt( x, v_4, x, v_5)) $   

To extend the calculus of segments to timelike vectors the simplest approach is to move back and forth using opposites. For instance, the sum and product of two points on a timelike line is the opposite of the sum and product of two opposite segments on a spacelike line.

\item $Add'(x,y,z, w, v, l)$ $\leftrightarrow_{df}$ $\exists x'\exists y' \exists z'\exists w' \exists v' \exists l'$ $(Opp(x',y',x,y)$ $\land$ $Opp(z',w',z,w)$ $\land$ $Opp(v',l', v, w)$ $\land$ $Add(x',y',z',w',v',l'))$ 

\item $Prod'(o,e, x, y, z,w, v, l)$ $\leftrightarrow_{df}$  $\exists x' \exists y' \exists z'\exists w' \exists v' \exists l'(Opp(x',y',x,y) \land Opp(z',w',z,w) \land Opp(v',l',v,l) \land Prod(o,e, x,y,z,w,v,l))$
 
\item $Remterm' (o,e, x,y,w, v_5)$ $\leftrightarrow_{df}$ $\exists x'\exists y' \exists w' \exists v_5'(Opp(x',y',x,y)$ $\land$ $Opp(x',v_5',x,v_5) \land Remterm(o,e, x,y,w,v_5))$
\end{enumerate}

To define a product operation on a timelike and a spacelike segment we proceed in a similar fashion using the notion of opposites.  

\begin{enumerate}[start=29,label={(\bfseries D\arabic*):}]
\item $Prod''(o,e, x, y, z, w, v, l)$ $\leftrightarrow_{df}$ $T(x,y)$ $\land$ $S(z,w)$ $\land$ $\exists x'\exists y' \exists v' \exists l' (Opp(x',y',x,y)$ $\land$ $Prod(o,e, x',y',z, w,v',l') \land Opp(v,l,v',l'))$
 
\end{enumerate}

\section{An overview of the axioms}

The system can be divided into six groups of axioms. The first axioms govern the notion of betweenness on a line. This part consists of the axioms for a four dimensional affine space, as formalised in [Tarski and Sczerba, 1979] or in [Schwabhäuser, Szmielew and Tarski, 1983, p. 415-416]. We omit figures for them. We call the  second part  dimensionality axioms: they assert the existence of a basis for every choice of an origin; that every point can be reached or `generated' through alternative paths; and finally, that o, x, y, and z form a basis for a Euclidean subspace. Three segments in this basis are spacelike; a fourth is timelike. The set of points that is spanned by the orthogonal spacelike ones must always form a three dimensional Euclidean space. This requirement is ensured by postulating that these points obey the axioms of Euclidean geometry in [Tarski and Givant, 1999].\footnote{An alternative approach would be an axiom that says that the set of points which have a fixed positive distance to an origin satisfy the axioms of hyperbolic geometry, i.e. the axioms of Euclidean geometry where Euclid axiom is replaced by its negation (for details of the construction see the last chapter of [Malament, unpublished]).} A third group of axioms constrains the length of arbitrary segments in terms of the lengths of the components. The fourth group consists of construction axioms: they postulate the existence of segments on a given line that match any other line - either in the sense that they are congruent or opposites. We then have a fifth group of axioms concerning formal properties of the relations employed. We conclude with the axiom schema of continuity and an axiom for density.  

\begin{definition}
$\ulcorner Tarski\urcorner$ abbreviates the conjunction of Tarski's axioms for three dimensional Euclidean geometry in [Tarski and Givant, 1999] with the exception of (a) the axiom schema of continuity, (b) the axioms of affine geometry, (c) the Five-Segment Axiom [Ax. 5] and (d) [Ax.23] and [Ax.24]  .  
\end{definition} 

\begin{definition} 
We define the restriction $\ulcorner \phi^{E(o,x,y,z)} \urcorner$ of $\phi$ to the space generated by o, x, y and z by induction on the complexity/construction of formulae:  \newline

(1) if $\phi$ is atomic, then $\phi^{E(o,x,y,z)}$ is $\phi$. \newline

(2) $\ulcorner\neg \phi\urcorner^{E(o,x,y,z)}$ is $\neg \ulcorner\phi^{E(o,x,y,z)}\urcorner$ \newline

(3) $\ulcorner \phi \land \psi\urcorner ^{E(o,x,y,z)}$ is $\ulcorner \phi^{E(o,x,y,z)}\urcorner$ $\land$ $\ulcorner \psi^{E(o,x,y,z)}\urcorner$ \newline

(4) $\ulcorner \exists v \,  \phi\urcorner^{E(o,x,y,z)}$ is $\exists v \, (Gen_{3D}(o,x,y,z,v) \land \ulcorner\phi^{E(o,x,y,z)}\urcorner)$ \newline

\end{definition} 

We write $\vdash \phi$ to mean that the universal closure of $\phi$ is an axiom. 

\section{The axiomatic system}

\subsection*{Axioms for affine space }
\begin{enumerate}[start=0,label={(\bfseries AFF\arabic*):}]

\item $\vdash Bet(x,y,x) \rightarrow x = y$
\item $\vdash Bet(x,y,z) \land Bet(y,z,u) \land y \neq z \rightarrow Bet(x,y,u)$
\item $\vdash Bet(x,y,z) \land Bet(x,y,u) \land x \neq y \rightarrow Bet(y,z,u) \lor Bet(y,u,z)$
\item $\vdash \exists x(Bet(x,y,z) \land x \neq y)$
\item $\vdash Bet(x,t,u) \land Bet (y,u,z) \rightarrow (Bet(x,v,y) \land Bet(z,t,v))$
\item $\vdash Bet(x,u,t) \land Bet(y,u,z) \land x \neq u \rightarrow \exists  v \exists w (Bet(x,y,v) \land Bet(x,z,w) \land Bet(v,t,w))$
\end{enumerate}

\subsection*{Dimension axioms}

The following axioms ensure that the entire space is a four dimensional vector space. The first axiom says that, for any choice of a point $o$ as the origin, there are other four points such that they form a basis of the space. 
\begin{enumerate}[start=0,label={(\bfseries A\arabic*):}]


\item $\vdash \forall o \, \exists x \, \exists y \, \exists z \, \exists t \; Basis (o,x,y,z,t)$ 

Any orthogonal segments $\vec{ox}$ and $\vec{ot}$ can be supplemented to a basis: 

\item  $\vdash T(o,t) \land Orth(o,x,t) \rightarrow \exists y \exists z \, Basis(o,x,y,z,t) $ 

The next axiom asserts that every basis generates every point.

\item $\vdash Basis(o,x,y,z,t) \rightarrow Gen_{4D} (o,x,y,z,t,v)$ \newline 


The space spanned by the spatial subbasis obeys the axioms of Tarski. It is a three dimensional Euclidean space. The schema of continuity and the axioms for betweenness are assumed later for all lines.

\item $\vdash Basis (o, x, y, z, t) \rightarrow Tarski^{E(o,x,y,z)}$ \newline 

Typical axioms of Eucliden geometry postulate the congruence of certain triangles under hypotheses about the congruence of certain angles and certain sides. In the system described in [Tarski and Givant, 1999] these criteria of congruence are derived from a single Five-Segment Axiom [Ax. 5]. It is convenient to adapt it to our system by assuming that the two triangles to be compared can come from different spacelike hyperplanes. The abbreviation below is self-explanatory.  

\item $\vdash (Basis (o, x, y, z, t) \, \land \, Basis (o', x', y', z', t') \, \land \, Gen(o, x, y, z, t,a,b,c,d) \, \land \, Gen(o', x', y', z', t',a',b',c',d')) \, \rightarrow \,$ Five-Segment Axiom$(a,b,c,d,a',b',c',d') $ \newline

This axiom asserts that alternative paths to the same point consist of congruent segments. This implies that the lengths of the components of a segment depend only on the basis and not on the development.    

\item $\vdash Reached(o,x,y,v,w) \rightarrow \exists v' (Reached (o,y,x,v',w) \, \land \equiv(o,v',v,w) \land \newline \equiv (o,v,v',w)) $

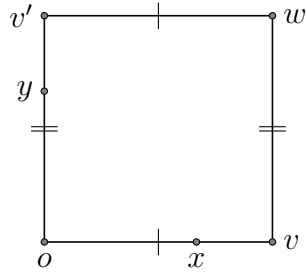
\begin{figure}[H]
    \centering
    \begin{tikzpicture}
    \tkzDefPoint(0,0){o}
    \tkzDefPoint(2,0){x}
    \tkzDefPoint(0,2){y}
    \tkzDefPoint(3,0){v}
    \tkzDefPoint(0,3){v'}
    \tkzDefPoint(3,3){w}
    \tkzDrawSegment(o,v')
     \tkzDrawSegment(w,v')
      \tkzDrawSegment(o,v)
      \tkzDrawSegment(v,w)
    \tkzDrawPoints(o,x,y,v,w,v')
    \tkzMarkSegments[mark= ||,size=5pt](o,v' v,w)
    \tkzMarkSegments[mark= |,size=5pt](o,v v',w)
    \tkzLabelPoint[below](o){$o$}
    \tkzLabelPoint[below](x){$x$}
    \tkzLabelPoint[left](y){$y$}
    \tkzLabelPoint[right](v){$v$}
    \tkzLabelPoint[left](v'){$v'$}
    \tkzLabelPoint[right](w){$w$}
    \end{tikzpicture}
    \caption{Axiom (A5)}
\end{figure}

Linear algebra requires that the sum of two vectors be unique. The next axiom imposes that a vector have a unique decomposition.\footnote{The analogy between $Reached(o,x,t,v,w)$ and the operation of vector sum is imperfect because it does not distinguish bewteen $\vec{a} + \vec{b}$ and $\vec{a} - \vec{b}$.}   

\item $\vdash (Reached(o,x,t,v,w) \land Reached(o,x,t,v',w)$ $\land$ \newline  \centering $ (Coll(o,x,w) \lor \lnot Intersect(v, v', o, x)) \rightarrow v=v'$

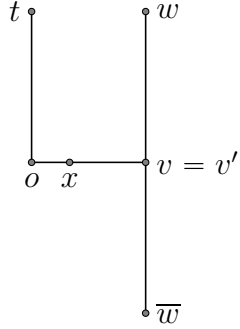
\begin{figure}[H]
\centering
\begin{tikzpicture}

    \tkzDefPoint(0,0){o}
    \tkzDefPoint(0.5,0){x}
   \tkzDefPoint(1.5,0){v}
   \tkzDefPoint(0,2){t}
   \tkzDefPoint(1.5,2){w}
   \tkzDefPoint(1.5,-2){w'}
    \tkzDrawSegment(o,t)
    \tkzDrawSegment(o,v)
    \tkzDrawSegment(v,w)
    \tkzDrawSegment(v,w')
    \tkzDrawPoints(o,x,v,t,w,w')
    \tkzLabelPoint[below](o){$o$}
    \tkzLabelPoint[below](x){$x$}
    \tkzLabelPoint[left](t){$t$}
    \tkzLabelPoint[right](v){$v=v'$}
    \tkzLabelPoint[right](w){$w$}
    \tkzLabelPoint[right](w'){$\overline{w}$}
    \end{tikzpicture}
\caption{Axiom (A6)}
\end{figure}

To extend (A5) and (A6) to the uniquess of sums of more than two vectors, that is of developments in three or four steps, we need (A7): 

\item $\vdash (Orth(o,x,z) \land Orth(o,y,z) \land \exists v Reached (o,x,y,v,w)) \rightarrow Orth(o,z,w)$

\begin{figure}[H]
\centering
\begin{tikzpicture}
\tkzDefPoint(0,0){o}
\tkzDefPoint(2,0){x}
\tkzDefPoint(1,0){v}
\tkzDefPoint(-0.7,-0.7){y}
    \tkzDefPoint(0,2){z}
    \tkzDefPoint(0.4,-0.7){w}
    \tkzDrawSegment[thin](o,x)
    \tkzDrawSegment[thin](o,z)
    \tkzDrawSegment[thin](o,y)
    \tkzDrawSegment[thin](o,w)
    \tkzDrawSegment[dashed](v,w)
    \tkzLabelPoint[left](o){$o$}
    \tkzLabelPoint[left](y){$y$}
    \tkzLabelPoint[right](x){$x$}
    \tkzLabelPoint[left](z){$z$}
    \tkzLabelPoint[right](w){$w$}
    \tkzLabelPoint[above](v){$v$}
    \tkzDrawPoints(o,x,y,z,v,w)
    \end{tikzpicture} 
    \caption{Axiom (A7)} 
    \end{figure}
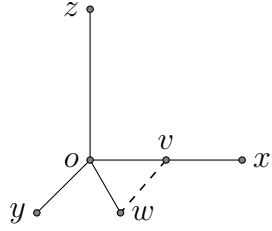 
\end{enumerate}

Minkowski spacetime cannot be accurately described unless we relate the ordering on a generic line with our foliations into a timelike line and a spacelike hyperplane. Betwenness on a lines corresponds to another basic notion of linear algebra: scalar multiplication. Two points are on the same line if and only if the components of one are scaled with respect to the components of the other by the same factor $\lambda$. Axiom (A8) reads:

\begin{enumerate}[start=8,label={(\bfseries A\arabic*):}]
\item $\vdash Reached(o,x,t,x,r)$ $\land$ $Reached(o,t, x, t, r)$ $\land$ $Reached(o,x,t,x',r')$ $\land$ $Reached(o,t,x,t',r')$ $\land$ $T(o,t)$ $\land$ $o\neq{e}$ $\rightarrow [Bet(o,r,r')$ $\leftrightarrow (Bet(o,x,x')$ $\land$ $Bet(o,t,t')$  $\land$ $\exists z$ $\exists z' (S(z,z')$ $\land$ $Product(o,e, x,z,z',o,x')$ $\land Product''(o,e, t,z,z',o, t')))$]

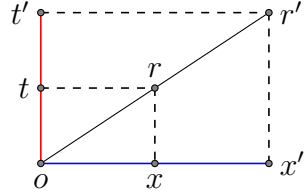
\begin{figure}[H]
    \centering
    \begin{tikzpicture}
    \tkzDefPoint(0,0){o}
    \tkzDefPoint(3,0){x'}
    \tkzDefPoint(1.5,0){x}
    \tkzDefPoint(0,2){t'}
    \tkzDefPoint(0,1){t}
    \tkzDefPoint(3,2){r'}
    \tkzDefPoint(1.5,1){r}
    \tkzDrawSegment[dashed](r',x')
    \tkzDrawSegment[dashed](r,x)
    \tkzDrawSegment[dashed](t,r)
    \tkzDrawSegment[dashed](t',r')
    \tkzDrawSegment[color=blue](o,x')
    \tkzDrawSegment[thin](o,r')
    \tkzDrawSegment[color=red](o,t')
    \tkzDrawPoints(o,x,x',t,t',r,r')
    \tkzLabelPoint[below](o){$o$}
    \tkzLabelPoint[below](x){$x$}
    \tkzLabelPoint[right](x'){$x'$}
    \tkzLabelPoint[left](t'){$t'$}
    \tkzLabelPoint[left](t){$t$}
    \tkzLabelPoint[above](r){$r$}
    \tkzLabelPoint[right](r'){$r'$}
    \end{tikzpicture}
    \caption{Axiom (A8)}
\end{figure}

The last axiom of the present section postulates that to two orthogonal vectors can indeed always be associated a sum. 

\item $\vdash Orth(o,x,t) \rightarrow \exists w$ $Reached(o,x,t,x,w) $ 

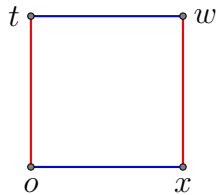
\begin{figure}[H]
    \centering
    \begin{tikzpicture}
    \tkzDefPoint(0,0){o}
    \tkzDefPoint(2,0){x}
    \tkzDefPoint(0,2){t}
    \tkzDefPoint(2,2){w}
    \tkzDrawSegment[color=red,thick](o,t)
    \tkzDrawSegment[color=blue, thick](o,x)
    \tkzDrawSegment[color=blue, thick](t,w)
    \tkzDrawSegment[color=red, thick](x,w)
    \tkzDrawPoints(o,x,t,w)
    \tkzLabelPoint[below](o){$o$}
    \tkzLabelPoint[below](x){$x$}
    \tkzLabelPoint[left](t){$t$}
    \tkzLabelPoint[right](w){$w$}
    \end{tikzpicture}
    \caption{Axiom (A9)}
\end{figure}
\end{enumerate}

\begin{remark}
\textit{The relative ugliness of the axioms in this section can be remedied somewhat by introducing the notation of linear algebra. This may improve their readability as well. For example, axioms (A7) and (A8) assert the existence and uniqueness of the sum $\vec{ox} + \vec{oy}$ of the  orthogonal vectors $\vec{ox}$ and $\vec{oy}$. Axiom (A4) asserts the familiar axiom of a vector space: $\vec{ox} + \vec{oy} =  \vec{oy} + \vec{ox}$ (commutativity of addition). The axiom (A7) is a basic consequence of the distributivity of the Lorentzian product: the statement$: \ulcorner (\vec{oz} \bullet \vec{ox} = 0)$ $\land$ $(\vec{oz} \bullet \vec{oy} = 0) \rightarrow (\vec{oz} \bullet (\vec{ox} + \vec{oy}) = 0) \urcorner$. (A8) concerns scalar multiplication. }     
\end{remark}

\subsection*{Summation axioms}

We have postulated axioms that assert the existence of bases and permit a decomposition of arbitrary segments into orthogonal components. We now need axioms for the metrical structure. We want a segment extending a spacelike segment to be spacelike and shorter and a segment extending a timelike segment to be timelike and longer. We want, moreover, to be able to compute the length of a segment from that of its components. \newline

Two segments that are both opposite to a third are congruent.

\begin{enumerate}[start=0,label={(\bfseries SUM\arabic*):}]

\item $ \vdash Opp(x,t,z,w) \land Opp (x,t,z',w') \rightarrow \; \; \equiv(z,w,z',w')$

\begin{figure}[H]
    \centering
    \begin{tikzpicture}
    \tkzDefPoint(0,0){x}
    \tkzDefPoint(0,3){y}
    \tkzDefPoint(1,0){z}
    \tkzDefPoint(3,0){w}
    \tkzDefPoint(-1,0){z'}
    \tkzDefPoint(-3,0){w'}
    \tkzDrawSegment[color=red](x,y)
    \tkzDrawSegment[color=blue](z,w)
    \tkzDrawSegment[color=blue](z',w')
    \tkzDrawPoints(x,y,z,w,z',w')
    \tkzLabelPoint[below](x){$x$}
    \tkzLabelPoint[above](y){$t$}
    \tkzLabelPoint[below](w){$w$}
    \tkzLabelPoint[below](z){$z$}
    \tkzLabelPoint[below](z'){$z'$}
    \tkzLabelPoint[below](w'){$w'$}
    \end{tikzpicture}
    \caption{Axiom (SUM0)}
\end{figure}
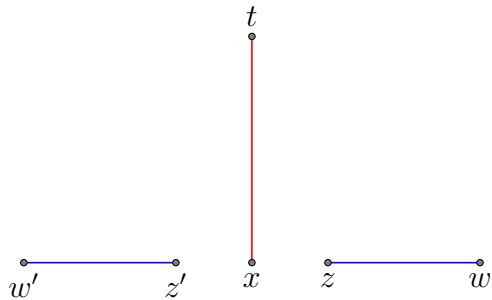

The following two axioms employ the arithmetic of segments that we have defined in section \textbf{3.6} to calculate the length of a segment from its decomposition onto a given basis. Every vector $\vec{xv}$ can be construed as the sum of a spacelike component and a timelike component. We treat separately the case in which (1) the spacelike segment is longer in absolute value \textbf{(SUM1)} and that in which (2)  the timelike segment is longer in absolute value \textbf{(SUM2)}. The basis $\vec{xy}$ is longer in absolute value than the timelike component $\vec{xz}$ if and only if there is a point between $x$ and $y$ that is of opposite length to $\vec{xz}$. If the spacelike side is longer in absolute value, then the hypothenuse $\vec{yz}$ of the right triangle xyz is spacelike. If the timelike side is longer in absolute value, then the hypothenuse $\vec{yz}$ is timelike. To quantify more precisely the length of the hypothenuse $\vec{yz}$ in all cases we need a calculation. Suppose the spacelike segment $\overline{xy}$ is orthogonal to a timelike segment $\overline{xz}$ (see figure 11). Call the length of $\overline{xz}$ $A$ and the length of $\overline{xy}$ $B$. Suppose a segment $\overline{xw}$ is opposite to $\overline{xz}$, whose length we call $D$. Call $E$ the length of $\overline{wy}$. Then, the length $C$ of the resultant vector $\overline{xv}$ is conguent to the hypotenuse $\overline{xz}$. By Pythagoras's theorem, the length of the hypothenuse is:\footnote{The  proportion between $\overline{xv}$ and $\overline{wy}$ that results from the calculation is exactly what is expressed by the predicate $Remterm(x,y,w,v_5)$ introduced without explanation in \textbf{(D28)}} \newline 

(*) $C^2 = A^2 + B^2 = A^2 + (D+E)^2 = \cancel{A^2} +\cancel{D^2} + E^2 + 2DE =  E^2 + 2DE$.\newline

\item $\vdash (S(o, v)$ $\land$ $S(o,x)$  $\land$  $T(o,t)$ $\land$ $Reached(o,x,t,x,v)$ $\land$ $o\neq{e}$ $\rightarrow \;  \exists w  \exists v_5 (Bet(o,w,x) \land Opp(o,w,o,t) \land Remterm(o,e, o,x,w,v_5) \land \equiv (o, v, o, v_5) )$

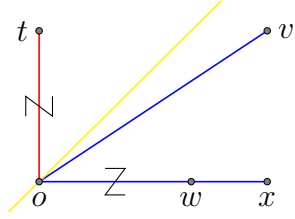
\begin{figure}[H]
    \centering
    \begin{tikzpicture}
    \tkzDefPoint(0,0){x}
    \tkzDefPoint(3,0){y}
    \tkzDefPoint(0,2){z}
    \tkzDefPoint(2,0){w}
    \tkzDefPoint(3,2){v}
    \tkzDefPoint(2,2){l}
    \tkzDrawSegment[color=blue](x,w)
    \tkzDrawSegment[color=blue](w,y)
    \tkzDrawSegment[color=red](x,z)
    \tkzDrawSegment[color=blue](x,v)
    \tkzDrawLine[color=yellow](x,l)
    \tkzDrawPoints(x,y,z,w,v)
    \tkzMarkSegments[mark= z,size=5pt](x,z x,w)
    \tkzLabelPoint[below](x){$o$}
    \tkzLabelPoint[below](y){$x$}
    \tkzLabelPoint[left](z){$t$}
    \tkzLabelPoint[right](v){$v$}
    \tkzLabelPoint[below](w){$w$}
    \end{tikzpicture}
    \caption{Axiom (SUM1)}
\end{figure}

A similar calculation can be made when $\vec{xv}$ is timelike.    

\item $\vdash (T(o,v) \land T(o,t) \land S(o,x) \land Reached(o,x,t,x,v)$ $\land$ $o\neq{e}$) $\rightarrow \;$  $\exists w \newline \exists v_5 (Bet(o,w,t)\land Opp(o,w,o,x) \land Remterm'(o,e,o,t,w,v_5) \land \equiv (o, v, o, v_5))$

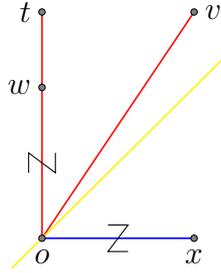
\begin{figure}[H]
    \centering
    \begin{tikzpicture}
    \tkzDefPoint(0,0){x}
    \tkzDefPoint(0,3){y}
    \tkzDefPoint(2,0){z}
    \tkzDefPoint(0,2){w}
    \tkzDefPoint(2,3){v}
    \tkzDefPoint(2,2){l}
    \tkzDrawSegment[color=red](x,w)
    \tkzDrawSegment[color=red](w,y)
    \tkzDrawSegment[color=blue](x,z)
    \tkzDrawSegment[color=red](x,v)
    \tkzDrawLine[color=yellow](x,l)
    \tkzDrawPoints(x,y,z,w,v)
    \tkzMarkSegments[mark= z,size=5pt](x,z x,w)
    \tkzLabelPoint[below](x){$o$}
    \tkzLabelPoint[left](y){$t$}
    \tkzLabelPoint[below](z){$x$}
    \tkzLabelPoint[right](v){$v$}
    \tkzLabelPoint[left](w){$w$}
    \end{tikzpicture}
    \caption{Axiom (SUM2)}
\end{figure}

The sum of segments of opposite length gives lightlike vectors:

\item $\vdash Opp(o,x,o,t) \land Orth(o,x,t) \land Par(o, t, x, v)) \rightarrow \;  L(o,v) $

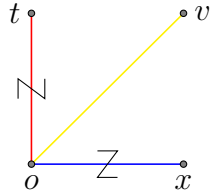
\begin{figure}[H]
    \centering
    \begin{tikzpicture}
    \tkzDefPoint(0,0){x}
    \tkzDefPoint(2,0){y}
    \tkzDefPoint(0,2){z}
    \tkzDefPoint(2,2){v}
    \tkzDrawSegment[color=blue](x,y)
    \tkzDrawSegment[color=red](x,z)
    \tkzDrawSegment[color=yellow](x,v)
    \tkzDrawPoints(x,y,z,v)
    \tkzMarkSegments[mark= z,size=5pt](x,z x,y)
    \tkzLabelPoint[below](x){$o$}
    \tkzLabelPoint[below](y){$x$}
    \tkzLabelPoint[left](z){$t$}
    \tkzLabelPoint[right](v){$v$}
    \end{tikzpicture}
    \caption{Axiom (SUM3)}
\end{figure}

The following two axioms assure us that continuing on a spacelike line we traverse progressively shorter segments, as we move towards infinity.    

\item $\vdash Bet(x,y,z) \land S(x,z) \rightarrow \; \;  <_{\equiv} (x, z, x, y)$

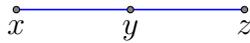
\begin{figure}[H]
    \centering
    \begin{tikzpicture}
    \tkzDefPoint(0,0){x}
    \tkzDefPoint(1.5,0){y}
    \tkzDefPoint(3,0){z}
    \tkzDrawSegment[color=blue](x,z)
    \tkzDrawPoints(x,y,z)
    \tkzLabelPoint[below](x){$x$}
    \tkzLabelPoint[below](y){$y$}
    \tkzLabelPoint[below](z){$z$}

    \end{tikzpicture}
    \caption{xz is shorter than xy (identical figure for (SUM5)}
\end{figure}

\item $\vdash Bet(x,y,z) \land S(x,y) \rightarrow \; \; <_{\equiv} (x, z, x, y)$ \\

We can obtain a similar result for timelike segments. Continuing on a timelike line, we traverse longer and longer segments. We can derive this result from principles relating opposites. Let us remind ourselves that orthogonal opposites cancel i.e., they give a lightlike segment when summed. We postulate (SUM6) that the opposite of a longer timelike segment must be shorter - more in the negative - and \textit{vice versa}.   

\item $\vdash <_{\equiv} (x,x',z,w) \, \land \, Opp(x,x',t,t') \, \land \, Opp(z,w,z',w') \rightarrow \; \; <_{\equiv} (z', w', t, t')$

\begin{figure}[H]
    \centering
    \begin{tikzpicture}
    \tkzDefPoint(0,0){x}
    \tkzDefPoint(4,0){y}
    \tkzDefPoint(-2,0){x'}
    \tkzDefPoint(-2,4){y'}
    \tkzDefPoint(0,2){z}
    \tkzDefPoint(2,2){w}
    \tkzDefPoint(-1,2){z'}
    \tkzDefPoint(-1,4){w'}
    \tkzDrawSegment[color=blue](x,y)
    \tkzDrawSegment[color=red](x',y')
    \tkzDrawSegment[color=blue](z,w)
    \tkzDrawSegment[color=red](z',w')
    \tkzDrawPoints(x,y,z,w,x',y',z',w')
    \tkzLabelPoint[below](x){$x$}
    \tkzLabelPoint[below](y){$x'$}
    \tkzLabelPoint[below](w){$w$}
    \tkzLabelPoint[below](z){$z$}
    \tkzLabelPoint[right](x'){$t$}
    \tkzLabelPoint[above](y'){$t'$}
    \tkzLabelPoint[below](z'){$z'$}
    \tkzLabelPoint[above](w'){$w'$}
    \end{tikzpicture}
    \caption{If $\overline{xy}$ is shorter than $\overline{zw}$, then the opposite of $\overline{zw}$ is shorter than the opposite of $\overline{xy}$.}
\end{figure}
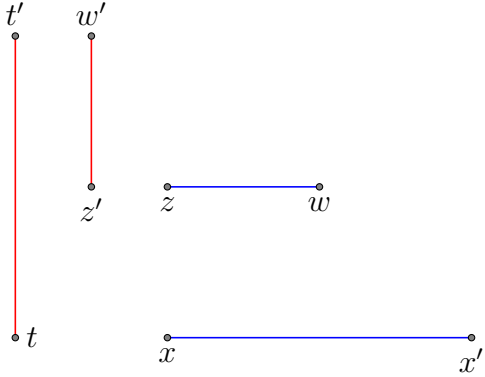

(SUM 6) tells us little about the arrangement of opposite segments on a spacelike and a timelike line. Using our primitive of betweenness, we need to postulate (SUM 7) and (SUM 8) that the ordering of the opposites on a segment mirrors that of the original segment:  

\item $\vdash (Opp(x,y,x',y') \land Bet (x,y,z)) \rightarrow \exists z' (Bet(x',y',z') \land Opp (x',z',x,z))$

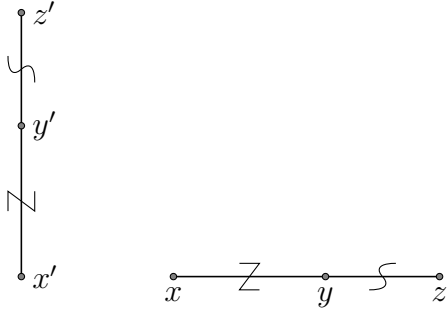
\begin{figure}[H]
    \centering
    \begin{tikzpicture}
    \tkzDefPoint(0,0){x}
    \tkzDefPoint(2,0){y}
    \tkzDefPoint(3.5,0){z}
    \tkzDefPoint(-2,0){x'}
    \tkzDefPoint(-2,3.5){z'}
    \tkzDefPoint(-2,2){y'}
    \tkzDrawSegment(x,z)
    \tkzDrawSegment(x',z')
    \tkzDrawPoints(x,y,z,x',y',z')
    \tkzLabelPoint[below](x){$x$}
    \tkzLabelPoint[below](y){$y$}
    \tkzLabelPoint[below](z){$z$}
    \tkzLabelPoint[right](x'){$x'$}
    \tkzLabelPoint[right](y'){$y'$}
    \tkzLabelPoint[right](z'){$z'$}
    \tkzMarkSegments[mark= z,size=5pt](x,y x',y')
    \tkzMarkSegments[mark= s,size=5pt](y,z y',z')
    \end{tikzpicture}
    \caption{SUM7 (same figure for SUM8)}
\end{figure}

\item $\vdash (Opp(x,z,x',z') \land Bet (x,y,z)) \rightarrow \exists y' (Bet(x',y',z') \land Opp (x',y',x,y))$ \newline

The following axiom tells us that summing a null or lightlike line does not change the length: it gives back a congruent segment. 

\item $\vdash L(x,y) \, \land \, Orth(x,y,z) \, \land \, Par(x, z, y, v) \, \land Par(x,y,z,v) \, \rightarrow \\  \equiv (x,z,x,v)$

\begin{figure}[H]
    \centering
    \begin{tikzpicture}
    \tkzDefPoint(0,0){x}
    \tkzDefPoint(2,2){y}
    \tkzDefPoint(3,0){z}
    \tkzDefPoint(5,2){v}
    \tkzDrawSegment[color=yellow, thick](x,y)
    \tkzDrawSegment[color=blue, thin](x,z)
    \tkzDrawSegment[color=blue, thin](x,v)
    \tkzDrawSegment[color=blue, thin](y,v)
    \tkzDrawSegment[color=yellow, thick](z,v)
    \tkzDrawPoints(x,y,z,v)
    \tkzLabelPoint[below](x){$x$}
    \tkzLabelPoint[below](y){$y$}
    \tkzLabelPoint[below](v){$v$}
    \tkzLabelPoint[below](z){$z$}
\tkzMarkSegments[mark=s||,size=6pt](x,z x,v)
    \end{tikzpicture}
    \caption{Axiom (SUM9)}
\end{figure}
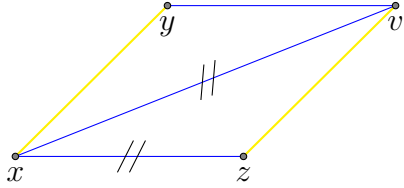

\end{enumerate}

\subsection*{Segment construction axioms}

We now want axioms that guarantee the existence of segments of a given length. They are adapted from the Euclidean context. Given two spacelike segments, we can find a third on the second line congruent to the first.

\begin{enumerate}[start=0,label={(\bfseries CONST\arabic*):}]

\item $\vdash S(x,y) \land S(z,w) \rightarrow \exists v (Coll(v,z,w) \land \equiv (z,v,x,y))$

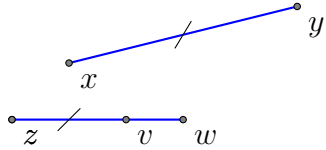
\begin{figure}[H]
\centering
\begin{tikzpicture}[scale=1.5]
\tkzInit[xmin=-1,xmax=3,ymin=-1,ymax=2]
\tkzClip
\tkzDefPoint(0,0){x}
\tkzDefPoint(2,0.5){y}
\tkzDrawSegment[color=blue,thick](x,y)
\tkzDrawPoints(x,y)
\tkzLabelPoints(x,y)
\tkzDefPoint(-0.5,-0.5){z}
\tkzDefPoint(1,-0.5){w}
\tkzDefPoint(0.5,-0.5){v}
\tkzDrawSegment[color=blue,thick](z,w)
\tkzDrawPoints(z,w)
\tkzLabelPoints(z,w)
\tkzDrawPoints(z,v)
\tkzLabelPoints(z,v)
\tkzMarkSegments[mark=s|,size=6pt](x,y z,v)
\end{tikzpicture}
\caption{Space-like segments construction.}
\end{figure}

The following two axioms guarantee that, given a spacelike and a timelike segment, we can find a third segment on the line determined by the second that is of opposite length to the first, and \textit{vice versa}.

\item $\vdash T(x,t) \land S(z,w) \rightarrow \exists v (Coll(v,z,w) \land Opp(z,v,x,t))$

\begin{figure}[H]
\centering
\begin{tikzpicture}
   \tkzDefPoint(-2,0){x}
   \tkzDefPoint(-2,2){t}
   \tkzDefPoint(0,0){z}
     \tkzDefPoint(3,0){v}
     \tkzDefPoint(2,0){w}
     \tkzDrawSegment[color=blue,thick](z,v)
         \tkzDrawSegment[color=red,thick](x,t)
         \tkzDrawPoints(x,t,z,w,v)
    \tkzLabelPoints(x,t,z,w,v)
    \tkzMarkSegments[mark=z,size=4pt](x,t z,w)
    \end{tikzpicture}
\caption{Construction of opposite segments 1}
\end{figure}
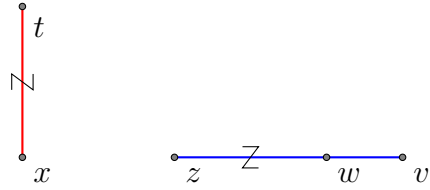

\item $\vdash T(x,t) \land S(z,w) \rightarrow \exists v (Coll(v,x,t) \land Opp(x,v,z,w))$

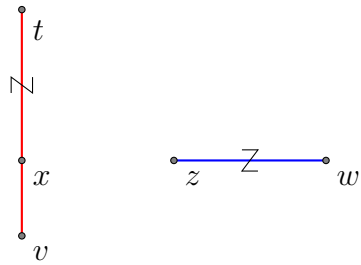
\begin{figure}[H]
\centering
\begin{tikzpicture}

    \tkzDefPoint(-2,0){x}
   \tkzDefPoint(-2,2){t}
   \tkzDefPoint(0,0){z}
     \tkzDefPoint(-2,-1){v}
     \tkzDefPoint(2,0){w}
     \tkzDrawSegment[color=blue,thick](z,w)
    \tkzDrawSegment[color=red,thick](v,t)
    \tkzDrawPoints(x,z,t,w,v)
    \tkzLabelPoints(x,t,z,w,v)
    \tkzMarkSegments[mark= z, size=4pt](x,t z,w)
    \end{tikzpicture}
\caption{Construction of opposite segments 2}
\end{figure}

The last axiom of this section postulates that a timelike line is infinite in both directions. Time has no beginning and no end. 

 \item $\vdash T(x,t) \rightarrow \exists w(Bet(w,x,t) \, \land \, \equiv (w,x,x,t)) $

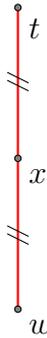
\begin{figure}[H]
\centering
\begin{tikzpicture}

    \tkzDefPoint(0, 2){t}
   \tkzDefPoint(0,-2){w}
   \tkzDefPoint(0,0){x}
     \tkzDrawSegment[color=red,thick](w,t)
    \tkzDrawPoints(x,t,w)
    \tkzLabelPoints(x,t,w)
    \tkzMarkSegments[mark=s||, size=4pt](x,t x,w)
    \end{tikzpicture}
\caption{Axiom (CONST3)}
\end{figure}

\end{enumerate}

\clearpage

\subsection*{Formal properties}

The relation of orthogonality is symmetric in the second and third term.

\begin{enumerate}[start=0,label={(\bfseries F\arabic*):}]

\item $\vdash Orth(x,y,z) \rightarrow Orth(x,z,y)$ 

The following axioms guarantee that the relation of congruence is an equivalence relation and that the relation of being shorter than induces a linear order on the equivalence classes of congruent segments.

\item $\vdash <_{\equiv} (x,y,z,w) \, \land \equiv (x,y,x',y') \rightarrow \; \; <_{\equiv} (x', y', z, w)$

\item $\vdash <_{\equiv} (x,y,z,w) \, \land \equiv (z,w,z',w') \rightarrow  \; \; <_{\equiv} (x,y,z',w') $

\item $\vdash \neg <_{\equiv} (x,y,x,y) \newline $

Degenerate segments are congruent:

\item $\vdash \neg <_{\equiv} (x,x,y,y) \newline $

These standard axioms describe the relative length between two segments that are the sum of respectively (a) congruent segments, (b) smaller segments or (c) some combination of the two. We can derive, for example, that (a) sums of congruent segments are congruent.        

\item $\vdash (Bet(x,y,z)$ $\land$ $Bet (x',y',z')$ $\land$ $\neg <_{\equiv} (x,y,x',y')$ $\land$ \newline $<_{\equiv} (y',z',y,z))$ $\rightarrow$ $<_{\equiv} (x',z',x,z)$

\item $\vdash (Bet(x,y,z)$ $\land$ $Bet (x',y',z')$ $\land$  $<_{\equiv} (x',z',x,z)$ $\land$ $\neg <_{\equiv} (x',y',x,y))$ $\rightarrow$ $<_{\equiv} (y',z',y,z)$

\end{enumerate}

\subsection*{Continuity and density}

The first axiom of continuity states that a line $\ell$ divides every plane in which it lies in two half-planes: the points whose connecting segments intersect $\ell$ and the points such that their connecting segment does not.  

\begin{enumerate}[start=0,label={(INT):}]

\item$ \vdash (Copl(x,y,z,w) \land Copl(t,x,z,w) \land \neg Intersect(x,y,z,w) \land Intersect (y,t,z,w)) \rightarrow Intersect (x,t,z,w)$

\begin{figure}[H]
\centering
\begin{tikzpicture}

    \tkzDefPoint(0,0){z}
   \tkzDefPoint(3,0){w}
   \tkzDefPoint(0.5,1){x}
   \tkzDefPoint(2.4,0.7){y}
   \tkzDefPoint(1.6,-1.5){t}
    \tkzDrawSegment(z,w)
    \tkzDrawSegment(x,y)
    \tkzDrawSegment(y,t)
    \tkzDrawSegment(t,x)
    \tkzDrawPoints(x,y,z,w,t)
    \tkzLabelPoint[below](z){$z$}
    \tkzLabelPoint[below](w){$w$}
    \tkzLabelPoint[above](x){$x$}
    \tkzLabelPoint[above](y){$y$}
    \tkzLabelPoint[below](t){$t$}
    \end{tikzpicture}
\caption{Axiom (INT)}
\end{figure}
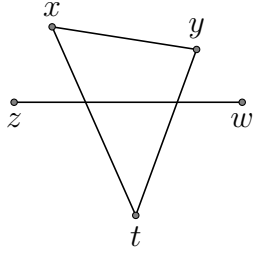

\end{enumerate}

These axioms are imported from [Tarski and Givant, 1999]. The continuity schema constrains the ordering of the points on a line to be as Dedekind complete as possible, without quantifying over sets of points. Density is the usual fact that between every two distinct points there is a third.

\begin{enumerate}[start=0,label={(ASC):}]

\item $\vdash \exists x \, \forall y \, \forall z (\phi \land \psi \rightarrow Bet(x,y,z)) \rightarrow \exists x' \, \forall y \, \forall z (\phi \land \psi \rightarrow Bet(x',y,z))$  

where $\phi$ and $\psi$ are formulae of $L$, the first of which does not contain any free occurrences of $x$, $x',$ $z$, the second of which does not contain any free occurrences of $x$, $x',$ $y$.

\end{enumerate}
\begin{enumerate}[start=0,label={(DENS):}]

\item $\vdash x \neq z \rightarrow \exists y( y \neq x \land y \neq z \land Bet (x,y,z))$  \\

\begin{figure}[H]
\centering
\begin{tikzpicture}

    \tkzDefPoint(0,0){x}
   \tkzDefPoint(1.5,0){y}
   \tkzDefPoint(3,0){z}
    \tkzDrawSegment(x,z)
    \tkzDrawPoints(x,y,z))
    \tkzLabelPoints(x,y,z)
    \end{tikzpicture}
\caption{Axiom (DENS)}
\end{figure}
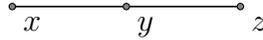

\end{enumerate}

This completes the presentation of our system for Minkowski spacetime, which will be denoted by $\mathcal{M}^1$. Its adequacy can now be briefly investigated. 

\subsection*{A Second-order Continuity Axiom}

Minkowski spacetime is the `intended' model of the system $\mathcal{M}^1$. It is the physical spacetime that is postulated by the theory of Special Relativity (SR). It can be singled out, up to isomorphism, as an uncountable model $\mathfrak{M}$ of $\mathcal{M}^1$ such that the lines $\ell$ in $\mathfrak{M}$ are true continua. In a line $\ell$ in $\mathfrak{M}$ every bounded set of points has a least upper bound. An equivalent method is to look at models of the following second order continuity axiom (ASC):

\begin{enumerate}[start=0,label={(CONT):}]

\item $\forall X$ $\forall Y$ $\exists x$ $\, \forall y$ $\, \forall z$  $(X(x) \land Y(y) \rightarrow Bet(x,y,z)) \rightarrow$ $\exists w$ $\forall y$ $\forall z$ $(X(x)$ $\land$ $Y(y) \rightarrow Bet(w, y, z))$ 
\end{enumerate}

The system we obtain from $\mathcal{M}^1$ by replacing all instances of (ASC) with (CONT) will be denoted by $\mathcal{M}^2$ and we will now consider its models. 

\section{Representation Theorems}

In a physics textbook, `Minkowski Spacetime' refers to a certain mathematical structure $\langle  \mathbb{R}^{4},\eta_{ab} \rangle$. It is assumed that calculations performed on $\langle  \mathbb{R}^{4},\eta_{ab} \rangle$ reflect certain physical state of affairs in the physical manifold of points in which physical objects are located, and on which physical fields assume values. Let us call Minkowski spacetime$_2$ a spacetime obeying the axioms of $\mathcal{M}^1$ and Minkowski spacetime$_1$ the following mathematical structure:      

\begin{definition}
The tensor $\eta_{ab}$ is the covariant tensor on $\mathbb{R}^{4}$ such that, for all $p, q$ $\epsilon$ $\mathbb{R}^{4}$, $\eta_{ab} (p,q) =$ $\sqrt{(t_p-t_q)^2 - (x_p- x_q)^2 - (y_p- y_q)^2 - (z_p - z_q)^2}$
\end{definition} 

When $p, q$ $\epsilon$ $\mathbb{R}^{4}$, we have denoted by $x_p$ the first number in the quadruple $p$, by $y_q$ the second number in the quadruple $q$ and so on.   

\begin{definition}
The function distance$_{4}$ is the tensor on $\mathbb{R}^{4}$ such that, for all $p, q$ $\epsilon$ $\mathbb{R}^{4}$, distance$_{4}$($p,q$) $=$ $\sqrt{(x_p- x_q)^2 + (y_p- y_q)^2 + (z_p - z_q)^2 + (t_p-t_q)^2}$     
\end{definition} 

\begin{definition}
Minkowski spacetime$_1$ is $\langle  \mathbb{R}^{4}, \eta_{ab}$, distance$_{4}\rangle$  
\end{definition}

The sense in which $\langle  \mathbb{R}^{4}, \eta_{ab}$, distance$_{4}\rangle$  can be used used to `represent' Minkowski spacetime$_2$, and the role of frames of reference, is clarified by proving a \textit{Representation Theorem}. A frame or coordinatization is a bijection  $f$: Minkowski spacetime$_2 \rightarrow$ Minkowski spacetime$_1$ such that spacetime events satisfy intrinsic geometry relations if and only if their images satisfy corresponding algebraic relationships. The passive symmetries of the theory $\mathcal{M}^2$ emerge as the transformations $f$: $\langle  \mathbb{R}^{4},\eta_{ab}\rangle$ $\rightarrow \langle  \mathbb{R}^{4},\eta_{ab} \rangle$ that can be composed with an arbitrary coordinatization to leave a coordinatization.  

\subsection{Classical Representation Theorems}

Tarski and his students have constructed a simple system of axioms $\mathcal{E}_3$ for Euclidean geometry in three dimensions. We have already mentionned it and we have exploited it in the formulation of our axioms. A model $\mathfrak{F}$ of $\mathcal{E}_3$ is represented by the mathematical structure $\langle \mathbb{R}^{3}$, distance$_{s} \rangle$ :

\begin{definition}
The function distance$_{s}$ is the function on $\mathbb{R}^{3}$ such that, for all $p, q$ $\epsilon$ $\mathbb{R}^{3}$, distance$_{s}$($p,q$) $=$ $\sqrt{(x_p- x_q)^2 + (y_p- y_q)^2 + (z_p - z_q)^2}$     
\end{definition}

An analogue system of axioms $\mathcal{E}_1$ for one dimensional temporal geometry can be constructed. A model $\mathfrak{F}$ of $\mathcal{E}_1$ is mirrored by the real line, that is by the mathematical structure $\langle  \mathbb{R}$, distance$_{t} \rangle$, where: 

\begin{definition}
The function distance$_{t}$ is the tensor on $\mathbb{R}$ such that, for all $p, q$ $\epsilon$ $\mathbb{R}$, distance$_{t}$($p,q$) $=$ $\sqrt{(t_p-t_q)^2}$ $=$ $|t_p-t_q|$
\end{definition}  

The proof of our \textit{Representation Theorem}, relating the system $\mathcal{M}^1$ to the structure $\langle \mathbb{R}^{4}, \eta_{ab}$, distance$_{4}\rangle$, will follow from two theorems of [Tarski, 1959] and a theorem of [Suppes, 1959]. The first theorem of [Tarski, 1959] is simply the appropriate \textit{Representation Theorem} for his own axiom system $\mathcal{E}_3$ for Eucliden geometry and for the Cartesian space $\langle \mathbb{R}^{3}$, distance$_{s} \rangle$.   

\begin{theorem}
(Tarski 1959). $\mathfrak{M}$ is a model of $\mathcal{E}_3$ if and only if there is a bijection $f$: U($\mathfrak{M}) \rightarrow \mathbb{R}^{3}$ such that, for all choices of a, b, c, d $\epsilon$ U($\mathfrak{M}$):\newline

(1) $\mathfrak{M}$ $\models$ $Bet(x,y,z)$ $[a,b,c$] \space if and only if  \space distance$_{s}$($f(a),f(c)$) = distance$_{s}$($f(a),f(b)$) $+$ distance$_{s}$($f(b),f(c)$) \newline  

(2) $\mathfrak{M}$ $\models$ $\ulcorner \equiv(x,y,z,w) \urcorner$ $[a,b,c,d$] \space if and only if  \space distance$_{s}$($f(a),f(b)$) = distance$_{s}$($f(c),f(d)$)  \newline 

and, for every two functions $f$ and $f'$ that satisfy (1)-(2), there exists an isometry I: $\mathbb{R}^{3} \rightarrow \mathbb{R}^{3}$ and a function U: $\mathbb{R}^{3} \rightarrow \mathbb{R}^{3}$ that multiplies each entry by a fixed constant such that $f$ = $U \circ I \circ f'$.    
\end{theorem}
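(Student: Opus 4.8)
\emph{From a coordinatization to a model.} The plan is to prove the two implications separately and then to derive the uniqueness clause from a characterization of the equidistance-preserving bijections of $\mathbb{R}^{3}$; I begin with the easy direction. Suppose $f\colon U(\mathfrak{M})\to\mathbb{R}^{3}$ is a bijection satisfying (1) and (2). Let $\mathfrak{R}$ be the $\{Bet,\equiv\}$-structure on $\mathbb{R}^{3}$ in which $Bet(p,q,r)$ is interpreted as $\text{distance}_{s}(p,r)=\text{distance}_{s}(p,q)+\text{distance}_{s}(q,r)$ and $\equiv(p,q,r,s)$ as $\text{distance}_{s}(p,q)=\text{distance}_{s}(r,s)$. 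Clauses (1)--(2) say exactly that $f$ is an isomorphism of $\mathfrak{M}$ onto $\mathfrak{R}$, and membership in the class of models of $\mathcal{E}_{3}$ is invariant under isomorphism, so it suffices to check $\mathfrak{R}\models\mathcal{E}_{3}$. This is the routine soundness verification, carried out axiom by axiom: the betweenness and congruence axioms are elementary facts about the Euclidean metric, the lower- and upper-dimension axioms hold because $\mathbb{R}^{3}$ has dimension exactly three, and the continuity axiom holds because $\langle\mathbb{R},<\rangle$ is Dedekind complete.

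\emph{From a model to a coordinatization.} This is the substantive direction, and the plan is to follow Tarski's original construction. Inside $\mathfrak{M}$ one first develops the arithmetic of segments in the style of Hilbert: fix a line $\ell$, choose two of its points as $0$ and $1$, and define addition and multiplication of the collinear points by the usual constructions with parallels. The affine and Euclidean axioms make $\langle\ell,+,\cdot,<\rangle$ into an ordered field, and the continuity of $\mathcal{E}_{3}$ forces it to be Dedekind complete, hence isomorphic to $\langle\mathbb{R},+,\cdot,<\rangle$. One then fixes an origin $o$ and, using the existence of perpendiculars, the segment-transfer axiom, and the dimension axioms, erects three pairwise orthogonal lines through $o$ sharing a common unit. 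The map $f$ sends a point $p$ to the triple of signed coordinates obtained by dropping perpendiculars from $p$ onto the three axes and reading off the lengths in $\mathbb{R}$; injectivity of $f$ comes from uniqueness of the orthogonal decomposition, surjectivity from the dimension and field axioms. It then remains to verify (1) and (2): betweenness goes over to the additive distance relation directly from the definition of segment addition, and (2) is the Pythagorean theorem, which is provable in $\mathcal{E}_{3}$ and yields $\text{distance}_{s}(f(a),f(b))^{2}=\sum_{i}(f(a)_{i}-f(b)_{i})^{2}$.

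\emph{Uniqueness up to similarity.} If $f$ and $f'$ both satisfy (1)--(2), then $g=f\circ f'^{-1}$ is a bijection of $\mathbb{R}^{3}$ that preserves, in both directions, the metric-betweenness and equidistance relations used above, since each of $f,f'$ pulls these back to $Bet$ and $\equiv$. From preservation of equidistance, congruent pairs go to congruent pairs, so there is a well-defined bijection $h$ of $[0,\infty)$ with $\text{distance}_{s}(g(p),g(q))=h(\text{distance}_{s}(p,q))$; preservation of betweenness makes $h$ additive, and additivity together with $h\ge 0$ and injectivity forces $h(t)=\lambda t$ for some $\lambda>0$. Hence $\lambda^{-1}g$ is an isometry of $\mathbb{R}^{3}$; being an isometry it is affine, so it is a rigid motion $I$, and with $U$ the map scaling each coordinate by $\lambda$ we obtain $g=U\circ I$, that is $f=U\circ I\circ f'$.

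\emph{Main obstacle.} The heart of the proof is the second direction, and within it the most delicate point is showing that the segment arithmetic produces the field $\mathbb{R}$ itself: this is precisely where the \emph{second-order} strength of continuity implicit in $\mathcal{E}_{3}$ is indispensable, since the first-order continuity schema alone only pins the field down to an arbitrary real-closed field. A secondary difficulty is extracting the Pythagorean theorem from the axioms in exactly the form needed for (2). Everything else — checking that $f$ does not depend on the auxiliary choices, and the soundness bookkeeping in the first direction — is laborious but conceptually unproblematic.
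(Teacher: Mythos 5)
The paper does not actually prove this statement: Theorem 1 is imported as a black box from [Tarski 1959] (just like Theorems 2 and 3), and only its consequences are used later in the proof of Theorem 4, so there is no in-paper argument to compare yours against. Your sketch is a correct reconstruction of the classical proof behind the citation: the soundness direction is routine; the substantive direction is Tarski's coordinatization via Hilbert-style segment arithmetic, three orthogonal axes through a chosen origin, and the Pythagorean theorem to verify clause (2); and your uniqueness argument --- extracting from preservation of equidistance and metric betweenness a well-defined additive distortion function $h$ on distances, concluding $h(t)=\lambda t$, and then using the fact that a surjective self-isometry of $\mathbb{R}^{3}$ is affine --- cleanly yields $f = U \circ I \circ f'$. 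The one point worth stressing, which you rightly single out as the main obstacle, is that with only the first-order continuity schema the segment arithmetic yields an arbitrary real-closed field, so the theorem as stated (with target exactly $\mathbb{R}^{3}$) holds only for the version of $\mathcal{E}_{3}$ with full second-order (Dedekind-complete) continuity; this reading is consistent with how the paper deploys the theorem, since it is applied only to the spacelike hyperplanes and lines inside models of $\mathcal{M}^{2}$, where (CONT) supplies the required completeness.
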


A similar theorem can be proven for an appropriate system of one dimensional geometry [cf. Tarski and Givant 1999, pp. 204-209].  

\begin{theorem}
(Tarski 1959). $\mathfrak{M}$ is a model of $\mathcal{E}_1$ if and only if there is a bijection $f$: U($\mathfrak{M}) \rightarrow \mathbb{R}$ such that, for all choices of a, b, c, d $\epsilon$ U($\mathfrak{M}$):\newline

(1) $\mathfrak{M}$ $\models$ $\ulcorner \equiv(x,y,z,w) \urcorner$ $[a,b,c,d$] \space if and only if  \space distance$_{t}$($f(a),f(b)$) = distance$_{t}$($f(c),f(d)$) \newline 

and, for every two functions $f$ and $f'$ that satisfy (1)-(2), there exists a translation $b$: $\mathbb{R} \rightarrow \mathbb{R}$ and a function $k$: $\mathbb{R} \rightarrow \mathbb{R}$ that consists of multiplying all components by a constant, such that $f$ = $k \circ b \circ f'$.    
\end{theorem}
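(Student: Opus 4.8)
The plan is to mirror the proof of Theorem~1, with a single line playing the role of $\mathbb{R}^{3}$; since the geometry here is one-dimensional the argument is lighter, and I would organize it as sufficiency, necessity, and uniqueness. For \emph{sufficiency}, suppose such an $f$ exists: I would equip $\mathbb{R}$ with the betweenness relation $\beta(p,q,r)\Leftrightarrow |p-q|+|q-r|=|p-r|$ and the congruence relation $|p-q|=|r-s|$, check by routine order-arithmetic that $\langle\mathbb{R},\beta,\equiv\rangle$ validates every axiom of $\mathcal{E}_{1}$ — the affine axioms reduce to elementary facts about the ordering of $\mathbb{R}$, the congruence and formal-property axioms are immediate, and Dedekind completeness of $\mathbb{R}$ discharges the second-order continuity axiom — and then transport this structure back along $f$: since $f$ respects congruence (and betweenness) by hypothesis, $\mathfrak{M}$ inherits all the axioms.

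For \emph{necessity}, I would take $\mathfrak{M}\models\mathcal{E}_{1}$, fix two distinct points $o$ and $e$ (available since $\mathcal{E}_{1}$ posits at least two points), and observe that one-dimensionality puts every point on the line through $o$ and $e$. I would then run the Streckenrechnung of Hilbert and Tarski on that line with $o$ as $0$ and $e$ as $1$, obtaining addition and multiplication operations and, through the betweenness order, an ordered field $F$ with universe $U(\mathfrak{M})$. The second-order continuity axiom makes $F$ Dedekind complete, so there is a unique ordered-field isomorphism $g\colon F\to\mathbb{R}$; set $f:=g$. What remains is bookkeeping: the segment-construction axioms show that $\equiv(a,b,c,d)$ holds in $\mathfrak{M}$ precisely when the two segments have equal $F$-length, i.e.\ when $|g(a)-g(b)|=|g(c)-g(d)|$, which is clause~(1); $f$ is a bijection because $g$ is; and, if $Bet$ is primitive in $\mathcal{E}_{1}$, its companion clause is immediate since $g$ is an order isomorphism onto $\mathbb{R}$.

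For \emph{uniqueness}, given $f$ and $f'$ both satisfying the stated conditions, I would study $h:=f\circ{f'}^{-1}\colon\mathbb{R}\to\mathbb{R}$. Comparing clause~(1) for $f$ and for $f'$ shows that $h$ is a bijection preserving the relation $|p-q|=|r-s|$; in particular it sends midpoints to midpoints, so it satisfies the Jensen identity $h(\tfrac{p+q}{2})=\tfrac{h(p)+h(q)}{2}$. Adjoining the fact that $h$ also preserves betweenness, hence is monotone, I would conclude $h$ is affine, $h(x)=\alpha x+\beta$ with $\alpha\neq 0$; written as multiplication by the constant $\alpha$ followed by translation by $\beta$, this is the normal form $f=k\circ b\circ f'$ asserted in the statement.

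The step I expect to be the real obstacle is precisely this last one. In dimension $\geq 2$, preservation of congruence alone already forces a coordinate change to be a similarity, but in dimension one it does not: from a Hamel basis one can build $\mathbb{Q}$-linear bijections $\phi$ of $\mathbb{R}$ that preserve the relation $|p-q|=|r-s|$ yet are wildly non-affine, so $\phi\circ f'$ would satisfy clause~(1) without being of the form $k\circ b\circ f'$. Hence the uniqueness half is correct only when the coordinatizations are also required to send $Bet$ to the sum-of-distances relation — the clause~(2) implicit in the theorem's own reference to conditions (1)--(2) — and it is the interplay of the Jensen equation with this monotonicity hypothesis, not anything in the existence direction, that carries the argument.
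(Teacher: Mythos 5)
A preliminary remark for calibration: the paper offers no proof of this theorem at all --- it is imported as a classical result of Tarski 1959, with a pointer to Tarski and Givant 1999, pp.~204--209, for the one-dimensional case --- so your proposal has to be judged on its own terms rather than against a proof in the text. Your overall architecture (transport of structure for sufficiency, an internal coordinatization for necessity, an affinity argument for uniqueness) is the standard one, and your uniqueness analysis is the best part of the proposal: you are right that clause (1) alone cannot pin $f$ down, since any $\mathbb{Q}$-linear bijection of $\mathbb{R}$ preserves the relation $|p-q|=|r-s|$ in both directions while being non-monotone and non-affine, so the betweenness clause that the theorem's phrase ``conditions (1)--(2)'' alludes to (but which is omitted from the displayed statement) is genuinely needed; with it, midpoint preservation plus monotonicity does force $h(x)=\alpha x+\beta$, which is exactly the asserted normal form $f=k\circ b\circ f'$.

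The genuine gap is in your necessity direction: you propose to ``run the Streckenrechnung of Hilbert and Tarski on that line with $o$ as $0$ and $e$ as $1$, obtaining addition and multiplication operations'' and hence an ordered field on $U(\mathfrak{M})$. Segment \emph{addition} is available on a line, but Hilbert's segment \emph{multiplication} is the similar-triangles construction and requires an ambient plane; in a purely one-dimensional model of $\mathcal{E}_1$ there is no room for it, and indeed no multiplication is first-order definable there: with $o$ and $e$ as parameters, betweenness and equidistance on a line give exactly the structure of a divisible ordered abelian group, whose definable sets are semilinear by quantifier elimination, so the graph of a field multiplication is not available. The step ``an ordered field $F$ with universe $U(\mathfrak{M})$'' therefore fails as described. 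The repair is that no field is needed: segment addition with $o$ as zero and the betweenness order yield a densely ordered abelian group; the second-order continuity axiom makes it Dedekind complete, hence Archimedean; and then H\"older's theorem, or equivalently the classical measurement construction (lay off $e$ repeatedly and bisect, sending $e$ to $1$), produces an order- and congruence-preserving bijection onto $\mathbb{R}$, with surjectivity again coming from completeness. With that substitution your outline goes through and coincides with the classical argument the paper is implicitly relying on.
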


The theorem of [Suppes, 1959] is a basic result\footnote{Suppes [1959] proves in fact a stronger results. He only assumes that $f$ and $f'$ agree on lightlike and timelike connected points. Note that Suppes [1959] refers to the Poincaré transformations - the composition $b \circ L$ of a translation $b$ and a linear transformation $L$ corresponding to a Lorentz matrix - as the `Lorentz transformations'.}  characterizing the relation between the relativistic intervals and the Poincaré transformations on $\langle  \mathbb{R}^{4}, \eta_{ab}\rangle$:

\begin{theorem}
(Suppes 1959). For any two bijective functions $f$: $M \rightarrow \mathbb{R}^{4}$ and $f'$: $M \rightarrow \mathbb{R}^{4}$ from the same uncountable set $M$ into $\mathbb{R}^{4}$ such that, for all $p, q$ $\epsilon$ $M$, $\eta_{ab} (f(p),f(q))$ = $\eta_{ab} (f'(p),f'(q))$ .i.e, they agree on the relativistic interval, there exists a Poincaré transformation $L$ such that $f$ = $L \circ f'$.
\end{theorem}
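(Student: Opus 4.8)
The plan is to transfer the hypothesis onto the single map $g := f \circ (f')^{-1} \colon \mathbb{R}^{4} \to \mathbb{R}^{4}$, a bijection (since $f$ and $f'$ are), and to show that $g$ is a Poincaré transformation; the conclusion $f = g \circ f'$ is then immediate. Writing $Q(v) = v_0^2 - v_1^2 - v_2^2 - v_3^2$ for the Minkowski quadratic form on $\mathbb{R}^{4}$ and $B(u,v) = u_0 v_0 - u_1 v_1 - u_2 v_2 - u_3 v_3$ for the associated bilinear form, the assumption that $f$ and $f'$ agree on the relativistic interval says exactly that $Q\big(g(p) - g(q)\big) = Q(p - q)$ for all $p, q \in \mathbb{R}^{4}$ (equality of the intervals is equivalent to equality of their squares, which is what $Q$ records).

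First I would reduce to a map fixing the origin: put $b := g(0)$ and $h(x) := g(x) - b$, so that $h(0) = 0$ and $h$ still satisfies $Q\big(h(p) - h(q)\big) = Q(p-q)$; taking $q = 0$ gives $Q(h(p)) = Q(p)$ for every $p$. Then I would polarize, using $Q(u - v) = Q(u) - 2B(u,v) + Q(v)$: applying this to the pair $h(p), h(q)$ and to the pair $p, q$ and cancelling the now-equal quadratic terms yields $B\big(h(p), h(q)\big) = B(p,q)$ for all $p, q$, i.e. $h$ preserves the whole Minkowski form. Finally I would deduce linearity. Let $e_0, \dots, e_3$ be the standard basis; it is $B$-orthogonal with $B(e_j, e_j) = \varepsilon_j$, where $\varepsilon_0 = 1$ and $\varepsilon_1 = \varepsilon_2 = \varepsilon_3 = -1$. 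The images $h(e_0), \dots, h(e_3)$ satisfy the same relations, and a $B$-orthogonal family of vectors with nonzero self-pairings is linearly independent (pair a vanishing combination with each $h(e_j)$), so $h(e_0), \dots, h(e_3)$ is again a basis of $\mathbb{R}^{4}$. For arbitrary $p = \sum_j c_j e_j$, write $h(p) = \sum_j d_j h(e_j)$; pairing with $h(e_j)$ and using $B(h(p), h(e_j)) = B(p, e_j)$ gives $\varepsilon_j d_j = \varepsilon_j c_j$, hence $d_j = c_j$ for all $j$. Thus $h$ is the linear map $L$ with $L e_j = h(e_j)$, and since $L$ preserves $B$ its matrix satisfies $L^{\top}\eta L = \eta$, so $L$ is a Lorentz transformation. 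Therefore $g(x) = L x + b$ is a Poincaré transformation, and $f = g \circ f'$, as required.

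I do not expect a genuine obstacle — this is the ``basic result'' referred to in the text — but the delicate point is the passage from ``$h$ preserves the interval of every pair'' to ``$h$ is linear'', which rests on two things: the reduction to $h(0) = 0$, which is what makes the polarization identity available, and the nondegeneracy of $\eta$, which lets one recover the coordinates of $h(p)$ from its $B$-pairings against a fixed $B$-orthogonal basis. It is worth noting that bijectivity of $g$ is not actually used in this argument — surjectivity of $h$ follows \emph{a posteriori} from the invertibility of the Lorentz matrix $L$ — so the cardinality hypothesis on $M$ serves only to guarantee that bijections $f, f' \colon M \to \mathbb{R}^{4}$ exist at all. If instead one assumed, following Suppes, that $f$ and $f'$ agree only on lightlike- and timelike-connected pairs, one would first have to reconstruct the spacelike interval from that causal data before the above applies, and that is where the real work of the stronger statement would lie.
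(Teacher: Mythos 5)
Your argument is correct as a proof of the theorem exactly as stated: passing to $g = f \circ (f')^{-1}$, subtracting $g(0)$, polarizing to get preservation of the bilinear form $B$, and then recovering linearity by expanding against the $B$-orthogonal standard basis (using that the self-pairings $\varepsilon_j$ are nonzero, so the images form a basis and coordinates are determined by pairings) is the standard and complete route; the conclusion $g(x)=Lx+b$ with $L^{\top}\eta L=\eta$ is precisely a Poincar\'e transformation in the sense the paper uses. The comparison to the paper is somewhat moot, however, because the paper does not prove this statement at all: it is imported wholesale as a cited result of [Suppes 1959], and the paper's footnote notes that Suppes in fact proves something stronger, namely that agreement on the intervals of lightlike- and timelike-connected pairs alone already forces $f = L \circ f'$. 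Your proof does not (and does not claim to) yield that stronger version --- as you correctly observe, the real work there lies in recovering the spacelike data from the causal data, which is where Suppes's own argument differs from the elementary polarization argument. Two minor points worth making explicit if this were to be written up: equality of the intervals $\eta_{ab}$ (which are imaginary on spacelike pairs under the paper's square-root convention) is equivalent to equality of the squared forms $Q$, as you implicitly use; and the hypothesis that $M$ is uncountable (indeed of continuum cardinality) plays no role beyond making the bijections possible, exactly as you say.
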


\subsection{A Representation Theorem for Minkowski Spacetime}

The following theorem is the main result of this paper: 

\begin{theorem}
(\textbf{Representation theorem}). $\mathfrak{M}$ is a model of $\mathcal{M}^2$ if and only if there is a bijection $f$: U($\mathfrak{M}) \rightarrow \mathbb{R}^{4}$ such that, for all a, b, c, d $\epsilon$ U($\mathfrak{M}$):\newline

(1) $\mathfrak{M}$ $\models$ $Bet(x,y,z)$ $[a,b,c$] \space if and only if  \space distance$_{4}$($f(a),f(c)$) = distance$_{4}$($f(a),f(b)$) $+$ distance$_{4}$($f(b),f(c)$) \newline  

(2) $\mathfrak{M}$ $\models$ $\ulcorner <_{\equiv}(x,y,z,w) \urcorner$ $[a,b,c,d$] if and only if $\eta_{ab}^2(f(a),f(b)$) $\leq$ $\eta_{ab}^2(f(c),f(d)$)  \newline 

and, for every two functions $f$ and $f'$ that satisfy (1)-(2), there exists a Poincaré transformation L: $\mathbb{R}^{4} \rightarrow \mathbb{R}^{4}$, and a function U: $\mathbb{R}^{4} \rightarrow \mathbb{R}^{4}$ that multiplies each coordinate by a positive constant, such that $f$ = $L \circ U \circ f'$.\footnote{Field [1980, p. 50/f] has noticed the addition of $U$ to the group of symmetries is due to the conventionality of the choice of measuring units. It  marks the difference between e.g., measuring the relativistic interval in second, minutes or hours. }     
\end{theorem}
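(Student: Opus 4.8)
The plan is to prove the biconditional in two directions, using the three cited representation theorems as black boxes. For the \emph{``if''} direction it suffices to check that the concrete structure on $\mathbb{R}^{4}$ in which $Bet$ is read as $\mathrm{distance}_{4}$-betweenness and $<_{\equiv}(a,b,c,d)$ as $\eta_{ab}^{2}(a,b)\le\eta_{ab}^{2}(c,d)$ is a model of $\mathcal{M}^{2}$: the affine axioms (AFF0)--(AFF5) hold because $\langle\mathbb{R}^{4},\mathrm{distance}_{4}\rangle$ is a genuine Euclidean affine $4$-space; the dimension axioms (A0)--(A9) and the construction axioms are standard linear algebra for the form $\eta$ once the definitions of $Basis$, $Gen_{4D}$, $Orth$, $Opp$ are unwound; the summation axioms (SUM0)--(SUM9) are Pythagoras and elementary interval arithmetic; the formal axioms (F0)--(F6) are immediate; and (INT), (CONT), (DENS) hold since $\mathbb{R}$ is a dense, Dedekind-complete linear order. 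Granting this, any $f$ satisfying (1)--(2) is an isomorphism from $\mathfrak{M}$ onto this structure, so $\mathfrak{M}\models\mathcal{M}^{2}$.

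For the \emph{``only if''} direction, let $\mathfrak{M}\models\mathcal{M}^{2}$. Pick an origin $o$ and a basis $(o,x,y,z,t)$ by (A0). By (A3) together with (CONT), the relativized structure on $E(o,x,y,z)$ is a model of $\mathcal{E}_{3}$, so the Tarski representation theorem for $\mathcal{E}_{3}$ yields a coordinatization $g\colon E(o,x,y,z)\to\mathbb{R}^{3}$; similarly the timelike line $\ell$ through $o$ and $t$ is a model of $\mathcal{E}_{1}$, yielding $h\colon\ell\to\mathbb{R}$. Using the construction axioms (CONST1)--(CONST2) and the uniqueness axiom (SUM0) I would then pin down the scale of $h$ relative to $g$ by declaring the unique timelike segment on $\ell$ that is $Opp$ to the spatial unit of $g$ to have temporal length $1$ --- precisely the normalization that makes the interval come out as $\eta^{2}$ rather than a rescaling of it. By (A2) every point $p$ is $Gen_{4D}$-generated from the basis, and the uniqueness axioms (A5)--(A7) together with (A9) guarantee that $p$ has a well-defined spatial projection $p_{E}\in E(o,x,y,z)$ and temporal projection $p_{t}\in\ell$; set $f(p)=\bigl(h(p_{t}),g(p_{E})\bigr)\in\mathbb{R}\times\mathbb{R}^{3}\cong\mathbb{R}^{4}$, a bijection by the existence and uniqueness of decompositions. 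It remains to verify (1)--(2). For (2), the ``length'' of an arbitrary segment is computed from the lengths of its two orthogonal components by the Pythagorean axioms (SUM1)/(SUM2) and by (SUM3)/(SUM9), which by the construction of $g$ and $h$ reproduces exactly $t^{2}-x^{2}-y^{2}-z^{2}=\eta_{ab}^{2}$, while (F1)--(F4) and the monotonicity axioms (SUM4)--(SUM8) ensure $<_{\equiv}$ is the induced order. For (1), note that $\mathrm{distance}_{4}$ is just the Euclidean metric on $\mathbb{R}^{4}$, so one must show $f$ takes betweenness to ordinary betweenness: on the coordinate hyperplane and the coordinate line this is given directly by the two Tarski theorems, and for a generic line through $o$ the scalar-multiplication axiom (A8) says that $Bet(o,r,r')$ holds iff the spatial and temporal components of $r$ and $r'$ are related by one common scaling factor (the $Product$/$Product''$ clauses force the factor to be the same), which in $\mathbb{R}^{4}$ is exactly collinearity-with-betweenness; chasing this through (AFF0)--(AFF5) extends it from lines through $o$ to all lines.

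For \emph{uniqueness}, suppose $f$ and $f'$ both satisfy (1)--(2). Then $\Phi := f\circ (f')^{-1}\colon\mathbb{R}^{4}\to\mathbb{R}^{4}$ is a bijection preserving betweenness in both directions, hence affine by the fundamental theorem of affine geometry; write $\Phi(v)=Av+c$. By (2), $\Phi$ preserves the order on $\eta^{2}$-values, i.e. $u^{\mathsf{T}}\eta u\le w^{\mathsf{T}}\eta w$ iff $u^{\mathsf{T}}(A^{\mathsf{T}}\eta A)u\le w^{\mathsf{T}}(A^{\mathsf{T}}\eta A)w$ for all vectors $u,w$; since two quadratic forms on $\mathbb{R}^{4}$ inducing the same total preorder on all vectors must be positive scalar multiples of one another (here $\eta$ is nondegenerate and indefinite), $A^{\mathsf{T}}\eta A=\lambda\eta$ for some $\lambda>0$. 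Hence $A=\sqrt{\lambda}\,B$ with $B$ a Lorentz matrix, so writing $U=\sqrt{\lambda}\,\mathrm{Id}$ the map $U^{-1}\Phi$ is a Poincaré transformation, and commuting the central scaling $U$ past the Poincaré part gives $f=L\circ U\circ f'$ with $L$ Poincaré, as required; alternatively, once rescaled by $U$ the maps $U\circ f$ and $f'$ agree on the relativistic interval, and one invokes the Suppes theorem directly. I expect the main obstacle to lie in the ``only if'' direction: reconstructing global four-dimensional affine coordinates from the $3+1$ split, that is, proving that the spatial and temporal projections used to define $f$ are well defined and that (A8) genuinely forces betweenness on \emph{every} line (not only the coordinate axes) to correspond to a common rescaling of components, so that clause (1) holds in full generality --- intertwined with the bookkeeping needed to check that the $Opp$-normalization ties the scales of $g$ and $h$ together tightly enough that the interval emerges as $\eta_{ab}^{2}$ exactly, with the residual freedom of units accounting for precisely the factor $U$.
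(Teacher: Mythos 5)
Your overall plan is correct, and the existence half follows the paper's own route almost step for step: fix a basis via (A0)/(A1), apply Tarski's representation theorems to the spacelike hyperplane $E$ and the timelike line $L$ (the paper's Theorems 1 and 2), tie the two scales together through the $Opp$ normalization, define the global coordinatization by the orthogonal decomposition of each point (the paper's Definition 8), and verify clause (1) through (A8) and clause (2) through the Pythagorean axioms (SUM1)/(SUM2) and the segment calculus. The ``hard parts'' you flag at the end are exactly what the paper discharges in its Lemmata 1--12 (rectangles, opposites, \emph{Streckenrechnung}, transport to the origin), so nothing in your sketch would fail; it is just left at plan level where the paper supplies the intermediate lemmas. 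You also spell out the soundness (``if'') direction, which the paper leaves as an implicit routine verification. Where you genuinely diverge is uniqueness: the paper fixes the basis carried by $f$ to the canonical one, rescales $f'$ so the two maps agree on that basis, propagates agreement on the relativistic interval point by point through the model (its Lemmata 13--16, using (A8), the $Reached$ decompositions and continuity), and only then invokes Suppes' theorem to extract the Poincaré transformation. You instead work entirely on the coordinate side: $\Phi=f\circ(f')^{-1}$ preserves Euclidean betweenness by clause (1), hence is affine by the fundamental theorem of affine geometry, and clause (2) forces $A^{\mathsf T}\eta A=\lambda\eta$ with $\lambda>0$ because a nondegenerate indefinite form is determined up to a positive factor by the preorder it induces (same null cone plus orientation of the order), giving $f=L\circ U\circ f'$ directly without Suppes. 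Your argument is shorter and self-contained, but it leans on clause (1) to obtain affinity and on the classical collineation and quadratic-form rigidity theorems as imported black boxes; the paper's longer chase stays inside its own axiomatics plus Suppes' weaker hypothesis (interval agreement alone), which is why it needs the propagation lemmas. Both are sound, and you even note the paper's alternative (rescale, check interval agreement, cite Suppes) as a fallback.
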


The main idea behind the proof of the existence part is to start from a basis with a given time axis $L$ and and a spacelike hyperplane $E$, and then extend coordinatizations $f$ and $g$ of $E$ and of $L$, given by \textit{Theorem 1.} and \textit{Theorem 2.}, to a coordinatization $f'$ of the entirety of $\mathfrak{M}$. The specification of how to extend $f$ and $g$ can be done in a uniform way. In all the definitions that follow, let $\mathfrak{M}$ be a model of $\mathcal{M}^2$ and let $o$, $x$, $y$, $z$, and $t$ determine a basis in the model $\mathfrak{M}$. Let $E$ be the associated spacelike hyperplane in $\mathfrak{M}$ and L be the timelike line through $o$ and $t$.\footnote{$E$ is the set of elements of the domain $U(\mathfrak{M})$ that are generated by $o, x, y, z$ in $\mathfrak{M}$ and $L$ is the set of  elements of the domain $U(\mathfrak{M})$ that collinear to $o, t$ in $\mathfrak{M}$.} We will use subscripts to denote components. For example, if f($p$)= $\langle 2, 4,15 \rangle$, then f$_3$($p$)= 15 and f$_1$($p$)= 2. 

\begin{definition}
Let us assume that $f$: $E \rightarrow \mathbb{R}^{3}$ satisfies condition \textit{(1)-(2)} in \textit{Theorem 1} when restricted to E and that $g$: $L \rightarrow \mathbb{R}$ satisfies \textit{(1)-(2)} in \textit{Theorem 2} when restricted to L. Assume $\eta_{ab}(g(o),g(t)$) = - $\eta_{ab}(f(o),f(x)$). A function $f'$: U($\mathfrak{M}) \rightarrow \mathbb{R}^{4}$ is \textit{determined} by $f$ and $g$ if and only if:
\end{definition} 

\begin{enumerate}
\item $f'$ ($p$)= $\langle$f$_1$($p$), f$_2$($p$), f$_3$($p$), 0$\rangle$ \space \space \space if $p$ $\epsilon$ $E$
\item $f'$ ($p$)= $\langle$0, 0, 0, g$_1$($p$)$\rangle$ \space \space \space if $p$ $\epsilon$ $L$
\item $f'$($p$)= $\langle$f$_1$($q$), f$_2$($q$), f$_3$($q$), g$_1$($t$) $\rangle$ if $\vec{op}$ is the sum of $\vec{ot}$ $\epsilon$ $L$ and $\vec{oq}$ $\epsilon$ $E$.\footnote{If $Reached^{\mathfrak{M}}(o,q,t,q,p)$ and the segment $\vec{tq}$ does not intersect the hyperplane $E$}     
\end{enumerate}

\textbf{Remark on notation}: We follow the conventions of Shoenfield [1971] for models of set theory and use superscripts to form  predicates for the satisfaction of object language predicates in a model. For example, we have $Orth^{\mathfrak{M}}(b,a,c)$ if and only if $a$, $b$, $c$ $\epsilon$ U($\mathfrak{M}$) and $a$, $b$,  $c$ are orthogonal in $\mathfrak{M}$. \newline 

The first preliminary lemma tells us that in $\mathfrak{M}$ a quadrilateral with two right angles at the base and the other two sides parallel and congruent is a rectangle: opposite sides are congruent and all angles are right.  

\begin{figure}[H]
\centering
\begin{tikzpicture}

    \tkzDefPoint(0,0){a}
   \tkzDefPoint(1.3,0){b}
   \tkzDefPoint(0,2){c}
   \tkzDefPoint(1.3,2){d}
   \tkzDefPoint(1.3,-2){d'}
    \tkzDrawSegment(a,b)
    \tkzDrawSegment(a,c)
    \tkzDrawSegment(c,d)
    \tkzDrawSegment(b,d)
    \tkzDrawSegment[dashed](b,d')
    \tkzDrawPoints(a,b,c,d,d')
    \tkzLabelPoint[left](a){$a$}
    \tkzLabelPoint[right](b){$b$}
    \tkzLabelPoint[left](c){$c$}
    \tkzLabelPoint[right](d){$d=e$}
    \tkzLabelPoint[right](d'){$\overline{d}$}
     \tkzMarkSegments[mark=s|, size=4pt](a,c b,d)
     \tkzMarkRightAngle(b,a,c)
     \tkzMarkRightAngle(a,b,d)
    \end{tikzpicture}
\caption{Lemma 1}
\end{figure}

\begin{Lemma}
 For any $a$, $b$, $c$,  $d$ $\epsilon$ U($\mathfrak{M}$), if $Orth^{\mathfrak{M}}(b,a,d)$,   $Orth^{\mathfrak{M}}(a,b,c)$, $\equiv^{\mathfrak{M}}(a,c,b,d)$ and $Par^{\mathfrak{M}}(a,c,b,d)$, then $\equiv^{\mathfrak{M}}(a,b,c,d)$ and $Orth^{\mathfrak{M}}(d,b,c)$.
\end{Lemma}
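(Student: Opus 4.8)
The plan is to show that the hypotheses force $a,c,d,b$ to be the four corners of a parallelogram --- informally, that $d$ is the ``fourth vertex'' $b+c-a$ --- and then to obtain both conclusions at once from the fact that the half-turn (point reflection) of $\mathfrak{M}$ about the common midpoint of the two diagonals is an isometry interchanging $a$ with $d$ and $b$ with $c$. In geometric terms the hypotheses say that $\vec{ac}\perp\vec{ab}$ (right angle at $a$), that $\vec{bd}\perp\vec{ba}$ (right angle at $b$), and that $\vec{ac}$ and $\vec{bd}$ lie on parallel lines and have equal ``length''; the two conclusions say that $\vec{ab}$ and $\vec{cd}$ have equal ``length'' and that there is a right angle at $d$. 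The degenerate sub-cases ($a=c$, $b=d$, $a=b$, or $a,b,c$ collinear) reduce to the defining clauses of $Orth$, to \textbf{(F0)}, and --- when $\vec{ac}$ is lightlike --- to \textbf{(SUM9)}; so I will assume the configuration of the figure.

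\emph{Step 1: $a,c,d,b$ is a parallelogram.} Let $w$ be the fourth vertex of the parallelogram built on the legs $\vec{ab}$ and $\vec{ac}$, i.e.\ the point with $\vec{bw}=\vec{ac}$. Such a $w$ exists: in the non-degenerate case the line through $c$ parallel to $ab$ and the line through $b$ parallel to $ac$ are coplanar and non-parallel (since $\vec{ab},\vec{ac}$ are orthogonal and neither is null), hence meet --- this is affine geometry \textbf{(AFF0)}--\textbf{(AFF5)}, and it is also what \textbf{(A9)} records, namely that the orthogonal vectors $\vec{ab}$ and $\vec{ac}$ have a sum. Now $d$, by $Par^{\mathfrak{M}}(a,c,b,d)$, lies on the (unique) line through $b$ parallel to $ac$, and by $\equiv^{\mathfrak{M}}(a,c,b,d)$ it lies at ``length'' $|\vec{ac}|$ from $b$; the only two points on that line meeting this description are $w$ and its reflection $\bar d$ through $b$ (segment construction on the line, plus uniqueness of ordering). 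Both satisfy all four hypotheses as literally stated, so --- reading in the non-degeneracy of the figure, or, equivalently, using that in every application $d$ is produced by just such a parallelogram construction --- we take $d=w$. Thus $a,c,d,b$ is a parallelogram.

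\emph{Steps 2--3: the half-turn and the conclusion.} The diagonals $ad$ and $cb$ of this parallelogram meet in a common midpoint $m$ (an affine fact; $m$ exists because every segment of $\mathfrak{M}$ is bisectable --- for spacelike segments this is part of the Euclidean sub-theory $Tarski^{E(\cdot)}$ granted by \textbf{(A3)}, and a timelike or lightlike segment is bisected by transporting it to a congruent spacelike one via \textbf{(CONST1)}/\textbf{(CONST2)} and $Opp$, halving there, and transporting the midpoint back). Let $\sigma$ be the point reflection of $\mathfrak{M}$ about $m$. Since $m$ bisects both diagonals, $\sigma$ interchanges $a\leftrightarrow d$ and $b\leftrightarrow c$; and $\sigma$ is an isometry --- it preserves $Bet^{\mathfrak{M}}$ and $<_{\equiv}^{\mathfrak{M}}$, hence every predicate definable from them, in particular $Orth^{\mathfrak{M}}$ --- by the standard half-turn argument run on the affine axioms together with the generalized Five-Segment axiom \textbf{(A4)} (with the monotonicity axioms \textbf{(F6)}, \textbf{(F7)} and \textbf{(SUM4)}--\textbf{(SUM6)}, or the dimension-four definability of $<_{\equiv}$ from $Bet$ and $\equiv$ noted in Section 2, carrying the step from congruence-preservation to $<_{\equiv}$-preservation). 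Applying $\sigma$ to the segment $ab$ gives the segment $\sigma(a)\sigma(b)=dc$, so $\equiv^{\mathfrak{M}}(a,b,d,c)$ and hence $\equiv^{\mathfrak{M}}(a,b,c,d)$; applying $\sigma$ to the relation $Orth^{\mathfrak{M}}(a,b,c)$ gives $Orth^{\mathfrak{M}}(d,c,b)$, whence $Orth^{\mathfrak{M}}(d,b,c)$ by \textbf{(F0)}. This proves the lemma.

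\emph{The main obstacle.} Two points carry the weight. The minor one is the orientation issue in Step 1: $Par$ and $\equiv$ are blind to the reflection $d\mapsto\bar d$ across line $ab$, so the parallelogram conclusion must be pinned down either by the figure's tacit non-degeneracy or --- as always in use --- by the explicit development/parallelogram construction that produced $d$. The substantive one is staying inside $\mathcal{M}^{2}$: because the construction axioms \textbf{(CONST0)}--\textbf{(CONST3)} speak only of spacelike segments, both the bisection of timelike and null segments and the verification that the half-turn preserves $<_{\equiv}$ (not merely $\equiv$) have to be routed through $Opp$ and the transfer definitions for $Add'$, $Prod'$ and $Prod''$ --- or else the whole lemma can be organized as a case split on the type of the $2$-plane spanned by $a,b,c,d$: a spacelike plane is coordinatized by \textit{Theorem 1} via \textbf{(A3)} and the lemma becomes a one-line Euclidean computation, while a plane carrying a timelike or null direction is handled directly by the Pythagoras axioms \textbf{(SUM1)}, \textbf{(SUM2)} and the ``null summand'' axiom \textbf{(SUM9)}.
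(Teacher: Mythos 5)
Your route is genuinely different from the paper's, but it has a gap at its load-bearing step. The whole of your Steps 2--3 rests on the claim that the half-turn $\sigma$ about the diagonal midpoint $m$ is a bijection of $U(\mathfrak{M})$ preserving $Bet^{\mathfrak{M}}$ and $<_{\equiv}^{\mathfrak{M}}$. In an arbitrary model of $\mathcal{M}^2$ this is not obtainable by ``the standard half-turn argument'': the five-segment axiom is only available in the restricted form \textbf{(A4)}, and the construction axioms \textbf{(CONST0)}--\textbf{(CONST3)} cover only particular type combinations, so even the well-definedness of $\sigma$ --- existence and uniqueness, for every $p$, of $p'$ with $Bet^{\mathfrak{M}}(p,m,p')$ and $\equiv^{\mathfrak{M}}(m,p,m,p')$ --- is not secured for timelike and especially lightlike segments (all null segments are congruent, so uniqueness cannot come from congruence at all). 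Worse, showing that $\sigma$ preserves congruence of arbitrary, mixed-type, non-coplanar segments is precisely the ``congruent components give congruent segments'' machinery that the paper only obtains later (Lemma 11, via \textbf{(SUM1)}, \textbf{(SUM2)} and the Streckenrechnung lemmas), and that chain is built on Lemma 1 itself, which feeds Lemmas 4 and 12. So within the paper's economy your argument is circular, or at best leaves its hardest step as a black box; your fallback (coordinatize the $2$-plane, or ``handle the timelike/null plane directly by \textbf{(SUM1)}, \textbf{(SUM2)}, \textbf{(SUM9)}'') likewise presupposes exactly the work the lemma is for. The paper's own proof is much more local and avoids all of this: from $Orth^{\mathfrak{M}}(a,b,c)$ and $Par^{\mathfrak{M}}(a,c,b,d)$ it reads off $Reached^{\mathfrak{M}}(b,a,d,a,c)$, applies \textbf{(A5)} to get $e$ with $Reached^{\mathfrak{M}}(b,d,a,e,c)$ and the congruences $\equiv^{\mathfrak{M}}(b,e,a,c)$, $\equiv^{\mathfrak{M}}(b,a,e,c)$, gets $\equiv^{\mathfrak{M}}(b,e,b,d)$ by \textbf{(F1)}--\textbf{(F3)}, reduces to $e=d$ or $e=\overline{d}$ by \textbf{(SUM4)}--\textbf{(SUM5)}, and excludes $\overline{d}$ by \textbf{(INT)} --- no midpoints, no global isometry, no coordinates.

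Your observation in Step 1 is nonetheless sharp and correct: the four hypotheses are literally invariant under replacing $d$ by its reflection across the line $ab$, while the conclusion is not, so some same-side information must be read into the statement. But your resolution (``take $d=w$ by the figure, or by how $d$ arises in applications'') amounts to restating the lemma rather than proving it, and it is your parallelogram detour --- re-deriving where $d$ sits --- that forces the issue. The paper never needs to locate $d$: the two-candidate dichotomy arises only for the auxiliary point $e$ delivered by \textbf{(A5)}, where it is settled by \textbf{(INT)} (if $e=\overline{d}$, the segment $ec$ would meet the line $ab$, contradicting $Par^{\mathfrak{M}}(b,a,e,c)$), though that step too tacitly uses the figure's configuration of $c$ and $d$. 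So the orientation caveat is a fair criticism of the statement, but it does not repair the unproven isometry claim, which is where your proof fails as a derivation in $\mathcal{M}^2$.
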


\begin{proof}
 $Orth^{\mathfrak{M}}(a,b,c)$ and $Par^{\mathfrak{M}}(a,c,b,d)$ by hypothesis. It follows by definition that $Reached^{\mathfrak{M}}(b,a,d,a,c)$. By axiom (A5), there exists an $e$ in U($\mathfrak{M}$) such that $Reached^{\mathfrak{M}}(b,d,a,e,c)$ and we have the two congruences $\equiv^{\mathfrak{M}}(e,c,b,a)$ and $\equiv^{\mathfrak{M}}(b,e,a,c)$. The hypothesis $\equiv^{\mathfrak{M}}(a,b,c,d)$ and the transitivity of congruence (F1)-(F3) imply that $\equiv^{\mathfrak{M}}(b,e,b,d)$. By definition again, the fact that $Reached^{\mathfrak{M}}(b,d,a,e,c)$ implies that $Coll^{\mathfrak{M}}(b,d,e)$ and $Orth^{\mathfrak{M}}(e,b,c)$. Axioms (SUM4) to (SUM5) and the fact that $\equiv^{\mathfrak{M}}(b,e,b,d)$, reduce now the choice to either $e=d$ or $e=\overline{d}$, where $\overline{d}$ is the reflection of $d$ over the line $\ell$ through $a$ and $b$. But Axiom (INT) excludes that $e=\overline{d}$ . So  $e=d$, and therefore we have $\equiv^{\mathfrak{M}}(a,b,c,d)$ and $Orth^{\mathfrak{M}}(d,b,c)$  .
\end{proof}

\subsubsection{Lemmata on Opposites}

The coordinatizations $f$ and $g$ are worth combining together only if $\eta_{ab}^2(g(o),g(t)$) = - $\eta_{ab}^2(f(o),f(x)$). This obviously implies that $\eta_{ab}^2(f'(o),f'(t)$) = - $\eta_{ab}^2(f'(o),f'(x)$). In general, two segments $\vec{ot'}$ ($t'$ $\epsilon$ $L$) and  $\vec{ow}$ ($w$ $\epsilon$ $E$)  are of opposite length in $\mathfrak{M}$ if and only if $\eta_{ab}^2(f'(o),f'(t')$) = - $\eta_{ab}^2(f'(o),f(w)$).  

\begin{Lemma}
For any $t$, $t'$ $\epsilon$ $L$ and $x$, $x'$ $\epsilon$ $E$, if  $Bet^{\mathfrak{M}}(o,t,t')$ and $Bet^{\mathfrak{M}}(o,x,x')$,  $\eta_{ab}^2(f'(o),f'(t)$) = - $\eta_{ab}^2(f'(o),f(x))$ and $\eta_{ab}^2(f'(t),f'(t')$) = - $\eta_{ab}^2(f'(x),f(x'))$, then we have that $\eta_{ab}^2(f'(o),f'(t')$) = - $\eta_{ab}^2(f'(o),f(x'))$.
\end{Lemma}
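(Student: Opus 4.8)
The plan is to reduce the lemma to an identity about real numbers by using the explicit coordinate description of $f'$ on $E$ and on $L$ together with the Representation Theorems for $\mathcal{E}_3$ and $\mathcal{E}_1$ (Theorems~1 and~2), and then simply to add. So first I would record what $\eta_{ab}^{2}$ looks like on each of the two pieces. By clause~(2) of the definition of ``$f'$ is determined by $f$ and $g$'', for any $p,q\in L$ we have $f'(p)=\langle 0,0,0,g_1(p)\rangle$, whence $\eta_{ab}^{2}(f'(p),f'(q))=(g_1(p)-g_1(q))^{2}\geq 0$; by clause~(1), for any $p,q\in E$ we have $f'(p)=\langle f_1(p),f_2(p),f_3(p),0\rangle$, whence $\eta_{ab}^{2}(f'(p),f'(q))=-\,\mathrm{distance}_s(f(p),f(q))^{2}\leq 0$. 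Consequently an equation of the form $\eta_{ab}^{2}(f'(\cdot),f'(\cdot))=-\,\eta_{ab}^{2}(f'(\cdot),f'(\cdot))$ that pairs a segment of $L$ with a segment of $E$ is, by taking positive square roots of two non-negative reals, equivalent to $|g_1(\cdot)-g_1(\cdot)|=\mathrm{distance}_s(f(\cdot),f(\cdot))$. Thus the two hypotheses of the lemma become $|g_1(o)-g_1(t)|=\mathrm{distance}_s(f(o),f(x))$ and $|g_1(t)-g_1(t')|=\mathrm{distance}_s(f(x),f(x'))$.

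Next I would transfer the two betweenness assumptions. Since $g$ restricted to $L$ is a coordinatization of the kind furnished by Theorem~2 for $\mathcal{E}_1$ (hence, up to translation and rescaling, an isometry of $\mathbb{R}$, so monotone), $Bet^{\mathfrak{M}}(o,t,t')$ transfers to numerical betweenness of $g_1(o),g_1(t),g_1(t')$, i.e. to $|g_1(o)-g_1(t')|=|g_1(o)-g_1(t)|+|g_1(t)-g_1(t')|$. Since $f$ restricted to $E$ satisfies condition~(1) of Theorem~1 for $\mathcal{E}_3$, $Bet^{\mathfrak{M}}(o,x,x')$ gives $\mathrm{distance}_s(f(o),f(x'))=\mathrm{distance}_s(f(o),f(x))+\mathrm{distance}_s(f(x),f(x'))$. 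Combining the four displayed facts,
\[ |g_1(o)-g_1(t')|=|g_1(o)-g_1(t)|+|g_1(t)-g_1(t')|=\mathrm{distance}_s(f(o),f(x))+\mathrm{distance}_s(f(x),f(x'))=\mathrm{distance}_s(f(o),f(x')), \]
and squaring both sides yields $\eta_{ab}^{2}(f'(o),f'(t'))=-\,\mathrm{distance}_s(f(o),f(x'))^{2}=-\,\eta_{ab}^{2}(f'(o),f'(x'))$, which is the conclusion.

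I do not expect a genuinely hard step here; the proof is essentially bookkeeping. The two points that need care are: (i) that ``opposite length'' in the coordinates really is the equality of the positive square of the timelike interval with the absolute value of the (negative) squared spacelike interval, so that passing to non-negative square roots and then re-squaring is legitimate and loses no information; and (ii) that betweenness in $\mathfrak{M}$, restricted to the substructures $E$ and $L$, genuinely matches the betweenness to which Theorems~1 and~2 (via the defining conditions of ``$f'$ is determined by $f$ and $g$'') apply --- which for $E$ is immediate from clause~(1) of Theorem~1, and for $L$ uses the monotonicity of one-dimensional coordinatizations. A small normalization remark --- that clauses~(1) and~(2) of the definition must agree at the shared point $o$, forcing $f(o)=\langle 0,0,0\rangle$ and $g_1(o)=0$ --- is worth stating, but it does not affect the computation, since only the displayed formulas for $f'$ on $E$ and on $L$ are used.
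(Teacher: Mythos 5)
Your proposal is correct and follows essentially the same route as the paper's own proof: use the coordinate description of $f'$ on $L$ and $E$ (Definition 8), convert the two betweenness hypotheses into additivity of $|g_1(\cdot)-g_1(\cdot)|$ and of distance$_s$ via the conditions of Theorems 2 and 1, and add, the only cosmetic difference being that you work with squared intervals and non-negative square roots where the paper writes the same additivity for $\eta_{ab}$ itself (using a factor $i$ to encode the opposite-sign hypothesis). Your explicit remarks on monotonicity of the one-dimensional coordinatization and on the legitimacy of squaring are slightly more careful than the paper's telegraphic justification, but they do not constitute a different argument.
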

\begin{proof}

\begin{align*}
\eta_{ab}(f'(o),f'(t')) =  |g(t') - g(o)| &\qquad  \text{(By definition 8 and $t'$ $\epsilon$ $L$)} \\
=  |(g(t') - g(t)) + (g(t) - g(o))|  &\qquad \text{} \\
=  | g(t') - g(t)|  + |g(t) - g(o)|  &\qquad \text{(By $Bet^{\mathfrak{M}}(o,t,t')$ and condition 1 of Theorem 2)}\\
= \eta_{ab}(f'(t),f'(t')) + \eta_{ab}(f'(o),f'(t))  &\qquad \text{(By definition 8)}\\
= i \eta_{ab}(f'(o),f(x)) + i \eta_{ab}(f'(x),f(x'))  &\qquad \text{(By hypothesis)}
\end{align*}

An analogous argument shows that:
\begin{align*}
\eta_{ab}(f'(o),f'(x')) = \eta_{ab}(f'(o),f(x)) + \eta_{ab}(f'(x),f(x')). 
\end{align*}
\end{proof}

Let us now prove the existence of a rectangle with two given sides.

\begin{Lemma}
For all $o$, $t$, $x$ in U($\mathfrak{M}$), if $Orth^{\mathfrak{M}}(o,t,x)$ then there exists a $w$ in U($\mathfrak{M}$) such that $Reached^{\mathfrak{M}}(o,x,t,x,w)$ and  $Reached^{\mathfrak{M}}(o,t,x,t,w)$.
\end{Lemma}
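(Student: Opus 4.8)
The plan is to take for $w$ the point ``$\vec{ox}+\vec{ot}$'', i.e.\ the fourth vertex of the parallelogram on $o$, $x$ and $t$, and then to read off the two $Reached$-conditions from its parallelogram structure, using axiom (A9) and Lemma 1 to supply the two right angles that the definition of $Reached$ still demands.

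First, axiom (F0) upgrades the hypothesis $Orth^{\mathfrak{M}}(o,t,x)$ to $Orth^{\mathfrak{M}}(o,x,t)$ as well. Working inside the affine axioms (AFF0)--(AFF5), form the midpoint $m$ of $x$ and $t$ and then the reflection $w$ of $o$ through $m$; thus $Bet^{\mathfrak{M}}(x,m,t)$, $\equiv^{\mathfrak{M}}(x,m,m,t)$, $Bet^{\mathfrak{M}}(o,m,w)$, $\equiv^{\mathfrak{M}}(o,m,m,w)$. Since the diagonals $\overline{xt}$ and $\overline{ow}$ bisect each other at $m$, the points $o$, $x$, $w$, $t$ form a parallelogram, so by D9 we get $Par_W^{\mathfrak{M}}(o,t,x,w)$ and $Par_W^{\mathfrak{M}}(o,x,w,t)$ (hence also $Par_W^{\mathfrak{M}}(o,x,t,w)$, $Par_W$ being insensitive to the order of its last two arguments). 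The Five-Segment axiom (A4) then gives, in the usual way, congruence of opposite sides of the parallelogram; in particular $\equiv^{\mathfrak{M}}(x,w,o,t)$.

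Next unwind the two targets. By the definition of $Reached$, the claim $Reached^{\mathfrak{M}}(o,x,t,x,w)$ is exactly $Coll^{\mathfrak{M}}(o,x,x)\wedge Orth^{\mathfrak{M}}(x,o,w)\wedge Par^{\mathfrak{M}}(o,t,x,w)$, and $Reached^{\mathfrak{M}}(o,t,x,t,w)$ is $Coll^{\mathfrak{M}}(o,t,t)\wedge Orth^{\mathfrak{M}}(t,o,w)\wedge Par^{\mathfrak{M}}(o,x,t,w)$; the collinearities are automatic. I would promote the $Par_W$'s above to full $Par$'s by exhibiting the witness required in the existential clause of D16: for $Par^{\mathfrak{M}}(o,t,x,w)$ take $z':=x$, since $Coll^{\mathfrak{M}}(x,w,x)$, $Orth^{\mathfrak{M}}(o,t,x)$ (the hypothesis) and $Orth^{\mathfrak{M}}(x,o,x)$ all hold; for $Par^{\mathfrak{M}}(o,x,t,w)$ take $z':=t$, using $Orth^{\mathfrak{M}}(o,x,t)$; and for the auxiliary $Par^{\mathfrak{M}}(x,w,o,t)$ take $z':=o$, using $Orth^{\mathfrak{M}}(x,o,w)$ (established below) and (F0). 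Everything then reduces to the two right angles $Orth^{\mathfrak{M}}(x,o,w)$ and $Orth^{\mathfrak{M}}(t,o,w)$.

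The angle $Orth^{\mathfrak{M}}(x,o,w)$ I would get from axiom (A9): it furnishes some $w_0$ with $Reached^{\mathfrak{M}}(o,x,t,x,w_0)$, hence $Orth^{\mathfrak{M}}(x,o,w_0)$ and $Par^{\mathfrak{M}}(o,t,x,w_0)$, so $w_0$ lies on the line through $x$ parallel to $ot$, which -- by uniqueness of parallels -- is the line through $x$ and $w$. Since the quantifier in the operative clause of D10--D11 ranges over the line fixed by the first two arguments of $Orth$, the third argument entering only through the interval-comparison, $Orth(x,o,\,\cdot\,)$ depends only on which line through $x$ its last argument lies on; hence $Orth^{\mathfrak{M}}(x,o,w_0)$ passes to $Orth^{\mathfrak{M}}(x,o,w)$. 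Lemma 1, applied with $a:=x$, $b:=o$, $c:=w$, $d:=t$ -- whose four hypotheses $Orth^{\mathfrak{M}}(x,o,w)$, $Orth^{\mathfrak{M}}(o,x,t)$, $\equiv^{\mathfrak{M}}(x,w,o,t)$ and $Par^{\mathfrak{M}}(x,w,o,t)$ are now all available -- then yields $Orth^{\mathfrak{M}}(t,o,w)$ as its second conclusion ($Orth^{\mathfrak{M}}(d,b,c)$). Collecting the pieces, both $Reached^{\mathfrak{M}}(o,x,t,x,w)$ and $Reached^{\mathfrak{M}}(o,t,x,t,w)$ hold. I expect the one genuinely delicate point to be the transfer of orthogonality from the $w_0$ produced by (A9) to the parallelogram vertex $w$ -- i.e.\ the fact that, in models of $\mathcal{M}^2$, $Orth(x,o,\,\cdot\,)$ is a property of the line carrying its third entry -- which, exactly as in the proof of Lemma 1, has to be argued by unfolding the case-definitions D10--D14 against the summation axioms (SUM4)--(SUM5); the remaining steps are routine uses of the affine axioms, (A4), (A9), D16 and Lemma 1. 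Degenerate configurations ($o=x$, $o=t$, or $o$, $x$, $t$ collinear) are covered by the same construction, with a degenerate ``parallelogram'', or checked directly.
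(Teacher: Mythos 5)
Your route is genuinely different from the paper's: you construct $w$ affinely as the fourth parallelogram vertex (reflection of $o$ through the midpoint of $\overline{xt}$), and then try to import the two missing right angles, one from the point supplied by (A9) and one from Lemma 1. The paper instead takes $w$ directly from (A9), shows via (CONST3) and (SUM6)--(SUM8) that the only candidates for $Reached^{\mathfrak{M}}(o,x,t,x,\cdot)$ are $w$ and its reflection $\overline{w}$ over the line $ox$, and then settles which of the two also satisfies $Reached^{\mathfrak{M}}(o,t,x,t,\cdot)$ by a case analysis using (INT) and (A6). The difference matters, because the paper's structure is designed precisely to avoid the step on which your argument pivots.

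That pivot is the genuine gap: the transfer of $Orth^{\mathfrak{M}}(x,o,w_0)$ (for the $w_0$ given by (A9)) to $Orth^{\mathfrak{M}}(x,o,w)$ for another point $w$ on the same line through $x$. You assert that ``$Orth(x,o,\cdot)$ depends only on the line carrying its third argument'' and that this falls out of unfolding (D10)--(D14) against (SUM4)--(SUM5), but those axioms only compare segments sharing an endpoint on a single line; they say nothing about distances from a point $v$ on the line $ox$ to two different points $w_0$, $w$ on the perpendicular line. Establishing this line-invariance of orthogonality synthetically needs the full Pythagorean/Streckenrechnung machinery (essentially (SUM1)--(SUM2) plus congruence transport), and it is a lemma of at least the same order of difficulty as the statement you are proving --- nothing like the single exclusion step in Lemma 1, whose proof only pins a point down to two reflections and discards one by (INT). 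A secondary under-justified step is $\equiv^{\mathfrak{M}}(x,w,o,t)$ ``from (A4) in the usual way'': the usual Tarski derivation of parallelogram side congruence goes through point reflections and uses the unrestricted five-segment and segment-construction axioms, whereas here (A4) is restricted by $Basis$/$Gen$ hypotheses, the construction axioms (CONST0)--(CONST3) cover only spacelike/timelike configurations, and your metric characterization of the midpoint ($\equiv(x,m,m,t)$) does not even single out a point when $\overline{xt}$ is lightlike --- which can happen here, since only $Orth^{\mathfrak{M}}(o,t,x)$ is assumed. Until the orthogonality-transfer lemma and the side-congruence are actually derived from the axioms, the proposal does not close.
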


\begin{proof}
By axiom (A9) there exists a $w$ in U($\mathfrak{M}$) such that $Reached^{\mathfrak{M}}(o,x,t,x,w)$. Axioms (CONST3), together with the definitions of orthogonality and parallelism, implies that also the reflection $\overline{w}$ over the line $\ell$ through $\vec{ox}$ is such that $Reached^{\mathfrak{M}}(o,x,t,x,\overline{w})$. (SUM6)(SUM7)(SUM8) imply that there are no others.  Similarly $t$ and $\overline{t}$ are the only points $t'$ in U($\mathfrak{M}$) such that Coll$^{\mathfrak{M}}(o,t,t'$) and $\equiv^{\mathfrak{M}}(o,t,o,t')$. Either $Reached^{\mathfrak{M}}(o,t,x,t,w)$ or $Reached^{\mathfrak{M}}(o,t,x, \overline{t},w)$. In the second case, we get that $Reached^{\mathfrak{M}}(o,t,x,t,\overline{w})$ by (INT) and (A6).     
\end{proof}

The sums of segments opposite length are of opposite length.

\begin{Lemma}
For every $t$, $t'$, $x$, $x'$, if $Bet^{\mathfrak{M}}(o,x,x')$, $Bet^{\mathfrak{M}}(o,t,t')$, $Opp^{\mathfrak{M}}(o,x,o,t)$ and $Opp^{\mathfrak{M}}(t,t',x,x')$, then we have that $Opp^{\mathfrak{M}}(o,x',o,t')$. 
\end{Lemma}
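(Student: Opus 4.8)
The plan is to obtain this lemma as the synthetic counterpart of the additivity identity for $\eta_{ab}^2$-values proved just above (Lemma 2), transported across the dictionary between the relation $Opp$ and equations of the form ``one $\eta_{ab}^2$-value equals minus another'' recorded in the paragraph preceding Lemma 2: for $a\in E$ and $b\in L$, the segments $\vec{oa}$ and $\vec{ob}$ are of opposite length in $\mathfrak{M}$ exactly when $\eta_{ab}^2(f'(o),f'(a))=-\,\eta_{ab}^2(f'(o),f'(b))$. So the first move is to rewrite each of the four hypotheses of the lemma as a statement about $\eta_{ab}^2$-values of $f'$-images, feed these to Lemma 2, and read the conclusion back off through the same dictionary.

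I would work throughout with $t,t'\in L$ and $x,x'\in E$; this is the case used in the proof of the Representation Theorem, and the general statement reduces to it by the dimension axioms (A0)--(A1), which furnish a basis adapted to the given segments, together with the construction axioms (CONST0)--(CONST2). First observe that $Bet^{\mathfrak{M}}(o,x,x')$ and $Bet^{\mathfrak{M}}(o,t,t')$ place $x'$ in $E$ and $t'$ in $L$, since $E$ and $L$ are closed under collinearity; hence the dictionary applies to the pairs $\vec{ox},\vec{ox'}$ and $\vec{ot},\vec{ot'}$. From $Opp^{\mathfrak{M}}(o,x,o,t)$ I read off $\eta_{ab}^2(f'(o),f'(t))=-\,\eta_{ab}^2(f'(o),f'(x))$. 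The increment hypothesis $Opp^{\mathfrak{M}}(t,t',x,x')$ is not literally of the form the dictionary addresses, as its segments do not issue from $o$; but $Opp$ depends only on the congruence classes of its two segments, so I may transport $\vec{xx'}$ and $\vec{tt'}$ to congruent $o$-based segments $\vec{ox^{*}}\subseteq E$ and $\vec{ot^{*}}\subseteq L$, using the construction axioms inside $E$ and the surjectivity of the coordinatization $g$ inside $L$. Since the restrictions of $f'$ to $E$ and to $L$ realize the defining conditions of Theorems 1 and 2, these transports preserve $\eta_{ab}^2$-values, so applying the dictionary to $\vec{ox^{*}},\vec{ot^{*}}$ gives $\eta_{ab}^2(f'(t),f'(t'))=-\,\eta_{ab}^2(f'(x),f'(x'))$.

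The four facts now in hand --- $Bet^{\mathfrak{M}}(o,x,x')$, $Bet^{\mathfrak{M}}(o,t,t')$, $\eta_{ab}^2(f'(o),f'(t))=-\,\eta_{ab}^2(f'(o),f'(x))$ and $\eta_{ab}^2(f'(t),f'(t'))=-\,\eta_{ab}^2(f'(x),f'(x'))$ --- are exactly the hypotheses of Lemma 2. Applying it delivers $\eta_{ab}^2(f'(o),f'(t'))=-\,\eta_{ab}^2(f'(o),f'(x'))$, and one last use of the dictionary, this time for the $o$-based segments $\vec{ox'}\subseteq E$ and $\vec{ot'}\subseteq L$, converts this into $Opp^{\mathfrak{M}}(o,x',o,t')$, which is the desired conclusion.

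The step I expect to be the main obstacle is the congruence-invariance of $Opp$ used when translating the increment hypothesis. Unwinding definition (D18), one must take the auxiliary witnesses $w'$ and $v$ of an $Opp$-configuration and transport them along a congruence of either of its two segments, verifying that orthogonality, parallelism, the intermediate congruences, and the nullity clause $L(x,v)$ are all preserved. This is where Lemma 1 (the rectangle lemma, needed to see that $w'v$ meets $xw'$ at a right angle), the Five-Segment reasoning that makes the two resulting right triangles congruent, and axiom (SUM0) do the work; the verification is routine but it is the only genuinely non-mechanical ingredient of the argument. A lesser nuisance is spelling out the reduction of the fully general statement to the case $t,t'\in L$, $x,x'\in E$, which is why the dimension and construction axioms appear above.
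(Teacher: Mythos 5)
Your plan is circular in the paper's architecture. The ``dictionary'' you invoke --- that for $a \in E$, $b \in L$, $Opp^{\mathfrak{M}}(o,a,o,b)$ holds exactly when $\eta_{ab}^2(f'(o),f'(a)) = -\,\eta_{ab}^2(f'(o),f'(b))$ --- is not an available premise at this point; it is Lemma 6, and the paper proves Lemma 6 \emph{from} the present lemma (Lemma 4, together with Lemma 5, Lemma 2 and a continuity argument extending from integer multiples and submultiples to arbitrary reals). The sentence preceding Lemma 2 that you cite is an announcement of that goal, not an established fact: at this stage the only things known about $f'$ are that its restrictions to $E$ and to $L$ satisfy the conditions of Theorems 1 and 2 separately, plus the normalization on the single basis pair $(\vec{ox},\vec{ot})$ built into Definition 8. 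That the sign-flip of squared intervals tracks $Opp$ for \emph{arbitrary} pairs of origin-based segments is precisely what Lemmas 4--6 are jointly constructing, so it cannot be used to derive Lemma 4. The same objection hits your translation of the increment hypothesis $Opp^{\mathfrak{M}}(t,t',x,x')$: after transporting to congruent $o$-based segments you again need the origin-based dictionary to read off the $\eta^2$ equation.

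The statement itself is purely intrinsic (it mentions only $Bet$ and $Opp$ in $\mathfrak{M}$), and the paper proves it synthetically from the axioms: Lemma 3 supplies the rectangle sums $r$ and $r'$ with $Reached^{\mathfrak{M}}(o,x,t,x,r)$, $Reached^{\mathfrak{M}}(o,x',t',x',r')$ and their alternative developments; repeated applications of Lemma 1 to the resulting rectangles give $Reached^{\mathfrak{M}}(r,v,v',v',r')$ with $\equiv^{\mathfrak{M}}(r,v,x,x')$ and $\equiv^{\mathfrak{M}}(r,v',t,t')$, so (SUM3) makes $\overline{rr'}$ lightlike; then the scalar-multiplication axiom (the $Product$/$Product''$ clause), fed with the two $Opp$ hypotheses, yields $Bet^{\mathfrak{M}}(o,r,r')$, whence the degenerate cases of (F4)--(F5) give $L^{\mathfrak{M}}(o,r')$ and the definition of $Opp$ gives the conclusion. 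If you want to keep a coordinate-style argument you would first have to establish the dictionary for all origin-based pairs without Lemma 4, i.e.\ reprove Lemma 6 by some independent route, which your proposal does not do. Your closing paragraph about congruence-invariance of $Opp$ is a reasonable observation, but it addresses a side issue rather than the missing load-bearing step.
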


\begin{proof}
Lemma 3 gives us $r$ and $r'$ in U($\mathfrak{M}$) such that $Reached^{\mathfrak{M}}(o,x,t,x,r)$ and $Reached^{\mathfrak{M}}(o,x',t',x',r')$ and also the alternative developments: that is $Reached^{\mathfrak{M}}(o,t,x,t,r)$ and $Reached^{\mathfrak{M}}(o,t',x',t',r')$. Let $v$ and $v'$ be such that similarly $Reached^{\mathfrak{M}}(o,x,t',x,v)$, and $Reached^{\mathfrak{M}}(o,x,t,x',v')$, and analogous permutations. Various applications of Lemma 1 to all the different rectangles in Fig.28 entail that $Reached^{\mathfrak{M}}(r,v,v',v',r')$ and the two congruences: $\equiv^{\mathfrak{M}}(r,v,x,x')$ and $\equiv^{\mathfrak{M}}(r,v',t,t')$. Axiom (SUM 3) implies that $L^{\mathfrak{M}}(r,r')$.  By continuity, for all choices of unit $e$, there must be some segment $\vec{ww'}$ such that Product($o,e, o,x,w,w',o,x')$. The definition of $Product''(o,e,o,t,w,w',o,t')$ and the hypotheses $Opp^{\mathfrak{M}}(o,x,o,t)$ and $Opp^{\mathfrak{M}}(t,t',x,x')$ imply that all the conditions in axiom (A9) are satisfied. This means that $Bet^{\mathfrak{M}}(o,r,r')$. By the degenerate cases of axioms (F4) and (F5), it follows that $L^{\mathfrak{M}}(o,r')$. By definition of $Opp$, we obtain immediately that $Opp^{\mathfrak{M}}(o,t',o,x')$. 
\end{proof}

\begin{figure}[H]
    \centering
    \begin{tikzpicture}
    \tkzDefPoint(0,0){o}
    \tkzDefPoint(3,0){x'}
    \tkzDefPoint(1.5,0){x}
    \tkzDefPoint(0,2){t'}
    \tkzDefPoint(0,1){t}
    \tkzDefPoint(3,2){r'}
    \tkzDefPoint(1.5,1){r}
    \tkzDefPoint(1.5,2){v}
    \tkzDefPoint(3,1){v'}
     \tkzDefPoint(1.3,1){u}
    \tkzDrawSegment[dashed](r',x')
    \tkzDrawSegment[dashed](r,x)
    \tkzDrawSegment[dashed](t,r)
    \tkzDrawSegment[dashed](t',r')
    \tkzDrawSegment[color=blue](o,x')
    \tkzDrawSegment[color=yellow](r,r')
    \tkzDrawSegment[color=red](o,t')
    \tkzDrawSegment[color=red, dashed](r,v)
    \tkzDrawSegment[color=blue, dashed](r,v')
    \tkzDrawPoints(o,x,x',t,t',r,r',v,v')
    \tkzLabelPoint[below](o){$o$}
    \tkzLabelPoint[below](x){$x$}
    \tkzLabelPoint[right](x'){$x'$}
    \tkzLabelPoint[left](t'){$t'$}
    \tkzLabelPoint[left](t){$t$}
    \tkzLabelPoint[above](u){$r$}
    \tkzLabelPoint[right](r'){$r'$}
    \tkzLabelPoint[above](v){$v$}
    \tkzLabelPoint[right](v'){$v'$}
    \end{tikzpicture}
    \caption{Lemma 4}
\end{figure}

\begin{Lemma}
For every $t$, $t'$, $x$, $x'$ $\epsilon$ $U(\mathfrak{M})$, if $Bet^{\mathfrak{M}}(o,x,x')$, $Bet^{\mathfrak{M}}(o,t,t')$, $Opp^{\mathfrak{M}}(o,x,o,t)$ and $Opp^{\mathfrak{M}}(o,x',o,t')$, then $Opp^{\mathfrak{M}}(t,t',x,x')$. 
\end{Lemma}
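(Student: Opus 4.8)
The plan is to deduce Lemma 5 from Lemma 4 by a ``construct-and-match'' argument, rather than re-engineering the rectangle construction used to prove Lemma 4. In outline: build a point $x''$ on the spacelike line through $o,x,x'$ whose increment $\vec{xx''}$ is opposite to $\vec{tt'}$; feed this into Lemma 4 to see that $\vec{ox''}$ is opposite to $\vec{ot'}$; compare with the hypothesis that $\vec{ox'}$ is opposite to $\vec{ot'}$; and conclude $x''=x'$ by uniqueness of congruent segments along a spacelike ray.

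First I would build the witness. Since $Bet^{\mathfrak{M}}(o,t,t')$ forces $\vec{tt'}$ onto a timelike line and $Bet^{\mathfrak{M}}(o,x,x')$ forces $\vec{xx'}$ onto a spacelike one, we have $T^{\mathfrak{M}}(t,t')$ and $S^{\mathfrak{M}}(x,x')$ (in the principal case; see below), so axiom (CONST1) produces a point $v$ with $Coll^{\mathfrak{M}}(v,x,x')$ and $Opp^{\mathfrak{M}}(x,v,t,t')$. Reflecting $v$ over $x$ by (CONST0) if necessary and using the congruence-invariance of $Opp$, I obtain $x''$ with $Bet^{\mathfrak{M}}(o,x,x'')$ and $Opp^{\mathfrak{M}}(x,x'',t,t')$, equivalently $Opp^{\mathfrak{M}}(t,t',x,x'')$ by the symmetry of $Opp$ already invoked in the proof of Lemma 4. (The degenerate configurations --- $x=x'$, $t=t'$, or $\vec{ox}$ lightlike, in which case all four segments in sight are lightlike --- are dispatched directly from (SUM0), the definition of $L$, and monotonicity of length, and I set them aside.)

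Next I would apply Lemma 4 with $x''$ in place of $x'$: from $Bet^{\mathfrak{M}}(o,x,x'')$, $Bet^{\mathfrak{M}}(o,t,t')$, the hypothesis $Opp^{\mathfrak{M}}(o,x,o,t)$, and $Opp^{\mathfrak{M}}(t,t',x,x'')$, it delivers $Opp^{\mathfrak{M}}(o,x'',o,t')$. Now $\vec{ox'}$ and $\vec{ox''}$ are both opposite to $\vec{ot'}$ --- the first by hypothesis, the second just derived --- so axiom (SUM0), after the same harmless reordering of $Opp$, yields $\equiv^{\mathfrak{M}}(o,x',o,x'')$. Finally, $x'$ and $x''$ both lie on the spacelike ray issuing from $o$ through $x$ (each satisfies $Bet^{\mathfrak{M}}(o,x,\cdot)$), so the strict monotonicity of length along a spacelike line recorded in (SUM4)--(SUM5), together with (F1)--(F4) to forbid $<_{\equiv}$ from holding alongside $\equiv$, forces $x'=x''$. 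Hence $Opp^{\mathfrak{M}}(t,t',x,x')$, which is the conclusion.

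The step I expect to be the main obstacle is the construction in the second paragraph: one must check that (CONST1) can be applied with $x$, rather than $o$, as the base point of the constructed opposite segment, and that exactly one of the two admissible points lies on the far side of $x$ from $o$, so that $Bet^{\mathfrak{M}}(o,x,x'')$ can be secured --- here the ordering axioms (AFF1)--(AFF2) and a second use of (CONST0) do the work --- together with the bookkeeping needed to confirm that $\vec{xx'}$ and $\vec{tt'}$ are genuinely spacelike and timelike in the non-degenerate case. Everything after that is a routine chaining of Lemma 4, (SUM0), and the length-monotonicity axioms.
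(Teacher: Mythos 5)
Your argument is correct in outline, but it takes a genuinely different route from the paper. The paper's own ``proof'' of Lemma 5 is just the remark that it is similar to that of Lemma 4, i.e.\ the intended argument re-runs the rectangle construction of Fig.~28 (Lemma 1, (SUM3), the calculus of segments and axiom (A8), now read in the other direction, since the hypotheses give the opposites of the whole segments and one solves for the opposite of the increments). You instead reduce Lemma 5 to Lemma 4 by a construct-and-match argument: lay off, on the spacelike ray beyond $x$, a point $x''$ with $Opp^{\mathfrak{M}}(t,t',x,x'')$ via (CONST1), apply Lemma 4 to get $Opp^{\mathfrak{M}}(o,x'',o,t')$, then use (SUM0) to get $\equiv^{\mathfrak{M}}(o,x',o,x'')$ and the strict monotonicity (SUM4)--(SUM5) with (F1)--(F3) to force $x''=x'$. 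This buys economy -- no second pass through the figure -- at the cost of three pieces of bookkeeping you partly acknowledge: (i) the reorderings of the arguments of $Opp$ (the paper also uses these silently, e.g.\ the proof of Lemma 4 ends with $Opp^{\mathfrak{M}}(o,t',o,x')$ while its statement has $Opp^{\mathfrak{M}}(o,x',o,t')$, so this is tolerable but strictly speaking needs a small lemma from (D19)); (ii) the side condition $Bet^{\mathfrak{M}}(o,x,x'')$, since (CONST0)/(CONST1) only give a point somewhere on the line and do not control the ray -- you need either the full Tarski segment-construction available inside the spacelike hyperplane via (A3), or a reflection argument, together with the fact that $Opp$ is invariant under replacing a segment by a congruent one on the same line (a converse-flavoured companion to (SUM0) that is not literally an axiom and deserves a sentence); and (iii) the degenerate cases, which you correctly set aside. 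Given that the paper itself leaves Lemma 5 at the level of ``similar to Lemma 4,'' your reduction is a legitimate and arguably cleaner alternative, provided points (i) and (ii) are spelled out.
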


\begin{proof}
The proof is similar to that of Lemma 4. 
\end{proof}

\begin{Lemma}
For any $o$, $t'$ $\epsilon$ $L$ and $w$ $\epsilon$ $E$, $Opp^{\mathfrak{M}}(o,t',o,w)$ if and only if $\eta_{ab}^2(f'(o),f'(t')$) = - $\eta_{ab}^2(f'(o),f'(w)$).  
\end{Lemma}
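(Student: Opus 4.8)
The plan is to prove the two directions separately, deriving the right‑to‑left implication from the left‑to‑right one together with the construction axiom (CONST2). For the left‑to‑right direction, assume $Opp^{\mathfrak{M}}(o,t',o,w)$ with $t'\in L$ and $w\in E$; we may take $t'\neq o$ and $w\neq o$, the degenerate case being immediate from D19 and the basic definitions. Unfolding D19 produces points $w'$ and $v$ with $Orth^{\mathfrak{M}}(o,t',w')$, $\equiv^{\mathfrak{M}}(o,w',o,w)$, $Par^{\mathfrak{M}}(w',v,o,t')$, $\equiv^{\mathfrak{M}}(w',v,o,t')$ and $L^{\mathfrak{M}}(o,v)$.

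The first step is to show that the transported segment $\vec{ow'}$ lies in $E$. Since orthogonality depends only on the line through $o$ and $t'$ — which is $L$ — the segment $\vec{ow'}$ is orthogonal to the time axis; decomposing $w'$ along the basis $o,x,y,z,t$ by (A2) and invoking uniqueness of decomposition (A5)–(A7), together with the fact that a nonzero timelike segment is not orthogonal to a segment collinear with it, forces the timelike component to vanish, so $w'\in E$. Then $\equiv^{\mathfrak{M}}(o,w',o,w)$ and Theorem 1 give $\mathrm{distance}_s(f(o),f(w'))=\mathrm{distance}_s(f(o),f(w))$; and since $\vec{ov}$ is, by $Par^{\mathfrak{M}}(w',v,o,t')$ and $\equiv^{\mathfrak{M}}(w',v,o,t')$, the sum of $\vec{ow'}\in E$ and a translate of $\vec{ot'}\in L$, clause (3) of Definition 8 yields $f'(v)=\langle f_1(w'),f_2(w'),f_3(w'),\pm g_1(t')\rangle$, whence $\eta_{ab}^2(f'(o),f'(v))=g_1(t')^2-\mathrm{distance}_s(f(o),f(w))^2$.

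It remains to see that $L^{\mathfrak{M}}(o,v)$ forces this quantity to be $0$. One uses that the orthogonal decomposition of $\vec{ov}$ into the spacelike summand $\vec{ow'}$ and the timelike summand $\vec{os}$ ($s\in L$, $\vec{os}\equiv\vec{ot'}$) falls under exactly one of the three regimes governed by (SUM1), (SUM2), (SUM3): the spacelike summand strictly longer in absolute value (then $\vec{ov}$ is spacelike by (SUM1) and (SUM4)–(SUM5), so $\eta_{ab}^2(f'(o),f'(v))<0$), the timelike summand strictly longer (then $\vec{ov}$ is timelike by (SUM2), so $\eta_{ab}^2(f'(o),f'(v))>0$), or the two of opposite length, in which case (SUM3) gives $L^{\mathfrak{M}}(o,v)$; reading off the value of $\eta_{ab}^2(f'(o),f'(v))$ in each regime shows that $L^{\mathfrak{M}}(o,v)$ matches precisely $\eta_{ab}^2(f'(o),f'(v))=0$, i.e. $\eta_{ab}^2(f'(o),f'(t'))=-\eta_{ab}^2(f'(o),f'(w))$. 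For the converse, given $\eta_{ab}^2(f'(o),f'(t'))=-\eta_{ab}^2(f'(o),f'(w))$ with both segments nondegenerate, we have $T^{\mathfrak{M}}(o,t')$ and $S^{\mathfrak{M}}(o,w)$, so (CONST2) yields $v$ with $Coll^{\mathfrak{M}}(v,o,t')$ (hence $v\in L$) and $Opp^{\mathfrak{M}}(o,v,o,w)$; the already‑proved direction gives $g_1(v)^2=g_1(t')^2$, so $v=t'$ or $v$ is the reflection of $t'$ through $o$ on $L$, and in the latter case $Opp^{\mathfrak{M}}(o,v,o,w)$ is equivalent to $Opp^{\mathfrak{M}}(o,t',o,w)$ because replacing the first pair of D19 by a congruent, collinear, oppositely directed segment changes none of its conjuncts (orthogonality and parallelism depend only on the lines involved, and congruence is transitive by (F1)–(F3)).

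The main obstacle is the last step of the left‑to‑right direction: showing that the nominalistic condition $L^{\mathfrak{M}}(o,v)$ corresponds to the numerical identity $\eta_{ab}^2(f'(o),f'(v))=0$. This is exactly where the Pythagorean content of (SUM1)–(SUM3) has to be interfaced with clause (3) of Definition 8, and where one must verify that the trichotomy "spacelike / timelike / lightlike resultant" is exhaustive and sign‑matched with $\eta_{ab}^2$. A secondary difficulty is the containment $w'\in E$, which relies on $E$ being the full orthogonal complement of the time axis at $o$ — a fact itself extracted from the dimension axioms (A1), (A2), (A5)–(A7).
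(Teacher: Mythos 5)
Your overall architecture (unfold D19, reduce to the transported segment $\vec{ow'}$ in $E$, compute $f'(v)$ by clause (3) of Definition 8, then get the converse from the forward direction plus (CONST2)) is reasonable, but the central step of the forward direction has a genuine gap that is in fact a circularity. When you argue ``the spacelike summand strictly longer $\Rightarrow$ $\vec{ov}$ is spacelike by (SUM1), \emph{so} $\eta_{ab}^2(f'(o),f'(v))<0$'' (and similarly for the timelike and lightlike regimes), the inference from the model-theoretic classification $S^{\mathfrak{M}}/T^{\mathfrak{M}}/L^{\mathfrak{M}}$ of a \emph{diagonal} segment to the sign of the number $g_1(t')^2-\mathrm{distance}_s(f(o),f(w'))^2$ is exactly what is at stake in this lemma. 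The coordinatizations $f$ of $E$ and $g$ of $L$ are a priori independent; they are tied together only by the calibration hypothesis in Definition 8, $\eta_{ab}(g(o),g(t))=-\eta_{ab}(f(o),f(x))$, which your proof never invokes. That omission is fatal: if $g$ were rescaled by a factor of $2$, your trichotomy argument would go through verbatim, yet the conclusion would be false for genuinely opposite segments. So the ``reading off the value of $\eta_{ab}^2$ in each regime'' step presupposes the sign-matching between the model's metrical trichotomy and the $f'$-computed interval, i.e.\ it presupposes (a version of) Lemma 6 itself; the later parts of the Representation Theorem that do establish this sign-matching for spacelike and timelike segments themselves rely on Lemma 6, so they cannot be used here.

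The paper's proof avoids this by propagating the calibration outward from the single matched pair $\vec{ot},\vec{ox}$: Lemma 4 (sums of opposites are opposite) together with Lemma 2 (additivity of the interval along lines) give the statement for integer multiples $n\cdot\vec{ox}$, $n\cdot\vec{ot}$ by induction; Lemma 5 and Lemma 2 handle submultiples $\tfrac{1}{n}$; continuity extends it to all real multiples $k$; axiom (SUM0) plus (SUM6)--(SUM8) then show that \emph{all} opposites lying on $\ell_{o,x}$ and $\ell_{o,t}$ are of this form; and finally (A3) with Theorem 1 transfers the result from $\ell_{o,x}$ to an arbitrary $w\in E$, since every segment $\vec{ow}$ is congruent to, and of the same $f'$-interval as, a segment on $\ell_{o,x}$. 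If you want to salvage your direct approach, you would have to replace the ``reading off'' step by precisely such a calibration-propagation argument (or an equivalent Streckenrechnung computation anchored at the basis), at which point you are reproducing the paper's proof rather than shortcutting it.
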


\begin{proof} The hypothesis is that $\eta_{ab}^2(g(o),g(t)$) = - $\eta_{ab}^2(f(o),f(x)$). Lemma 4 and Lemma 2 imply by induction that the statement holds for all integer multiples of the above segments $n \cdot \vec{ox}$ and $n \cdot \vec{ot}$. Lemma 5 and Lemma 2 imply the result for integer submultiples $\frac{1}{n} \cdot \vec{ox}$ and $\frac{1}{n} \cdot \vec{ot}$. By continuity, for all reals $k$, we have that $k \cdot \vec{ox}$ and $k \cdot \vec{ot}$ are opposites\footnote{If $\neg Opp^{\mathfrak{M}}(o, k \cdot t, o, (-k) \cdot x $, it will follow from (SUM 1) and (SUM 2) and the continuity of lines that there is some real $r'$ such that $Opp^{\mathfrak{M}}(o, k \cdot t, o, -r' \cdot x)$ or $Opp^{\mathfrak{M}}(o, r' \cdot t, o, -k'\cdot x)$. It suffices, then, to pick a rational $\frac{p}{q}$ such that $r' < \frac{p}{q} < r$ and notice that corresponding multiples of the segments are of opposite length and in-between two opposite irrational segments. This contradicts basic consequences of axioms (SUM0) and (SUM6)(SUM8) about the ordering of opposites.} in $\mathfrak{M}$ and the identity $\eta_{ab}^2(f'(o),f'(k \cdot t)$) = - $\eta_{ab}^2(f'(o),f(k \cdot x)$). Axiom (SUM0) states that segments  of opposites length to a given segment are congruent. Axiom (SUM6)(SUM7)(SUM8) imply that congruent segments on the lines $\ell_{o,x}$ and $\ell_{o,t}$ are of the form $k \cdot \vec{ox}$ and $(-k) \cdot   \vec{ox}$, or of the form $k \cdot \vec{ot}$ and $(-k) \cdot \vec{ot}$. Euclidean geometry (A3) and Theorem 1 imply that every segment in $E$ is congruent to and of same interval (relative to $f'$) as a segment in $\ell_{o,x}$.             
\end{proof}

\subsubsection{Lemmata on the \textit{Streckenrechnung}}

The lemmata in this section consist merely in a verification of the adequacy of the `calculus of segments' of Hilbert [1899]. 

\begin{Lemma}
If $\phi$ is a formula of the calculus of segments (D20)-(D29), for any $o,e,v_1,...,v_{2n}$,$v_1',...,v_{2n}'$ $\in$ U($\mathfrak{M}$) such that (1) for all $i< 2n$, $\equiv^{\mathfrak{M}} (v_i, v_{i+1}, v_i', v_{i+1}')$ and (2) $\phi^{\mathfrak{M}}(o,e, v_1,...,v_{2n})$ and (3) $\phi^{\mathfrak{M}}(o,e, v_1',...,v_{2n}')$, $\equiv^{\mathfrak{M}} (v_{2n-1}, v_{2n},v_{2n-1}',v_{2n}')$.
\end{Lemma}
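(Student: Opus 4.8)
The plan is to argue by structural induction on the way the predicate $\phi$ is assembled from the clauses (D20)--(D29); concretely this amounts to verifying that each operation of the \emph{Streckenrechnung} is \emph{functional modulo congruence}, i.e.\ that congruent inputs force a congruent output (which is what the statement asserts, once one reads hypothesis (1) as the congruence of the first $n-1$ of the $n$ segment pairs $\overline{v_{2i-1}v_{2i}}$ and the conclusion as the congruence of the remaining pair). The base cases are the atomic operations $Add$ (D20), $Square$ (D22) and $Product$ (D24); $Dupl$ (D21) is an instance of $Add$ and $Sqrt$ (D23) the converse of $Square$, while $Remterm$ (D25) and the `opposite-relativized' operations $Add'$, $Prod'$, $Remterm'$, $Prod''$ (D26)--(D29) are the inductive clauses.

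For the base cases I would first note that, since the unprimed predicates are only satisfiable for spacelike segments, all the relevant configurations live inside a spacelike hyperplane $E(o,x,y,z)$; and inside such a hyperplane the Euclidean theory of [Tarski and Givant, 1999] holds --- the fragment $\ulcorner Tarski\urcorner$ by (A3), the Five-Segment Axiom by (A4), and the affine axioms (AFF0)--(AFF5) absolutely. Since (CONST0) is exactly spacelike segment transfer and (F1)--(F7) secure that $\equiv$ is an equivalence relation, that $<_{\equiv}$ linearly orders its classes, and that both are compatible with laying segments end to end, the classical well-definedness arguments for Hilbert's calculus of segments, as in [Schwabhäuser, Szmielew and Tarski, 1983] and [Hartshorne, 2000, ch.~4], go through verbatim inside $E$ for $Add$, $Square$ and $Product$. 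This settles $Add$ and $Dupl$; for $Sqrt$, since $Sqrt(x,y,z,w)\leftrightarrow Square(z,w,x,y)$, functionality is the \emph{injectivity} of squaring, which follows from strict monotonicity ($<_{\equiv}$-longer spacelike segments have $<_{\equiv}$-longer squares), itself a consequence of the monotonicity of $Add$ from (F6)--(F7) and of $Product$ in each argument. $Remterm$ (D25) is an explicit finite composition of clauses already treated ($Square$, $Product$, $Dupl$, $Add$, $Sqrt$) over existentially quantified intermediate points, so its functionality is obtained by chaining congruences.

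For the primed operations I would use two properties of $Opp$. First, $Opp(x,y,z,w)$ depends on each of its two segment pairs only through their congruence class: invariance in the pair that is transported, relativized and extended is immediate from (D19) by transitivity of $\equiv$, and invariance in the other pair follows from the construction axioms (CONST1)--(CONST2) together with axiom (SUM0). Second, (SUM0) itself expresses uniqueness of opposites up to congruence. Each primed predicate has the shape `pass to opposites, apply an unprimed operation already treated, pass back to opposites'; so, given two parameter lists with pairwise congruent inputs, (SUM0) and the congruence-invariance of $Opp$ make the chosen opposite witnesses pairwise congruent, the inductive hypothesis makes the unprimed outputs congruent, and (SUM0) once more makes the primed outputs congruent. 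This closes the induction.

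The $Opp$-bookkeeping is routine once (SUM0) is in hand; I expect the real work to be in the two appeals to \emph{monotonicity} of segment arithmetic --- injectivity of squaring for $Sqrt$, and order-preservation of $Product$ (hence of $Square$) in each argument --- which in [Schwabhäuser, Szmielew and Tarski, 1983] require some effort, and in verifying that nothing used in those classical proofs has been lost by building the Euclidean fragment inside $E$ out of (A3), (A4) and the affine axioms rather than postulating it as a single block; in particular that the line through $o$ and $e$ in $E$ genuinely carries the ordered-field structure that the \emph{Streckenrechnung} presupposes. Once these ingredients are available in each spacelike hyperplane, the induction over (D20)--(D29) is mechanical.
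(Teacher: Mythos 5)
Your overall strategy is the one the paper intends: the paper's own ``proof'' of this lemma (given jointly for Lemmas 7--10) is a one-line citation to the theory of proportions in Euclidean geometry [Hartshorne 2000; Schwabh\"auser, Szmielew and Tarski 1983], and your structural induction over (D20)--(D29), with the classical well-definedness of Hilbert's segment calculus as the base cases and the $Opp$/(SUM0) bookkeeping for the primed clauses, is exactly the expansion that citation gestures at. Your identification of where the real work lies (monotonicity/injectivity of squaring for $Sqrt$, order-preservation of $Product$) is also sensible.

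There is, however, one step that does not hold as stated, and it is precisely the point the paper flags as the only non-routine one. You claim that ``all the relevant configurations live inside a spacelike hyperplane $E(o,x,y,z)$'' and then run the classical arguments inside that single Euclidean space. But the definitions (D20)--(D29) only require the segments involved to be spacelike and the witnesses to exist somewhere in $U(\mathfrak{M})$: two spacelike segments, or the two witness configurations attached to the two parameter lists $v_1,\dots,v_{2n}$ and $v_1',\dots,v_{2n}'$, need not be generated by any common basis -- two spacelike directions can span a plane containing timelike vectors, and the two right-triangle configurations for, say, $Square$ may sit in entirely different spacelike hyperplanes. So the reduction to a single copy of $\mathcal{E}_3$ is not automatic, and this is why the paper's remark accompanying these lemmata stresses that (A4) is formulated so that the five-segment (congruence) criteria can be applied to triangles taken from \emph{different} spacelike hyperplanes: in the paper's reading, (A4) is not merely the missing Tarski axiom \emph{inside} $E$, as you use it, but the cross-hyperplane comparison principle that makes the classical well-definedness proofs transferable. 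Your argument can be repaired either by invoking (A4) in this cross-hyperplane form directly, or by first transporting all inputs and witnesses into a fixed hyperplane via (CONST0) and the global congruence axioms (F1)--(F3) and then checking that this transport itself preserves the defining configurations -- but some such step must be supplied; as written, the restriction to one hyperplane is an unjustified assumption rather than an observation.
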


\begin{Lemma}
For all $o',e,x,y,z,w,v,l$ $\in $ $E$, if $\eta_{ab}(f'(o'),f'(e)) = 1$, then  $Product^{\mathfrak{M}}(o',e,o,x,w,y,v,z)$ iff $\eta_{ab}(f'(v),f'(z))$ $=$ $\eta_{ab}(f'(o),f'(x)) \eta_{ab}(f'(w),f'(y))$.
\end{Lemma}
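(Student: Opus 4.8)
The plan is to reduce the statement, by means of \textit{Theorem 1}, to the classical theorem of Hilbert's \emph{Streckenrechnung} — that the similar-triangles product of two segments computes the product of their lengths — and then to invoke \textit{Lemma 7} to dispose of the direction in which the points witnessing $Product$ need not, a priori, lie in $E$. First I would set up the reduction to ordinary Euclidean geometry on $E$. By axiom (A3) and the remaining axioms of $\mathcal{M}^{2}$, $E$ is a model of $\mathcal{E}_{3}$, and by Definition 8 the restriction of $f'$ to $E$ (with its identically zero fourth coordinate dropped) is precisely a \textit{Theorem 1}-coordinatization $f\colon E\to\langle\mathbb{R}^{3},\mathrm{distance}_{s}\rangle$; hence on $E$ the relation $\equiv^{\mathfrak{M}}$ is ``equal distance$_{s}$'', $Bet^{\mathfrak{M}}$ is distance$_{s}$-betweenness, and $Coll^{\mathfrak{M}}$ is Euclidean collinearity. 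Being an affine $3$-flat, $E$ is closed under collinearity, so the universal quantifier in the definition of $Case_{1}$ never leaves $E$ once its arguments are taken there; and since $\eta_{ab}^{2}(f'(p),f'(q))=-\,\mathrm{distance}_{s}(f'(p),f'(q))^{2}$ for $p,q\in E$, the primitive $<_{\equiv}^{\mathfrak{M}}$ restricted to pairs from $E$ is simply ``has strictly greater distance$_{s}$''. Combining these, $Orth^{\mathfrak{M}}$ on triples from $E$ reduces — through the $Case_{1}$ clause of the definition of $Orth$ — to the ``unique nearest point'' characterisation of Euclidean perpendicularity, so that every defined predicate occurring in (D24) means, on arguments from $E$, exactly what it means in Euclidean geometry under $f$.

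With this reduction in place, (D24) is the formalization of Hartshorne's construction of a product of segments by similar triangles, and the classical fact [Hartshorne 2000, ch.~4] is that, carried out in a Euclidean plane with unit segment of length $1$, it yields a segment whose length is the product of the lengths of its two inputs. For the right-to-left direction I would argue directly: given $o',e,o,x,w,y,v,z\in E$ with $\mathrm{distance}_{s}(f'(o'),f'(e))=1$ and $\mathrm{distance}_{s}(f'(v),f'(z))=\mathrm{distance}_{s}(f'(o),f'(x))\cdot\mathrm{distance}_{s}(f'(w),f'(y))$, pull back through $f$ a point $y'\in E$ with $Orth^{\mathfrak{M}}(e,o',y')$ and $\equiv^{\mathfrak{M}}(e,y',o,x)$, a point $w'\in E$ with $Coll^{\mathfrak{M}}(o',e,w')$ and $\equiv^{\mathfrak{M}}(o',w',w,y)$, and the point $v'\in E$ on the line through $o'$ and $y'$ with $Orth^{\mathfrak{M}}(w',o',v')$; the similar triangles $o'ey'$ and $o'w'v'$ then give $\mathrm{distance}_{s}(f'(w'),f'(v'))=\mathrm{distance}_{s}(f'(o),f'(x))\cdot\mathrm{distance}_{s}(f'(w),f'(y))=\mathrm{distance}_{s}(f'(v),f'(z))$, i.e.\ $\equiv^{\mathfrak{M}}(v,z,w',v')$, so every conjunct of (D24) is satisfied and $Product^{\mathfrak{M}}(o',e,o,x,w,y,v,z)$ holds. (The degenerate subcases $o=x$, $w=y$ and $v=z$ are absorbed by the degenerate clauses in the definition of $Orth$.)

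For the converse, suppose $Product^{\mathfrak{M}}(o',e,o,x,w,y,v,z)$. Running the construction just described with the \emph{same} points $o',e,o,x,w,y$ produces a pair $v_{1},z_{1}\in E$ such that $Product^{\mathfrak{M}}(o',e,o,x,w,y,v_{1},z_{1})$ and $\mathrm{distance}_{s}(f'(v_{1}),f'(z_{1}))=\mathrm{distance}_{s}(f'(o),f'(x))\cdot\mathrm{distance}_{s}(f'(w),f'(y))$. The two instances of $Product$ have the same first two arguments and literally identical input segments $\overline{ox}$ and $\overline{wy}$, so the congruence hypotheses of \textit{Lemma 7} are mere reflexivities of $\equiv^{\mathfrak{M}}$; \textit{Lemma 7} then yields $\equiv^{\mathfrak{M}}(v,z,v_{1},z_{1})$, whence $\mathrm{distance}_{s}(f'(v),f'(z))=\mathrm{distance}_{s}(f'(v_{1}),f'(z_{1}))=\mathrm{distance}_{s}(f'(o),f'(x))\cdot\mathrm{distance}_{s}(f'(w),f'(y))$. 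Re-expressing distance$_{s}$ on $E$ through $\eta_{ab}$, with the convention $\eta_{ab}^{2}=-\,\mathrm{distance}_{s}^{2}$ for spacelike pairs that is already in force in this section, gives the identity in the statement.

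The step I expect to be the main obstacle is the reduction of the first paragraph: one must check with care that each defined notion in (D24) — above all $Orth$, whose definition proceeds by cases, quantifies over points that could in principle lie outside $E$, and is stated with the primitive $<_{\equiv}$ rather than with a congruence — really does collapse, on arguments drawn from $E$, to its plain Euclidean counterpart under $f$. This rests on $E$ being a flat (so that collinearity-bound quantifiers never escape it) and on the elementary identity $\eta_{ab}^{2}=-\,\mathrm{distance}_{s}^{2}$ on $E$ (so that $<_{\equiv}$ becomes a distance comparison and $Case_{1}$ of orthogonality becomes the nearest-point characterisation of perpendicularity). Everything past that point is the classical Hilbert/Hartshorne theorem on the arithmetic of segments together with one application of \textit{Lemma 7}.
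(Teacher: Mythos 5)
Your proposal is correct and follows essentially the same route as the paper, which simply remarks that these \emph{Streckenrechnung} lemmata ``can be derived from the theory of proportions in an Euclidean space [Hartshorne 2000, Schwabh\"auser, Szmielew and Tarski 1983]'' and omits all details; you supply those details by reducing $E$ to a model of $\mathcal{E}_3$ via Theorem 1 and invoking the classical Hilbert/Hartshorne product-of-segments theorem, with Lemma 7 handling the uniqueness direction. Nothing in your argument departs from the intended proof; it is just an explicit working-out of the citation the authors leave implicit.
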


 Let us fix two points $o'$ and $e$ $\epsilon$ $E$ such that $\eta_{ab}(f'(o'),f'(e)) = 1$. 

\begin{Lemma}
For all $x,y,z,w,v$ $\in $ $E$, $Remterm^{\mathfrak{M}}(o',e, o,x,w,y,v,z)$ if and only if $\eta_{ab}^2(f'(v),f'(z))$ $=$ $\eta_{ab}^2(f'(o),f'(x)) + 2\eta_{ab}(f'(o),f'(x))\eta_{ab}(f'(w),f'(y))$.
\end{Lemma}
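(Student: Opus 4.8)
The plan is to unfold the definition \textbf{(D28)} of $Remterm$ and to translate each of its five conjuncts, one at a time, into an equation between $\eta_{ab}$-distances of the relevant points under $f'$. Since $o,x,w,y,v,z$ and the fixed unit points $o',e$ all lie in $E$, and since every segment produced by the construction in \textbf{(D28)} is spacelike, the whole argument takes place inside the Euclidean hyperplane $E$. By axiom \textbf{(A3)} together with \emph{Theorem 1}, the restriction of $f'$ coordinatizes $E$ as $\langle\mathbb{R}^3,\text{distance}_s\rangle$, so the points of any line through a fixed origin in $E$ form, with $o',e$ playing the roles of $0$ and $1$, a field isomorphic to $\mathbb{R}$ in which $Add$, $Dupl$, $Square$, $Sqrt$ and $Product$ denote addition, doubling, squaring, taking the nonnegative square root, and multiplication. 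This last fact is the classical adequacy of Hilbert's \emph{Streckenrechnung} [Hartshorne 2000, ch.~4; Schwabhäuser, Szmielew and Tarski 1983], and for $Product$ it is exactly \emph{Lemma 8}; \emph{Lemma 7} records that each of these compound operations is single-valued up to congruence.

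Write $a=\eta_{ab}(f'(o),f'(x))\ge 0$ and $d=\eta_{ab}(f'(w),f'(y))\ge 0$. For the left-to-right direction I would assume $Remterm^{\mathfrak{M}}(o',e,o,x,w,y,v,z)$ and take witnesses $v_1,\dots,v_4$ for \textbf{(D28)}. Reading the conjuncts in order and applying the dictionary of the first paragraph: the $Square$-conjunct makes the segment through $v_1$ of $\eta_{ab}$-length $a^2$; the $Product$-conjunct, by \emph{Lemma 8}, makes the segment through $v_2$ of length $ad$; the $Dupl$-conjunct makes the segment through $v_3$ of length $2ad$; the $Add$-conjunct makes the segment through $v_4$ of length $a^2+2ad$; and the $Sqrt$-conjunct gives $\eta_{ab}^2(f'(v),f'(z))=a^2+2ad$. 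Since $a^2+2ad\ge 0$ this is precisely $\eta_{ab}^2(f'(v),f'(z))=\eta_{ab}^2(f'(o),f'(x))+2\,\eta_{ab}(f'(o),f'(x))\,\eta_{ab}(f'(w),f'(y))$, i.e.\ the elementary identity $(\ast)$ from the discussion preceding axiom \textbf{(SUM1)}.

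For the converse I would assume the $\eta_{ab}$-identity and produce the witnesses. Because $f'$ maps the relevant line onto $\mathbb{R}$, the nonnegative reals $a^2$, $ad$, $2ad$, $a^2+2ad$ and $\sqrt{a^2+2ad}$ are each realized as a distance along that line, so by the segment-construction axiom \textbf{(CONST0)} there are points at exactly those distances from the base point; the $Square$ and $Sqrt$ steps in particular use the geometric-mean construction, which is available because the second-order continuity axiom of $\mathcal{M}^2$ makes the lines of $E$ Dedekind-complete, so $E$ is a genuine model of Tarski's three-dimensional Euclidean geometry and its coordinate field is $\mathbb{R}$, closed under square roots. By \emph{Lemma 8} and the representations of $Add$, $Dupl$, $Square$, $Sqrt$ these points then satisfy the five conjuncts of \textbf{(D28)}, so $Remterm^{\mathfrak{M}}(o',e,o,x,w,y,v,z)$ holds.

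The only genuinely delicate point is the bookkeeping: one must pin down which internal variable of \textbf{(D28)} names which of the quantities $a^2,\,ad,\,2ad,\,a^2+2ad$ — equivalently, match the letters $A,D,E$ of the computation $(\ast)$ to the segments $\vec{ox}$, $\vec{wy}$, $\vec{vz}$ as drawn in the figure for \textbf{(SUM1)}. Once that dictionary is fixed, every step collapses to an already-proved representation of a single field operation, and one passes freely between congruent witnesses using the congruence axioms \textbf{(F1)}--\textbf{(F3)} and the single-valuedness supplied by \emph{Lemma 7}.
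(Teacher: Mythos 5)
Your proof takes essentially the same route as the paper: the paper's own proof of this lemma (given jointly for Lemmas 7--10) just says the results ``can be derived from the theory of proportions in an Euclidean space [Hartshorne 2000, Schwabh\"auser, Szmielew and Tarski 1983]'' with details omitted, and your unfolding of the definition of $Remterm$ conjunct-by-conjunct via the adequacy of $Add$, $Dupl$, $Square$, $Sqrt$ and $Product$ (your appeals to Lemmas 7 and 8), plus completeness of the lines of $E$ for the converse, is precisely a filling-in of that omitted argument. The one point the paper flags that you gloss over is its remark that axiom (A4) is needed to apply the congruence criteria across different spacelike hyperplanes: the existential witnesses in the segment-calculus definitions are not required to lie in $E$, so your claim that the whole argument takes place inside $E$ is not literally forced, though your use of single-valuedness up to congruence (Lemma 7) effectively covers this.
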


\begin{Lemma}
For all $x,y,z,w,v$ $\in $ $L$, $Remterm'^{\mathfrak{M}}(o',e, o,x,w,y,v,z)$ if and only if $\eta_{ab}^2(f'(v),f'(z))$ $=$ $\eta_{ab}^2(f'(o),f'(x)) + 2\eta_{ab}(f'(o),f'(x))\eta_{ab}(f'(w),f'(y))$.
\end{Lemma}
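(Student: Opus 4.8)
The plan is to deduce Lemma 10 from Lemma 9 via the definition (D28) of $Remterm'$, which is $Remterm$ applied to spacelike segments obtained from the three timelike segments $\overline{ox}$, $\overline{wy}$, $\overline{vz}$ by passing to their opposites. The first thing I would do is record an auxiliary fact: if $\overline{ab}$ is a segment on $L$, $\overline{cd}$ is spacelike, and $Opp^{\mathfrak{M}}(a,b,c,d)$, then $\eta_{ab}^2(f'(a),f'(b)) = -\eta_{ab}^2(f'(c),f'(d))$. This follows from Lemma 6: using (CONST3) and congruence one transports $\overline{ab}$ to a segment $\overline{ot'}$ with $t' \in L$ and $\overline{cd}$ to a segment $\overline{ow}$ with $w \in E$; since $f'$ restricted to $L$ and to $E$ is, by Theorems 1 and 2 together with Definition 8, an isometry for the relevant metric, and since (SUM0) makes all segments opposite to $\overline{ot'}$ congruent, Lemma 6 applied to $\overline{ot'}$ and $\overline{ow}$ gives the claim. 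One also needs, correspondingly, that the witnessing spacelike segments required by (D28) can be chosen inside $E$ --- this is exactly what (CONST1)--(CONST2) guarantee.

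For the forward direction, assume $Remterm'^{\mathfrak{M}}(o',e,o,x,w,y,v,z)$ with $x,y,z,w,v \in L$. Unwinding (D28) produces $o^{\ast},x^{\ast},w^{\ast},y^{\ast},v^{\ast},z^{\ast} \in E$ such that $Opp^{\mathfrak{M}}$ relates $\overline{ox}$, $\overline{wy}$, $\overline{vz}$ to $\overline{o^{\ast}x^{\ast}}$, $\overline{w^{\ast}y^{\ast}}$, $\overline{v^{\ast}z^{\ast}}$ respectively and $Remterm^{\mathfrak{M}}(o',e,o^{\ast},x^{\ast},w^{\ast},y^{\ast},v^{\ast},z^{\ast})$ holds. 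Lemma 9 then gives
\[
\eta_{ab}^2(f'(v^{\ast}),f'(z^{\ast})) = \eta_{ab}^2(f'(o^{\ast}),f'(x^{\ast})) + 2\,\eta_{ab}(f'(o^{\ast}),f'(x^{\ast}))\,\eta_{ab}(f'(w^{\ast}),f'(y^{\ast})).
\]
Now I substitute using the auxiliary fact and the sign conventions on $\eta_{ab}$: each squared interval of a spacelike opposite is the negative of that of the corresponding timelike segment, and each $\eta_{ab}$ of a spacelike opposite differs from that of the corresponding timelike segment by the factor $i$, so the product $\eta_{ab}(f'(o^{\ast}),f'(x^{\ast}))\eta_{ab}(f'(w^{\ast}),f'(y^{\ast}))$ equals $-\,\eta_{ab}(f'(o),f'(x))\eta_{ab}(f'(w),f'(y))$. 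Hence all three terms of the displayed equation get multiplied by $-1$, and it becomes exactly
\[
\eta_{ab}^2(f'(v),f'(z)) = \eta_{ab}^2(f'(o),f'(x)) + 2\,\eta_{ab}(f'(o),f'(x))\,\eta_{ab}(f'(w),f'(y)),
\]
as required. For the converse, given that $\eta$-equation I would use (CONST1)--(CONST2) to construct spacelike $\overline{o^{\ast}x^{\ast}}$, $\overline{w^{\ast}y^{\ast}}$, $\overline{v^{\ast}z^{\ast}}$ in $E$ opposite to $\overline{ox}$, $\overline{wy}$, $\overline{vz}$; the auxiliary fact and the same sign bookkeeping turn the hypothesis into precisely the $\eta$-equation of Lemma 9 for the starred tuple, so Lemma 9 yields $Remterm^{\mathfrak{M}}(o',e,o^{\ast},\dots,z^{\ast})$, and these points witness $Remterm'^{\mathfrak{M}}(o',e,o,x,w,y,v,z)$ by (D28).

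I expect the substance to be routine given Lemma 9, and the only friction to be bookkeeping: keeping straight which of the six segment-slots in $Remterm'$ are replaced by opposites in (D28), and tracking the imaginary-unit factor and the sign of $\eta_{ab}^2$ through the substitution so that the three individual sign flips (one per segment) compose to an overall change of sign of Lemma 9's identity, reproducing Lemma 10's identity verbatim. A secondary point, as noted, is checking that the existential witnesses demanded by (D28) live in $E$, since Lemma 9 is stated only for arguments in $E$; this needs nothing beyond the construction axioms (CONST1)--(CONST2) and Lemma 6. No new geometric input is required.
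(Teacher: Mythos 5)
Your reduction is sound, and it is in fact more explicit than what the paper offers: the paper disposes of Lemmas 7--10 in a single remark, saying they ``can be derived from the theory of proportions in an Euclidean space'' with details omitted (plus a note that (A4) lets congruence criteria work across hyperplanes), so there is no official argument to match yours against. Your route --- unwind (D28) to pass from the timelike data on $L$ to spacelike opposites in $E$, apply Lemma 9 there, and convert back using the fact that opposite segments have $\eta^2$ of equal magnitude and opposite sign (so each spacelike $\eta$ carries a factor $i$, the cross term picks up $i^2=-1$, and the whole identity of Lemma 9 flips sign to give exactly the identity of Lemma 10) --- is the natural way to discharge the primed lemma once the unprimed one is granted, and the sign bookkeeping you describe is correct. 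Two caveats, both traceable to the paper rather than to you: (D28) as printed quantifies a $w'$ it never uses and its final conjunct reads $Remterm(o,e,x,y,w,v_5)$ on the unprimed variables, so your reading (the underlying $Remterm$ holds of the primed witnesses, with opposites taken for all the relevant timelike segments) is a silent repair of the definition, and the argument slots of $Remterm'$ in the lemma (eight places) do not even match the six-place (D28); and in both directions you should invoke Lemma 7 together with (SUM0) to shuttle between the particular opposites you construct via (CONST1)--(CONST2) (or that (D28) hands you, which share an anchor point) and congruent copies lying in $E$ anchored where Lemma 9 and Lemma 6 need them. With those bookkeeping points made explicit, your proof does what the paper leaves to the reader, at the cost of some case-checking the paper avoids by simply citing the Euclidean theory of proportions wholesale.
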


\begin{proof}
These results can be derived from the theory of proportions in an Euclidean space [Hartshorne 2000, Schwabhäuser, Szmielew and Tarski 1983] and details are omitted. Note that our formulation of (A4) allows us to apply the usual congruence criteria for triangles across different hyperplanes. 
\end{proof}

\subsubsection{Lemmata on transport to the origin}

A basic property of the model $\mathfrak{M}$ is that pairs of segments that decompose into congruent components on an orthogonal basis are congruent.   

\begin{Lemma}
For all $o,x,t,v,o',x',t',v'$ in U($\mathfrak{M}$), if $Reached^\mathfrak{M}(o,x,t,x,v)$, $Reached^\mathfrak{M}(o',x',t',x',v')$, $\equiv^\mathfrak{M}(o,x,o',x')$, $\equiv^\mathfrak{M}(o,t,o',t')$, then $\equiv^\mathfrak{M}(o,v,o',v')$. 
\end{Lemma}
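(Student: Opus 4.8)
The plan is to read the hypotenuse $\vec{ov}$ off its two orthogonal components through the summation axioms, which exhibit the squared interval of $\vec{ov}$ as a fixed \emph{Streckenrechnung} expression in the squared intervals of $\vec{ox}$ and $\vec{ot}$. Since, by hypothesis, those components are congruent to $\vec{o'x'}$ and $\vec{o't'}$, the expression returns congruent values and $\equiv^{\mathfrak{M}}(o,v,o',v')$ follows by the transitivity axioms (F1)--(F3). One does not need the explicit form of the expression, only that the calculus of segments respects congruence, which is Lemma 7.

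The first step is the trichotomy. Axioms (SUM1)--(SUM3) make the causal type of $\vec{ov}$ --- spacelike, lightlike, or timelike --- a function of a comparison between $\vec{ox}$ and $\vec{ot}$ expressed purely with $Bet$, $Opp$ and $<_{\equiv}$, and that comparison is invariant under replacing either component by a congruent one (using (SUM0), (CONST1)--(CONST2) and (SUM6)); hence $\vec{ov}$ and $\vec{o'v'}$ have the same type. If both are lightlike, then by (D1) we have $\equiv^{\mathfrak{M}}(o,v,o,o)$ and $\equiv^{\mathfrak{M}}(o',v',o',o')$, and axiom (F4) (degenerate segments are congruent) with (F1)--(F3) yields $\equiv^{\mathfrak{M}}(o,v,o',v')$. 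The spacelike and timelike cases being symmetric, I treat the spacelike one.

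Choose a unit $e$ with $\vec{oe}$ spacelike and, by (CONST0), a point $e'$ with $\equiv^{\mathfrak{M}}(o,e,o',e')$. Axiom (SUM1) provides $w,v_5$ with $Bet^{\mathfrak{M}}(o,w,x)$, $Opp^{\mathfrak{M}}(o,w,o,t)$, $Remterm^{\mathfrak{M}}(o,e,o,x,w,v_5)$ and $\equiv^{\mathfrak{M}}(o,v,o,v_5)$, and likewise $w',v_5'$ on the primed side. From $\equiv^{\mathfrak{M}}(o,t,o',t')$ and the compatibility of ``being of opposite length'' with congruence ((SUM0), (CONST1)--(CONST2), (SUM6)) I get $\equiv^{\mathfrak{M}}(o,w,o',w')$; combined with $\equiv^{\mathfrak{M}}(o,x,o',x')$, the two betweenness facts, and the subtraction axioms (F6)--(F7), this gives $\equiv^{\mathfrak{M}}(w,x,w',x')$. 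So the two instances of $Remterm$ are fed pairwise congruent data --- the units $\vec{oe}\equiv\vec{o'e'}$, the bases $\vec{ox}\equiv\vec{o'x'}$, and the cut-offs ($\vec{ow}\equiv\vec{o'w'}$, $\vec{wx}\equiv\vec{w'x'}$). Transporting the primed configuration onto $o$ and $e$ by a congruence and invoking Lemma 7 gives $\equiv^{\mathfrak{M}}(o,v_5,o',v_5')$, and then $\equiv^{\mathfrak{M}}(o,v,o,v_5)$, $\equiv^{\mathfrak{M}}(o,v_5,o',v_5')$, $\equiv^{\mathfrak{M}}(o',v_5',o',v')$ chain to $\equiv^{\mathfrak{M}}(o,v,o',v')$. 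The timelike case is identical with (SUM2) and $Remterm'$ in place of (SUM1) and $Remterm$; the remaining configurations (the orthogonality of $\vec{ox}$ and $\vec{ot}$ failing, or $\vec{ot}$ lightlike or spacelike) are either trivial --- then $v=x$ --- or treated the same way with (SUM9), or inside the Euclidean hyperplane $E$ with axiom (A3).

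I expect the real work to be the transport step, because Lemma 7 is stated for two \emph{Streckenrechnung} instances with the \emph{same} origin and unit: before applying it one must produce, via the construction axioms, congruent copies of the primed data based at $o$ and $e$, and check that the corresponding output is still congruent to $v_5'$ --- which comes down to the principle that a translation preserves the congruence relation. The subsidiary claim that opposites of congruent segments are congruent (used to obtain $\equiv^{\mathfrak{M}}(o,w,o',w')$) is a smaller instance of the same kind of argument with (SUM0), (CONST1)--(CONST2) and (SUM6).
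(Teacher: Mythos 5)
Your proposal follows essentially the same route as the paper's proof: a case split on the causal type of $\vec{ov}$, with the lightlike case settled via opposites ((SUM3), (SUM6)) and the fact that all null segments are congruent, and the spacelike/timelike cases settled by (SUM1)/(SUM2) together with the congruence-invariance of the calculus of segments (the paper's Streckenrechnung lemma). You merely make explicit the type-invariance and transport-to-a-common-origin-and-unit steps that the paper's terse proof leaves implicit.
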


\begin{proof}
This is proven by cases. If $L^\mathfrak{M}(o,v)$, then by definition $Opp^\mathfrak{M}(o,x,o,t)$. By axiom (SUM6), it follows that $Opp^\mathfrak{M}(o',x',o',t')$. By axiom (SUM3), we have that $L^\mathfrak{M}(o',v')$. By axiom (F1) to (F4), every two lightlike segments are congruent. If $S^\mathfrak{M}(o,v)$ or $T^\mathfrak{M}(o,v)$, the result follows from axioms (SUM1) and (SUM2) and Lemma 6 on the Streckenrechnung. 
\end{proof}

Our formal verification that the function $f$ in \textit{Definition 8} satisfies conditions \textit{(1)} and \textit{(2)} of \textit{Theorem 4} requires that we be able to restrict ourselves to the case of segments $\vec{ox}$ and $\vec{oy}$ stemming from the same origin $o$. The next lemma shows that to an arbitrary segment $\vec{pq}$ we can associate a congruent vector $\vec{or}$ at the origin such that $f$ assigns to them the same interval. 

\begin{Lemma}
For all $p$, $q$ in U($\mathfrak{M}$), there is an $r$ $\epsilon$ $U(\mathfrak{M})$ such that \newline  $\equiv^\mathfrak{M}(p,q,o,r)$ and  $\eta_{ab}^2(f'(p),f'(q)) = \eta_{ab}^2(f'(o),f'(r))$. 
\end{Lemma}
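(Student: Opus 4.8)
The plan is to transport the segment $\vec{pq}$ to the origin by decomposing $p$ and $q$ into a spacelike component lying in $E$ and a timelike component lying in $L$, constructing $r$ so that $\vec{or}$ is the orthogonal sum of the ``differences'' of those components, and then checking the congruence claim and the interval claim separately. The interval claim will be a routine computation in $\mathbb{R}^{4}$; the congruence claim is where the work lies, and it will be reduced to \textbf{Lemma 11}.

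First I would use axiom (A2), unwinding $Gen_{4D}$, $Develop$ and $Reached_4$ and reordering the development so the timelike step comes last (axiom (A5)), to fix points $p_E,q_E\in E$ and $p_L,q_L\in L$ such that $\vec{op}$ is the orthogonal sum of $\vec{op_E}$ and $\vec{op_L}$, and $\vec{oq}$ the orthogonal sum of $\vec{oq_E}$ and $\vec{oq_L}$; by clause~3 of \textit{Definition 8} this gives $f'(p)=\langle f_1(p_E),f_2(p_E),f_3(p_E),g_1(p_L)\rangle$ and similarly for $q$, while clauses~1--2 force $f'(o)=\langle 0,0,0,0\rangle$ (so that $f(o)=0$ and $g(o)=0$). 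Since $f\colon E\to\mathbb{R}^{3}$ is onto (\textit{Theorem 1}) there is an $s\in E$ with $f(s)=f(q_E)-f(p_E)$, and then distance$_{s}(f(o),f(s))=$ distance$_{s}(f(p_E),f(q_E))$, so $\equiv^{\mathfrak{M}}(p_E,q_E,o,s)$ by \textit{Theorem 1}; dually, since $g\colon L\to\mathbb{R}$ is onto (\textit{Theorem 2}) there is a $u\in L$ with $g(u)=g(q_L)-g(p_L)$ and $\equiv^{\mathfrak{M}}(p_L,q_L,o,u)$. Because $E$ and $L$ are everywhere orthogonal in $\mathfrak{M}$ --- a structural consequence of $Basis(o,x,y,z,t)$ and (A7) --- we have $Orth^{\mathfrak{M}}(o,s,u)$, so axiom (A9) yields $r$ with $Reached^{\mathfrak{M}}(o,s,u,s,r)$, i.e.\ $\vec{or}$ is the orthogonal sum of $\vec{os}\in E$ and $\vec{ou}\in L$; clause~3 of \textit{Definition 8} then gives $f'(r)=\langle f_1(s),f_2(s),f_3(s),g_1(u)\rangle$. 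A direct computation in $\mathbb{R}^{4}$, using $f'(o)=\langle 0,0,0,0\rangle$ together with $f_i(s)=f_i(q_E)-f_i(p_E)$ and $g_1(u)=g_1(q_L)-g_1(p_L)$, then shows $\eta_{ab}^{2}(f'(o),f'(r))=\eta_{ab}^{2}(f'(p),f'(q))$, which settles the interval half of the statement.

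It remains to prove $\equiv^{\mathfrak{M}}(p,q,o,r)$. The idea is to decompose $\vec{pq}$ itself into two orthogonal components congruent to $\vec{os}$ and $\vec{ou}$ and then invoke \textbf{Lemma 11}. Completing the parallelogram on $p_E$, $q_E$, $p$ in the affine structure (axioms (AFF0)--(AFF5)) gives a point $m$ with $\vec{pm}$ a translate of $\vec{p_E q_E}$, hence parallel to $E$, and $\vec{mq}$ a translate of $\vec{op_L}$, hence parallel to $L$; since a segment parallel to $E$ is orthogonal to one parallel to $L$, this exhibits $\vec{pq}$ as a $Reached^{\mathfrak{M}}$-sum with spacelike part $\vec{pm}$ and timelike part $\vec{mq}$, and \textbf{Lemma 11} delivers $\equiv^{\mathfrak{M}}(p,q,o,r)$ once we know $\equiv^{\mathfrak{M}}(p,m,o,s)$ and $\equiv^{\mathfrak{M}}(m,q,o,u)$. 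The degenerate possibilities --- $p=q$, or $\vec{pq}$ lightlike, purely spacelike, or purely timelike --- are trivial or already subsumed by the case split inside \textbf{Lemma 11}. The main obstacle, as I see it, is precisely the claim that parallel translation preserves congruence in $\mathfrak{M}$, i.e.\ $\equiv^{\mathfrak{M}}(p,m,p_E,q_E)$ and $\equiv^{\mathfrak{M}}(m,q,p_L,q_L)$ (whence $\equiv^{\mathfrak{M}}(p,m,o,s)$ and $\equiv^{\mathfrak{M}}(m,q,o,u)$ by (SUM0) and (F1)--(F5)): this has to be established \emph{without} presupposing \textit{Theorem 4}, by transporting the spacelike congruence across the parallel hyperplanes through $p$ and through $o$ via the cross-hyperplane Five-Segment Axiom (A4) and the Euclidean geometry of $E$ from (A3), transporting the timelike congruence along $L$ via \textit{Theorem 2} (or the timelike-line axioms (CONST3), (SUM6)--(SUM8)), and folding the two together through \textbf{Lemma 11}.
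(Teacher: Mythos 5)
Your overall route is the paper's: decompose $p$ and $q$ into spacelike and timelike components, transport the coordinate differences back to the origin (you do this with the surjectivity of $f$ and $g$, the paper equivalently sets $x''=f'^{-1}(f'(x')-f'(x))$ and $t''=f'^{-1}(f'(t')-f'(t))$), form $r$ as their orthogonal sum, observe that the interval identity is then immediate, and reduce the congruence $\equiv^{\mathfrak{M}}(p,q,o,r)$ to Lemma 11 by exhibiting a decomposition of $\vec{pq}$ with components congruent to the transported differences. But the step you yourself flag as ``the main obstacle'' --- that the translated components satisfy $\equiv^{\mathfrak{M}}(p,m,p_E,q_E)$ and $\equiv^{\mathfrak{M}}(m,q,p_L,q_L)$ --- is left unproved, with only a gesture at (A3), (A4), Theorem 2 and the timelike axioms. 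That is a genuine gap, and it is precisely the point for which the paper introduced Lemma 1 (the rectangle lemma proved immediately after Definition 8): in the paper's proof one takes the auxiliary points $v$ with $Reached^{\mathfrak{M}}(o,x',t,x',v)$ and $v'$ with $Reached^{\mathfrak{M}}(o,x,t',x,v')$, applies Lemma 1 to the resulting rectangles of Fig.~30, and obtains exactly the needed congruences $\equiv^{\mathfrak{M}}(p,v,x,x')$ and $\equiv^{\mathfrak{M}}(p,v',t,t')$, from which transitivity of congruence, the relation $Reached^{\mathfrak{M}}(p,v,v',v,q)$ and Lemma 11 finish the argument. Nothing in your sketch plays the role of Lemma 1, even though it is available to you, so the ``parallel translation preserves congruence'' claim is never actually discharged.

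A second, related point: you assert that ``a segment parallel to $E$ is orthogonal to one parallel to $L$'' in order to read your parallelogram point $m$ as giving a genuine $Reached$-decomposition of $\vec{pq}$. Orthogonality in this system is a vertex relation ($Orth(m,p,q)$ concerns the two segments issuing from $m$), and it is not an axiom that orthogonality is preserved under translating both segments to a new vertex; in the paper this too comes out of Lemma 1, whose conclusion includes $Orth^{\mathfrak{M}}(d,b,c)$, yielding the needed $Orth^{\mathfrak{M}}(p,v',v)$. Without that, the hypothesis of Lemma 11 (two $Reached$-decompositions with congruent components) is not established. Also, a small slip: $\vec{mq}$ should be a translate of $\vec{p_Lq_L}$, not of $\vec{op_L}$.
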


\begin{proof}
Let $t,x,t',x'$ be the points such that $Reached^\mathfrak{M}(o,x,t,x,p)$ and $Reached^\mathfrak{M}(o,x',t',x',q)$, as guaranteed by Lemma 4. Let $t''$ and $w''$ be the points: \vspace{0.1 cm}

$$ t''=: f'^{-1}(f'(t')-f'(t)) $$
$$ x''=: f'^{-1}(f'(x')-f'(x)) $$ \vspace{0.1 cm}

Let $r$ be such that $Reached^\mathfrak{M}(o,x'',t'',x'',r)$ and  $Reached^\mathfrak{M}(o,t'',x'',t'',r)$. The identity $\eta_{ab}^2(f'(p),f'(q)) = \eta_{ab}^2(f'(o),f'(r))$ is obvious. \newline

Theorem 1 gives us $\equiv^\mathfrak{M}(x,x',o,x'')$. Theorem 2 implies that $\equiv^\mathfrak{M}(t,t',o,t'')$). Let $v$ be the point such that $Reached^\mathfrak{M}(o,x',t,x',v)$ and and $v'$ be the point such that reached $Reached^\mathfrak{M}(o,x,t',x,v')$. Applications of Lemma 1 to the different rectangles in Fig. 30 establish that $\equiv^\mathfrak{M}(p,v,x,x')$ and that  $\equiv^\mathfrak{M}(p,v',t,t')$. We also get that $Orth^\mathfrak{M}(p,v',v)$. The transitivity of congruence imply that $\equiv^\mathfrak{M}(p,v,o,x''$ and  $\equiv^\mathfrak{M}(p,v',o,t'')$). By definition $Reached^\mathfrak{M}(p,v,v',v,q)$. By the preceding lemma, we obtain that $\equiv^\mathfrak{M}(o,r,p,q)$. 
\end{proof}

\begin{figure}[H]
    \centering
    \begin{tikzpicture}
    \tkzDefPoint(0,0){o}
    \tkzDefPoint(1,0){w''}
    \tkzDefPoint(3,0){w}
    \tkzDefPoint(4,0){w'}
    \tkzDefPoint(0,0.7){z''}
    \tkzDefPoint(0,1.3){z}
    \tkzDefPoint(0,2){z'}
    \tkzDefPoint(3,1.3){p}
    \tkzDefPoint(4,2){q}
    \tkzDefPoint(1,0.7){r}
    \tkzDefPoint(4,1.3){v}
    \tkzDefPoint(3,2){v'}
    \tkzDrawSegment[color=blue](o,w')
    \tkzDrawSegment[color=red](o,z')
    \tkzDrawSegment(o,r)
    \tkzDrawSegment(p,q)
    \tkzDrawSegment[dashed, color=blue](p,v)
    \tkzDrawSegment[dashed, color=red](p,v')
    \tkzDrawSegment[dashed](w'',r)
    \tkzDrawSegment[dashed](z'',r)
    \tkzDrawSegment[dashed](z,p)
    \tkzDrawSegment[dashed](w,p)
    \tkzDrawSegment[dashed](w',q)
    \tkzDrawSegment[dashed](z',q)
    \tkzDrawPoints(o,w,w'',w',z',z,z'',p,q,r,v,v')
    \tkzLabelPoint[below](o){$o$}
    \tkzLabelPoint[below](w){$x$}
     \tkzLabelPoint[below](w''){$x''$}
     \tkzLabelPoint[below](w'){$x'$}
       \tkzLabelPoint[left](z){$t$}
       \tkzLabelPoint[left](z'){$t'$}
       \tkzLabelPoint[left](z''){$t''$}
       \tkzLabelPoint[below](p){$p$}
       \tkzLabelPoint[right](q){$q$}
        \tkzLabelPoint[right](r){$r$}
        \tkzLabelPoint[above](v'){$v'$}
        \tkzLabelPoint[right](v){$v$}

    \end{tikzpicture}
    \caption{Lemma 12}
\end{figure}
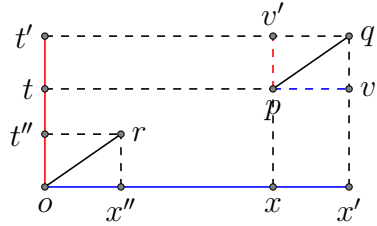

\subsubsection{Lemmata on uniqueness}

This concludes the preliminary results needed to prove the existence of a coordinatization. The proof that two coordinatizations $f$ and $f'$ are equivalent up to a rescaling and a Poincaré transformation will follow from \textit{Theorem 3.}, if we manage to show that there exists a rescaling U such that $f$ and $f'\circ U$ agree on the relativistic interval between any two points. We prove first that they agree on a basis. We then show in a sequence of steps that, if $f$ and $U \circ f'$ agree on a basis, then they must agree on the whole of $\mathfrak{M}$.

\begin{Lemma}
For all $p$, $q$, $r$ $\epsilon$ $\mathbb{R}^{4}$, if distance$_{4}$($p,r$) = distance$_{4}$($p,q$) $+$ distance$_{4}$($q,r$), then $\eta_{ab} (p,r) = \eta_{ab} (p,q) + \eta_{ab} (q,r) $.
\end{Lemma}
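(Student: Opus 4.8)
The statement asserts that the betweenness relation on $\mathbb{R}^4$ as encoded by the Euclidean distance$_4$ implies betweenness as encoded by the Minkowski interval $\eta_{ab}$. The natural strategy is to reduce to a one-dimensional situation. Since distance$_4$ is a genuine metric coming from the standard Euclidean inner product on $\mathbb{R}^4$, the hypothesis distance$_4(p,r) = $ distance$_4(p,q) + $ distance$_4(q,r)$ is precisely the equality case of the triangle inequality, which forces $q$ to lie on the straight-line segment from $p$ to $r$: there is a scalar $\lambda \in [0,1]$ with $q = (1-\lambda)p + \lambda r$. This is the key geometric fact, and it is completely standard for inner-product spaces (equality in Cauchy--Schwarz).

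Once we have $q = p + \lambda(r - p)$ with $\lambda \in [0,1]$, I would simply compute both sides. Write $v = r - p \in \mathbb{R}^4$, so that $q - p = \lambda v$ and $r - q = (1-\lambda)v$. Then by the definition of $\eta_{ab}$ (Definition 10), $\eta_{ab}(p,q) = \sqrt{Q(\lambda v)}$, $\eta_{ab}(q,r) = \sqrt{Q((1-\lambda)v)}$ and $\eta_{ab}(p,r) = \sqrt{Q(v)}$, where $Q$ is the Minkowski quadratic form $Q(w) = w_t^2 - w_x^2 - w_y^2 - w_z^2$ (with the square root interpreted as in the paper, i.e. possibly imaginary when $Q$ is negative, matching the convention used in Lemma 2). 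Since $Q$ is homogeneous of degree two, $\sqrt{Q(\lambda v)} = \lambda\sqrt{Q(v)}$ and $\sqrt{Q((1-\lambda)v)} = (1-\lambda)\sqrt{Q(v)}$, using $\lambda, 1-\lambda \ge 0$ so that the scalars come out of the (possibly imaginary) square root with the correct sign. Adding these gives $\eta_{ab}(p,q) + \eta_{ab}(q,r) = \lambda\sqrt{Q(v)} + (1-\lambda)\sqrt{Q(v)} = \sqrt{Q(v)} = \eta_{ab}(p,r)$, which is the desired identity.

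The only place requiring care — and what I would flag as the main obstacle — is the bookkeeping around the square root when $Q(v) < 0$ (the spacelike case) or $Q(v) = 0$ (the lightlike case). In the spacelike case $\sqrt{Q(v)}$ is a purely imaginary number $i\sqrt{|Q(v)|}$, and one must check that the homogeneity step $\sqrt{Q(\lambda v)} = \lambda \sqrt{Q(v)}$ still holds as an identity of complex numbers for $\lambda \ge 0$; it does, precisely because $\lambda$ is nonnegative, so no spurious sign is introduced. The lightlike case is trivial since all three quantities are $0$. I would state explicitly at the outset that square roots of negative quantities are read as the corresponding imaginary numbers, consistent with the computations in the proof of Lemma 2, and then the argument goes through uniformly in all three causal cases. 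This lemma is then exactly the ingredient needed to pass from condition (1) of Theorem 4 (betweenness via distance$_4$) to the hypothesis of Suppes' theorem and to the statement that $f'$ respects $\eta_{ab}$-collinearity.
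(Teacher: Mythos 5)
Your argument is correct, and it is worth noting that the paper itself does not prove this lemma at all: it simply cites [Suppes, 1959, p.\ 294], so your write-up supplies a self-contained proof where the paper defers to the literature. The two ingredients you isolate are exactly the right ones: (i) the hypothesis is the equality case of the triangle inequality for the Euclidean norm on $\mathbb{R}^4$, which by strict convexity forces $q=p+\lambda(r-p)$ with $\lambda\in[0,1]$ (the degenerate case $p=r$ collapses to $p=q=r$ and is trivial); and (ii) the Minkowski quadratic form $Q$ is homogeneous of degree two, so with the paper's convention that $\sqrt{Q}$ is taken to be $i\sqrt{|Q|}$ when $Q<0$ (the same convention already used tacitly in the computation for Lemma 2), one has $\sqrt{Q(\lambda v)}=\lambda\sqrt{Q(v)}$ for $\lambda\ge 0$, uniformly in the timelike, spacelike and lightlike cases. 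Your flag about the imaginary square root is the only genuinely delicate point, and your resolution of it is right: since $\lambda$ and $1-\lambda$ are nonnegative reals, pulling them out of the (possibly imaginary) square root introduces no sign ambiguity. The benefit of your route is that it makes the paper's appeal to Suppes dispensable for this particular lemma and makes explicit the convention about complex values of $\eta_{ab}$ on spacelike pairs, which the paper uses but never states; the cost is nil, since the argument is short and elementary.
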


\begin{proof}
See [Suppes, 1959, p. 294]. 
\end{proof}

\begin{Lemma}
If $f$: U($\mathfrak{M}) \rightarrow \mathbb{R}^{4}$ satisfies conditions \textit{(1)-(2)} in \textbf{Theorem 4} and $Opp^\mathfrak{M}(o,t',o,x)$, then $\eta_{ab}(f(o),f(t')$) = - $\eta_{ab}(f(o),f(x)$).  
\end{Lemma}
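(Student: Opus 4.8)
The plan is to translate the hypothesis $Opp^{\mathfrak{M}}(o,t',o,x)$ through its definition (D16) and then exploit the two coordinatization conditions (1)--(2) of Theorem 4 together with the algebra of $\eta_{ab}$ on $\mathbb{R}^4$. Unpacking (D16), $Opp^{\mathfrak{M}}(o,t',o,x)$ gives us witnesses $w'$ and $v$ in $U(\mathfrak{M})$ such that $Orth^{\mathfrak{M}}(o,t',w')$, $\equiv^{\mathfrak{M}}(o,w',o,x)$, $Par^{\mathfrak{M}}(w',v,o,t')$, $\equiv^{\mathfrak{M}}(w',v,o,t')$ and $L^{\mathfrak{M}}(o,v)$. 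So $ow'$ is a transported copy of $ox$ that is orthogonal to $ot'$, the segment $w'v$ is a parallel congruent copy of $ot'$, and the ``diagonal'' $ov$ is lightlike. The geometric content we want to read off is exactly that the timelike segment $ot'$ and the spacelike segment $ow'$ have squared intervals that are negatives of one another.

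First I would record what condition (2) of Theorem 4 tells us about these congruences: since $\equiv$ is definable from $<_{\equiv}$ by (D0), condition (2) implies that $\mathfrak{M} \models \equiv(x_1,y_1,z_1,w_1)$ iff $\eta_{ab}^2(f(\cdot),f(\cdot))$ agree on the two pairs. Hence $\equiv^{\mathfrak{M}}(o,w',o,x)$ yields $\eta_{ab}^2(f(o),f(w')) = \eta_{ab}^2(f(o),f(x))$, and $\equiv^{\mathfrak{M}}(w',v,o,t')$ yields $\eta_{ab}^2(f(w'),f(v)) = \eta_{ab}^2(f(o),f(t'))$. Next I would use $L^{\mathfrak{M}}(o,v)$, i.e. (D1) applied in $\mathfrak{M}$, which says $\equiv^{\mathfrak{M}}(o,v,o,o)$; by condition (2) this forces $\eta_{ab}^2(f(o),f(v)) = 0$, so $f(o)$ and $f(v)$ are null-separated in $\mathbb{R}^4$.

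The crux is then the purely $\mathbb{R}^4$-level computation. The point $f(v)$ is reached from $f(o)$ by first going to $f(w')$ along a direction, then from $f(w')$ to $f(v)$ along a congruent parallel copy of $f(o)f(t')$; I need to know that these two displacement vectors are $\eta_{ab}$-orthogonal in $\mathbb{R}^4$, so that $\eta_{ab}^2(f(o),f(v)) = \eta_{ab}^2(f(o),f(w')) + \eta_{ab}^2(f(w'),f(v))$ by bilinearity. This is where $Orth^{\mathfrak{M}}(o,t',w')$ and $Par^{\mathfrak{M}}(w',v,o,t')$ do their work: $Orth$ in $\mathfrak{M}$ must correspond, under $f$, to $\eta_{ab}$-orthogonality of the coordinate displacement vectors — this is the analogue for the present coordinatization of the facts that make the defining clauses $Case_1$, $Case_2$ of (D10)--(D12) correct, and it follows by combining condition (2) with the characterization of $Orth$ via the ``longest/shortest path to the line'' inequalities, exactly as in the verification that $f'$ respects $Orth$ (this is the same orthogonality-transfer used implicitly in Lemmas 1--6). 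Granting this, $0 = \eta_{ab}^2(f(o),f(v)) = \eta_{ab}^2(f(o),f(w')) + \eta_{ab}^2(f(o),f(t')) = \eta_{ab}^2(f(o),f(x)) + \eta_{ab}^2(f(o),f(t'))$, which rearranges to $\eta_{ab}^2(f(o),f(t')) = -\eta_{ab}^2(f(o),f(x))$. Finally, since $Opp$ requires $ot'$ timelike and $ox$ spacelike (these follow from (D16) together with (D1)--(D3): $ow' \equiv ox$ with $ov$ lightlike and $ot'$ orthogonal forces the signs), $\eta_{ab}(f(o),f(t'))$ is a positive real and $\eta_{ab}(f(o),f(x))$ is $i$ times a positive real, so taking square roots with the right branch gives $\eta_{ab}(f(o),f(t')) = -\,\eta_{ab}(f(o),f(x))$ in the sense of the (imaginary-valued) interval, as in the display convention used in Lemma 2.

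The main obstacle is the orthogonality-transfer step: showing that $Orth^{\mathfrak{M}}$, defined in the object language by the cumbersome three-case disjunction (D10)--(D13), matches genuine $\eta_{ab}$-orthogonality of coordinate vectors under a coordinatization satisfying only (1)--(2) of Theorem 4. Everything else is bookkeeping with the definitions (D0), (D1), (D16) and the bilinearity of $\eta_{ab}$ on $\mathbb{R}^4$. I would either isolate this orthogonality-transfer as its own short lemma (it is needed repeatedly) or cite the corresponding step already carried out in the verification that the $f'$ of Definition 8 respects $Orth$.
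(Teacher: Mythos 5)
Your proposal is essentially correct, but it takes a genuinely different route from the paper. The paper disposes of this lemma in one line: it observes that any $f$ satisfying (1)--(2) of Theorem 4 "satisfies the conditions of Lemma 6", i.e.\ it reuses the machinery already built for the constructed coordinatization $f'$ of Definition 8 --- the induction-plus-continuity argument on multiples and submultiples of an opposite pair (Lemmas 2, 4, 5 and the SUM axioms). You instead argue directly and semantically: unpack (D16) to get the witnesses $w'$, $v$, push the congruences and the lightlikeness of $ov$ through condition (2), and then use bilinearity of $\eta_{ab}$ in $\mathbb{R}^4$ once you know that $Orth^{\mathfrak{M}}$ and $Par^{\mathfrak{M}}$ transfer under $f$ to genuine $\eta$-orthogonality and parallelism of displacement vectors. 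That transfer step, which you correctly isolate as the crux, is sound and provable from (1)--(2) alone: conditions (1)--(2) make $f$ an isomorphism of $\langle U(\mathfrak{M}), Bet, <_{\equiv}\rangle$ onto the standard structure on $\mathbb{R}^4$, so the defined predicates (D10)--(D16) take their standard-model extensions, and a short coordinate computation (the quadratic in the parameter along the line has its extremum at the foot iff the Lorentzian cross term vanishes) shows that $Case_1$/$Case_2$ there coincide with $\eta$-orthogonality. What your route buys is self-containedness and explicitness: it makes visible exactly why an \emph{arbitrary} coordinatization satisfying (1)--(2) respects opposites, whereas the paper's one-line appeal to Lemma 6 tacitly assumes that the proof of Lemma 6 --- stated for the particular $f'$ determined by Definition 8, including its normalization hypothesis $\eta^2_{ab}(g(o),g(t))=-\eta^2_{ab}(f(o),f(x))$ --- carries over verbatim. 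What the paper's route buys is brevity and uniformity with the surrounding lemmata on opposites. Your handling of the sign (the interval of a spacelike pair being imaginary, so the identity is read at the level of squared intervals with the branch convention of Lemma 2) matches the paper's loose convention; to be fully rigorous you would state the conclusion as $\eta^2_{ab}(f(o),f(t'))=-\eta^2_{ab}(f(o),f(x))$, and you should indeed add the orthogonality-transfer observation as a separate short lemma, since nothing in the paper states it explicitly.
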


\begin{proof}
$f$ and $f'$ satisfy the conditions of Lemma 6. 
\end{proof}

\begin{Lemma}
If $f$: U($\mathfrak{M}) \rightarrow \mathbb{R}^{4}$ and $f'$: U($\mathfrak{M}) \rightarrow \mathbb{R}^{4}$ are bijections satisfying conditions \textit{(1)-(2)} in \textbf{Theorem 4} and $f$ and $f'$ agree on the relativistic interval between two points $p$ and $q$ i.e., $\eta_{ab}(f(p),f(q)$) = $\eta_{ab}(f'(p),f'(q))$, then they agree on all the points that are collinear to $p$ and $q$.      
\end{Lemma}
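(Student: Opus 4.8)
The plan is to use condition (1) to show that $f$ and $f'$ each carry the line $\ell$ determined by $p$ and $q$ (so $\ell=\{a\in U(\mathfrak{M}):Coll^{\mathfrak{M}}(p,q,a)\}$, with $p\neq q$) onto a genuine Euclidean line in $\mathbb{R}^4$, and then to use condition (2) to force the transition map between these two lines to be affine.

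First I would check that $f[\ell]$ is the full Euclidean line through $f(p)$ and $f(q)$. Since $f$ is a bijection and, by (1), $Bet^{\mathfrak{M}}$ corresponds under $f$ exactly to $\mathrm{distance}_4$-betweenness, the set $f[\ell]$ is convex for Euclidean betweenness (any point $\mathrm{distance}_4$-between two points of $f[\ell]$ is the image of a point $Bet$-between two points of $\ell$, hence lies on $\ell$); it is contained in an affine line (any three of its points are $Bet$-collinear in $\mathfrak{M}$, hence $\mathrm{distance}_4$-collinear, hence Euclidean-collinear); and it has no endpoints, by the affine axiom (AFF3). A convex, endpoint-free subset of an affine line is the whole line, and the same holds for $f'[\ell]$. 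Parametrising $f[\ell]$ by Euclidean arclength $s\colon\ell\to\mathbb{R}$ and $f'[\ell]$ by $s'\colon\ell\to\mathbb{R}$, there are constants $c,c'$ --- the $\eta_{ab}^2$-values of the unit direction vectors of the two lines --- with $\eta_{ab}^2(f(a),f(b))=c\,(s(a)-s(b))^2$ and $\eta_{ab}^2(f'(a),f'(b))=c'\,(s'(a)-s'(b))^2$ for all $a,b\in\ell$. If $\ell$ is lightlike then $c=0$, and the hypothesis $\eta_{ab}(f(p),f(q))=\eta_{ab}(f'(p),f'(q))$ forces $c'=0$ as well, so both intervals vanish on $\ell$ and we are done; from now on assume $\ell$ is not lightlike, whence $c,c'\neq 0$.

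Next I would analyse the transition map $\psi:=s'\circ s^{-1}\colon\mathbb{R}\to\mathbb{R}$. By (1) it is a monotone bijection, hence continuous. By (2) together with (D0), $\equiv^{\mathfrak{M}}(a,b,c,d)$ holds iff $\eta_{ab}^2(f(a),f(b))=\eta_{ab}^2(f(c),f(d))$, and likewise for $f'$; combined with the quadratic formulas above and with $c,c'\neq 0$, this shows that $\psi$ preserves the relation $|\sigma_1-\sigma_2|=|\sigma_3-\sigma_4|$ on $\mathbb{R}$, and here it matters that $s$ is onto $\mathbb{R}$, so this holds for every quadruple of reals. Specialising $\sigma_3=\sigma_2$ and using monotonicity, $\psi$ carries midpoints to midpoints; a continuous solution of Jensen's equation is affine, so $\psi(s)=\alpha s+\beta$ with $\alpha\neq 0$. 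Hence $\eta_{ab}^2(f'(a),f'(b))=(c'\alpha^2/c)\,\eta_{ab}^2(f(a),f(b))$ for all $a,b\in\ell$; evaluating at $a=p$, $b=q$ and using $\eta_{ab}^2(f(p),f(q))=\eta_{ab}^2(f'(p),f'(q))\neq 0$ gives $c'\alpha^2/c=1$. Therefore $\eta_{ab}^2(f(a),f(b))=\eta_{ab}^2(f'(a),f'(b))$ for every $a,b\in\ell$, and since these squared intervals agree they determine the same value of $\eta_{ab}$, which is the claim.

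I expect the affineness of $\psi$ to be the crux: monotonicity alone (from (1)) leaves $\psi$ far too free, and it is precisely condition (2) --- via preservation of congruence, hence of midpoints --- together with the fact that $f[\ell]$ is a complete Euclidean line (so every pair of reals arises as a pair of parameters) that rigidifies it. Establishing that $f[\ell]$ and $f'[\ell]$ are full lines is therefore worth doing carefully at the outset; the lightlike sub-case and the passage from equal squared intervals to equal $\eta_{ab}$ (real in the timelike case, purely imaginary in the spacelike case) are routine.
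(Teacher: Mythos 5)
Your argument is correct in substance but follows a genuinely different route from the paper's. The paper's own proof is a short scaling argument: Lemma 13 (additivity of $\eta_{ab}$ under $\mathrm{distance}_4$-betweenness) together with condition (1) gives $\eta_{ab}(f(p),f(n\cdot q))=n\cdot\eta_{ab}(f(p),f(q))$ for integer multiples of $\vec{pq}$, likewise for submultiples $\tfrac{1}{n}$ and for $f'$, and the agreement is then extended to the whole line by continuity. You never invoke Lemma 13: you transport the problem to the two Euclidean image lines, note that $\eta_{ab}^2$ restricted to such a line is a constant times squared arclength, and use condition (2) (via (D0), so that $\equiv^{\mathfrak{M}}$ corresponds to equality of squared intervals) to show the transition map between the two arclength parameters preserves midpoints, hence is affine by Jensen's equation, the hypothesis at $p,q$ then fixing the constant. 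So the paper rigidifies by additivity along the line (condition (1) plus the Suppes-style Lemma 13), while you rigidify by congruence (condition (2)); your version also yields the slightly stronger conclusion that $f$ and $f'$ agree on the interval between \emph{any} two points of $\ell$, not only on intervals anchored at $p$. The paper's route is shorter only because Lemma 13 is already available.

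One step as written is faulty, though locally and fixably so: the claim that ``a convex, endpoint-free subset of an affine line is the whole line'' is false (a bounded open segment is a counterexample), so your justification that $f[\ell]$ is the full Euclidean line does not go through as stated. The conclusion is nonetheless true, and by the device you already use for convexity: given any $m$ on the Euclidean line through $f(p),f(q)$, surjectivity of $f$ gives $m=f(c)$; the three points $f(p),f(q),m$ are metrically collinear, so one of the three $\mathrm{distance}_4$-betweenness configurations holds, and the biconditional in condition (1), applied in whichever configuration obtains (not only the ``$m$ in the middle'' case), yields $Coll^{\mathfrak{M}}(p,q,c)$, i.e.\ $c\in\ell$. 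Alternatively, fullness is not needed at all: your convexity argument already shows $s[\ell]$ is an interval, midpoints of parameters in that interval are realized by points of $\ell$, and a continuous monotone solution of Jensen's equation on an interval is affine; so the appeal to ``$s$ is onto $\mathbb{R}$'' can simply be dropped. With either repair the proof stands.
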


\begin{proof}
Lemma 13 and condition (1) of Theorem 4 imply that $\eta_{ab}(f(p),f(n \cdot q)$) = $n \cdot \eta_{ab}(f(p),f( q)$) and $\eta_{ab}(f(p),f(n \cdot q)$) = $n \cdot \eta_{ab}(f'(p),f(q))$ for integer multiples of the segment $\vec{pq}$. The same holds for submultiples $\frac{1}{n}$. The result extends by continuity to all multiples of the segment $\vec{pq}$.

\end{proof}

\begin{Lemma}
Let $f$: U($\mathfrak{M})\rightarrow \mathbb{R}^{4}$ and $f'$: U($\mathfrak{M}) \rightarrow \mathbb{R}^{4}$ like in Lemma 15. Suppose that $Reached^\mathfrak{M}(o,p,q,p,r)$ for some $o,p,q, r$ $\epsilon$ U($\mathfrak{M}$). If $f$ and $f'$ agree on the components, that is $\eta_{ab}(f(o),f(p)$) = $\eta_{ab}(f'(o),f'(p))$ and $\eta_{ab}(f(o),f(q)$) = $\eta_{ab}(f'(o),f'(q))$, then  $\eta_{ab}(f(o),f(r)$) = $\eta_{ab}(f'(o),f'(r))$.   
\end{Lemma}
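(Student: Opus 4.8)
The plan is to isolate a single formula and then read the lemma off it: for \emph{every} bijection $h\colon U(\mathfrak{M})\to\mathbb{R}^4$ satisfying conditions \textit{(1)}--\textit{(2)} of Theorem 4, and for all $o,p,q,r\in U(\mathfrak{M})$ with $Reached^{\mathfrak{M}}(o,p,q,p,r)$, the Minkowskian Pythagorean identity
\[
\eta_{ab}^2(h(o),h(r)) \;=\; \eta_{ab}^2(h(o),h(p)) \,+\, \eta_{ab}^2(h(o),h(q))
\]
holds. Granting this, Lemma 16 is immediate: the two hypotheses say that $f$ and $f'$ assign the same $\eta_{ab}$, hence the same $\eta_{ab}^2$, to $\vec{op}$ and to $\vec{oq}$, so the right-hand side of the boxed identity is the same whether we compute it with $f$ or with $f'$; applying the identity to $h=f$ and to $h=f'$ therefore yields $\eta_{ab}^2(f(o),f(r))=\eta_{ab}^2(f'(o),f'(r))$. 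Since whether $\vec{or}$ is spacelike, timelike, or null is settled by $\mathfrak{M}$ alone, and the branch of the square root defining $\eta_{ab}$ is fixed by the same convention on both sides, agreement of the squares lifts to $\eta_{ab}(f(o),f(r))=\eta_{ab}(f'(o),f'(r))$.

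I would prove the boxed identity by cases, having first noted that $Reached^{\mathfrak{M}}(o,p,q,p,r)$ forces $\vec{op}$ and $\vec{oq}$ to be orthogonal in $\mathfrak{M}$, whence they are never both timelike (else axiom (A1) would furnish a $Basis$ with a timelike spatial vector, against the definition of $Basis$) and are collinear if both null (from the definition of orthogonality) --- this trims the list. If one component, say $\vec{oq}$, is null or degenerate, then $\vec{op}$ is spacelike or null and axiom (SUM9) gives $\equiv^{\mathfrak{M}}(o,p,o,r)$; by condition \textit{(2)}, under which $\equiv^{\mathfrak{M}}$ becomes equality of $\eta_{ab}^2$, this is exactly the identity, the $q$-term being $0$. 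If both components are spacelike, the whole configuration lies in a spacelike hyperplane $H$, which by axioms (A1)--(A3) satisfies Tarski's Euclidean axioms; the identity is then the ordinary theorem of Pythagoras inside $H$ (endowed with the negative-definite restriction of the Lorentzian form), which $h$ reports faithfully by Theorem 1. If the resultant $\vec{or}$ is itself null, axiom (SUM3) says the two components are of opposite length, i.e.\ $\eta_{ab}^2(h(o),h(p))=-\eta_{ab}^2(h(o),h(q))$, so the identity again holds.

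The substantive case is the mixed metrical one: exactly one of $\vec{op},\vec{oq}$ timelike, the other spacelike, and $\vec{or}$ not null. Using axiom (A5) to reorder the development (and Lemma 11 to transport it into the required shape) I would take the spacelike component as the base, so the configuration matches the hypothesis of (SUM1) when $\vec{or}$ is spacelike and of (SUM2) when $\vec{or}$ is timelike. Picking any $e$ with $\eta_{ab}(h(o),h(e))=1$ --- available by (DENS) and continuity --- the relevant axiom supplies a point $w$ laid off inside the base, of opposite length to the timelike component, and a point $v_5$ with $\equiv^{\mathfrak{M}}(o,r,o,v_5)$ satisfying $Remterm^{\mathfrak{M}}$ (resp.\ $Remterm'^{\mathfrak{M}}$). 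Lemmas 9 and 10 identify the content of these predicates under $h$ with the algebra carried out in display $(*)$ preceding (SUM1); feeding in that $\vec{ow}$ has the same absolute $\eta_{ab}^2$-value as the timelike component (by axiom (SUM0), the definition of $Opp$, and Lemma 14), the computation $C^2=A^2+B^2$ collapses to the boxed identity for $h$.

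I expect this mixed case to be the main obstacle. The delicate points are that the segment-arithmetic predicates $Remterm$, $Remterm'$ are \emph{unit-relative}, so the unit $e$ must be chosen separately for $f$ and for $f'$ --- there need be no single segment that is unit-length for both --- and one must check that the extracted identity does not depend on that choice; and the reordering via (A5) must be shown to preserve the congruences of the two components, so that Lemmas 9, 10 and 14 can be applied with the correct inputs. The remaining cases are routine given axioms (SUM0), (SUM3), (SUM9), Euclidean geometry, and the earlier lemmas.
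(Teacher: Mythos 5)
Your proposal is sound in outline but takes a genuinely different, and heavier, route than the paper. The paper's own proof never computes an interval: it splits into the case $L^{\mathfrak{M}}(o,r)$, where both coordinatizations assign interval $0$, and the remaining cases, where (SUM1)/(SUM2) -- Pythagoras inside the model -- supply a point $r'$ on one of the component lines $\ell_{op}$, $\ell_{oq}$ with $\equiv^{\mathfrak{M}}(o,r,o,r')$; Lemma 15 propagates the assumed agreement of $f$ and $f'$ along those lines to $\vec{or'}$, and condition (2) (congruence corresponds to equality of squared intervals) transfers it to $\vec{or}$. No unit point, no $Remterm$, no Lemmas 9, 10 or 14 are invoked. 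You instead prove, for each admissible coordinatization $h$ separately, the quantitative identity $\eta_{ab}^2(h(o),h(r))=\eta_{ab}^2(h(o),h(p))+\eta_{ab}^2(h(o),h(q))$ by rerunning the computation $(*)$ behind (SUM1)/(SUM2) through the Streckenrechnung lemmas; this essentially duplicates, at the level of Lemma 16, the work the paper postpones to equations (1)--(4) in the existence part of Theorem 4. What your route buys is a stronger, reusable Pythagorean statement; what it costs is the unit bookkeeping you rightly flag, plus the fact that Lemmas 9 and 10 are stated in the paper only for the particular $f'$ of Definition 8 and for points of $E$ and $L$, so you would need to argue (e.g., via the isomorphism onto the standard structure that conditions (1)--(2) provide) that they hold for an arbitrary $h$ and arbitrary configurations.

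Two caveats. First, your boxed identity tacitly assumes $\equiv^{\mathfrak{M}}(p,r,o,q)$, i.e., that the second leg of the development is congruent, not merely parallel, to $\vec{oq}$; the literal definition (D17) of $Reached$ only demands $Par(o,q,p,r)$. The paper's argument is insensitive to this because it never uses the length of $\vec{oq}$, whereas yours needs either that congruence made explicit (it is the reading the paper uses elsewhere, e.g., around (SUM1) and Lemma 11) or the second summand replaced by $\eta_{ab}^2(h(p),h(r))$ together with a transport of agreement from $\ell_{oq}$ to the parallel segment $\vec{pr}$. Second, in the null-resultant case you cite (SUM3) in the converse direction: (SUM3) yields a lightlike resultant from opposite components, while you need that a lightlike resultant forces opposite components, which requires the definition of $Opp$ (as in the proof of Lemma 11) or an exclusion argument via (SUM1)/(SUM2). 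Both points are fixable, but they should be addressed for the argument to close.
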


\begin{proof}
This is proven by cases. If $L^\mathfrak{M}$(o,r), we have that:

\begin{align*}
\eta_{ab}(f(o),f(r)) =  \eta_{ab}(f(o),f(o))  & \qquad(\text{By Axiom (F4)})\\
= 0 & \qquad(\text{Obvious}) \\
= \eta_{ab}(f'(o),f'(r) & \qquad (\text{similarly}) \\
\end{align*}

In all other cases  Axiom (SUM1) (SUM2) and a form of Pythagora's theorem for spacelike vectors imply that there is a point $r'$ that lies on either $\ell_{oq}$ or $\ell_{op}$ and $\equiv^\mathfrak{M}(o,r,o,r')$. Condition (2) and Lemma 15 imply the result.   \newline 
\end{proof}

{\fontsize{13}{0} \selectfont \textbf{Proof of Theorem 4.  }} 
 \newline

\textbf{Existence:}

\begin{proof}[\unskip\nopunct]

Let $\mathfrak{M}$ be a model of $\mathcal{M}^2$ in which the ordering of points on a line is a continuum. By (A1) it has elements $o, x, y, z, t$ such that $Basis^\mathfrak{M}(o,x,y,z,t)$. By the affine axioms and the axioms (A3)(A4)(DENS)(F5)(F6), it follows that the structure $E$ with congruence and betweenness restricted to the elements $v$ such that $Gen_{3D}^\mathfrak{M}(o,x,y,z,v)$ is a model of $\mathcal{E}_3$. By Theorem 1 it has a coordinatization $f$. An analogous statement is true for the structure on the line $L$ with the same relations restricted to points $l$ such that $Coll^\mathfrak{M}(o,t,l)$. By Theorem 2 it admits of a coordinatization $g$. Fix a total coordinatization $f'$ as specified in Definition 8. Axiom (A9) and the bijectivity of $f$ and $g$ imply that $f'$ is onto. Axiom (A2) implies that it is one-to-one.\newline

Let us fix a point $e$ $\epsilon$ $E$ such that $\eta_{ab}(f'(o),f'(e)) = 1$. \newline

Condition (1) is equivalent to the condition that, for all $p$, $q$, $r$ $\in$ $U(\mathfrak{M})$, $Bet^\mathfrak{M}(p,q,r)$ if and only if there is a positive constant $\lambda$ $\in$ $\mathbb{R}$ such  that $f'(r)-f'(p) = \lambda \cdot (f'(q)-f'(p))$. An analysis of (A8) and an appeal to Lemma 8 are sufficient to verify that this is the case. Let us now turn to (2).\newline  
Choose four points $p$, $q$, $l$, $r$ $\in$ $U(\mathfrak{M})$ such that $\equiv^\mathfrak{M}(l,r,p,q)$. By Lemma 12, two of the points can be chosen to be the origin $o$ (= $l$ = $p$). The proof that the relativistic interval, as computed by $f'$, is the same on the two segments proceeds by cases. (\textit{Case 1}) $L^\mathfrak{M}(o,q)$ implies $L^\mathfrak{M}(o,r)$ via (F1)(F2)(F3). Lemma 6 implies that $\eta_{ab}(f'(o),f'(r))$ = 0 = $\eta_{ab}(f'(o),f'(q)$. (\textit{Case 2}) If $S^\mathfrak{M}(o,q)$ we can assume by (SUM 1) that $r$ $\in$ $E$ and that there exist points $w$, $x$, $v_5$ $\in$ $E$ and $t$ $\in$ $L$ such that $Reached^\mathfrak{M}(o,x,t,x,q)$, $Bet^\mathfrak{M}(o,w,x)$,  $Opp^\mathfrak{M}(o,w,o,t)$, $Remterm^\mathfrak{M}(o,e, o,x,w,v_5)$ and $\equiv^\mathfrak{M}(o, p, x, v_5) )$.\newline
      
Lemma 6 implies equation (1). Lemma 13 and the fact that $Bet^\mathfrak{M}(o,w,x)$ justify equation (2) below. Lemma 9 on the calculus of segments implies equation (3). Equation (4) is from the definition of the interval $\eta_{ab}$.    

\begin{enumerate}[start=1,label={(\bfseries \arabic*)}]
\item $\eta_{ab}^2(f'(o),f'(w)) = - \eta_{ab}^2(f'(o),f'(t)) $
\end{enumerate}

\begin{enumerate}[start=2,label={(\bfseries \arabic*)}]

\item $\eta_{ab}^2(f'(o),f'(x)) =  \eta_{ab}^2(f'(o),f'(w)) + \eta_{ab}^2(f'(w),f'(x)) + 2\eta_{ab}(f'(o),f'(w)) \eta_{ab}(f'(w),f'(x)) $

\item  $\eta_{ab}^2(f'(o),f'(v_5)) =  \eta_{ab}^2(f'(w),f'(x)) +  2\eta_{ab}(f'(o),f'(w)) \eta_{ab}(f'(w),f'(x)) $
\end{enumerate}

\begin{enumerate}[start=4,label={(\bfseries \arabic*)}]

\item $\eta_{ab}^2(f'(o),f'(q)) =  \eta_{ab}^2(f'(o),f'(t)) +  \eta_{ab}^2(f'(o),f'(x)) $ \newline

\end{enumerate}

By substituting in (4) the two terms for their equivalents in (2) and in (1) leaves the expression that figures on the right-hand side of (3). This proves that $\eta_{ab}^2(f'(o),f'(q))$ = $\eta_{ab}^2(f'(o),f'(v_5))$. The transitivity of congruence implies $\equiv^\mathfrak{M}(o, r, x, v_5) )$. Theorem 1, and the fact that $o$, $r$, $v_5$ $\in$ $E$, imply that also $\eta_{ab}^2(f'(o),f'(r))$ = $\eta_{ab}^2(f'(o),f'(v_5))$. (\textit{Case 3}) when $T^\mathfrak{M}(o,q)$ is analogous. \newline
To prove the conditionals in the other direction, it suffices to note \textit{(Case 1)} that lightlike segments are congruent by (F4). By (4), lemma 6 and (SUM3) it follows that, if $\eta_{ab}^2(f'(o),f'(q)) = 0 = \eta_{ab}^2(f'(o),f'(r))$, then $L^\mathfrak{M}(o,r)$ and   $L^\mathfrak{M}(o,q)$. \textit{(Case 2)} and \textit{(Case 3)} follow from Theorem 1 and (SUM1) (SUM2) and the transitivity of congruence. The biconditionals in (2) of Theorem 4 in terms of `$<_{\equiv}$' follow readily from the biconditionals in terms of `$\equiv$'. We have already noted the fact that segments lightlike in $\mathfrak{M}$ have null interval relative to $f'$. Timelike and spacelike segments are congruent to segments in $E$ and $L$ respectively by (SUM1)(SUM2) and the square of the interval is respectively negative and positive between points in $E$ and $L$ by the construction of $f'$.\footnote{Within the three main categories the relation `$<_{\equiv}$' is definable in terms of `$\equiv$'.}\newline

\textbf{Uniquess up to a rescaling and a Poincaré transformation:} \newline

Let $f: U(\mathfrak{M}) \rightarrow \mathbb{R}^{4}$ and $f': U(\mathfrak{M}) \rightarrow \mathbb{R}^{4}$ be two bijective functions that satisfy conditions (1) and (2) of \textit{Theorem 4}. Fix the basis $o$, $x$, $y$, $z$ and $t$ that is associated by $f$ to the canonical basis of $\mathbb{R}^{4}$. There is a real $k$ such that $\eta_{ab}(f'(o),f'(x)) = k$ and (by Lemma 6) $\eta_{ab}^2(f(o),f(t))= -k$. Let $U: \mathbb{R}^{4} \rightarrow \mathbb{R}^{4}$ be the rescaling function such that $U(\vec{x})$ = $\frac{1}{k} \cdot \vec{x}$. It will suffice to show, by Theorem 3, that $U \circ f'= f''$ and $f$  agree on the relativistic interval between all $p$ and $q$ in $U(\mathfrak{M})$. By condition (2) and Lemma 6 it will suffice check the case when $p=o$. $U \circ f'=f''$ and $f$ agree on the basis $o$, $x$, $y$, $z$ and $t$ by construction. By axiom (A2) we have that $Gen_{4D}^\mathfrak{M}(o,x,y,z,t,q)$. By analysing the definition and noticing axiom (A7) we get a sequence $x'$, $y'$, $z'$ such that $Reached^\mathfrak{M}(o,x,y,x',y')$, and $Reached^\mathfrak{M}(o, y', z, y', z')$ and finally that $Reached^\mathfrak{M}(o, z', t, z', q)$ . The first equality $\eta_{ab}(f(o),f(x')$) = $\eta_{ab}(f''(o),f''(x'))$ follows from Lemma 15. The fact that $\eta_{ab}(f(o),f(y')$) = $\eta_{ab}(f''(o),f''(y'))$ and ultimately that $\eta_{ab}(f(o),f(x')$) = $\eta_{ab}(f''(o),f''(q))$ follows by successive applications of Lemma 16.   

\end{proof}

\section{Conclusion and future directions}

We have proposed a formalization of a small fragment of physical theory for a specific purpose, but let us conclude with some other uses that it might serve. We see two main directions in which this type of work can lead. One is to attempt to regiment more complex physical theories. We may begin by adding a classical field to our empty spacetime and formalize something like relativistic electrodynamics. To keep the axiom system intrinsic, it is preferable to avoid simply stating an analog of some differential equations - like Maxwell Equations - under some foliation of spacetime. It is preferable to develop a part of integration theory and use the theorems of multivariable calculus to rephrase the laws without coordinates -  a task for a future article. Other classical gauge theories can be dealt with in the same way. This may be done by introducing a six place mixed predicate for each scalar field, so as to compare the ratio of the intensity of the field at two points with the ratio in the length between two segments. A second natural step forward requires us to move away from a flat spacetime and to attempt to describe axiomatically the geometry of curved Lorentzian manifolds. Our present work should prove again useful: since a manifold is something that has, at each point a minkowskian tangent plane, this means in nominalistic terms that it approximates our axioms on small patches.   Whether manifolds of this sort can be treated with a predicate of betweenness on local geodesics and comparative predicates for proper time along paths deserves investigation. \newline   
This work pushes in the same direction as the program of nominalization of [Field 1980]. With each step, we augment the amount of nominalistic physics at our disposal. It is in general useful and illuminating to proceed further while trying to introduce as little further apparatus as possible; even when the apparatus is nominalistically acceptable. It is an interesting question how much of differential geometry or physics, for example, can be formalized without resorting to the calculus of individuals or mereology. Quantification over regions, regular curves and aggregates of points appears, at first sight, to be needed to describe the trajectory of a particle when that trajectory is not inertial. But how far we can go without mereology remains an open question [Martin 1971]. The introduction of mereology marks marks a crucial transition. For its addition to our geometric theory turns a decidable theory into an undecidable theory in the gödelian sense. This brings us to a second goal: to study better the metatheory of the system. We have not attempted to check whether it admits quantifier elimination upon the addition of primitives, or whether the theory of o-minimality can in some sense be applied to our geometric theory (see [Van den Dries 1998]). It is not clear to us whether this is a fruitful terrain for much model thoery. But these two lines of enquiry - the formalization of more theories and the study of their model theory by using higher mathematical logic - do not pull in opposite directions. They ought to proceed hand in hand. What will be achieved by these combined efforts remains to be seen.  
\newpage


\begin{thebibliography}{}

\bibitem{alscher} Alscher, D., 2016,  \textit{Theorien Der Reellen Zahlen Und Interpretierbarkeit}, Berlin, Boston: De Gruyter

\bibitem{andrekaetalia}
 Andréka, H.,  Madarász, J. X., Németi, I. and Székely, G., 2011, On logical analysis of relativity theories, \textit{Hungarian Philosophical Review}, 54 (4): 204-222. 
 
\bibitem{ax} Ax, J., 1978, The elementary foundations of spacetime, \textit{Foundations of Physics}, 8 (7–8): 507–546.
 
\bibitem{BarrettHalvorson} 
Barrett, T.W., and Halvorson, H., 2016,
Glymour and Quine on Theoretical Equivalence, \textit{Journal of Philosophical Logic}, 45: 467–483.

\bibitem{ominimality} Van den Dries L. P. D., 1998, \textit{Tame Topology and O-minimal Structures}, Cambridge University Press.


\bibitem{Field} Field, H., 1980, \textit{Science without numbers}, Princeton University Press.

\bibitem{Goldblatt}
Goldblatt, R., 1987, \textit{Orthogonality and Spacetime Geometry}, Springer.


\bibitem{Halvorson} Halvorson, H., 2019, \textit{The logic in philosophy of science}, Cambridge University Press.




\bibitem{Hartshorne} Hartshorne, R., 2000, \textit{Euclid and beyond}, Springer.


\bibitem{Hilbert} Hilbert, D., 1899, \textit{Grundlagen der geometrie}.
 
\bibitem{jensen} Jensen, R. B., 1972, The fine structure of the constructible hierarchy, \textit{Annals of mathematical logic}, 4 (3): 229-308.

 
\bibitem{Malament1} Malament, D., 2019, Mathematical and Physical Background to ``Über die Abhängigkeit der Eigenschaften des Raumes von denen der Zeit", in \textit{The Collected Works of Rudolf Carnap, Vol. 1}, eds. by A. W. Carus, M. Friedman, W. Kienzler, and S. Schlotter, Oxford University Press.


\bibitem{Malament}
 ---, \textit{Geometry and spacetime}, unpublished notes.

\bibitem{martin} Martin, R. M., 1971 \textit{Logic, language, and metaphysics}, New York University press.


\bibitem{Maudlin}
Maudlin, T., 2012, \textit{Philosophy of physics: space and time}, Princeton University Press.

 \bibitem{Mundy}
Mundy, B., 1986, Optical Axiomatization of Minkowski Space-Time Geometry, \textit{Philosophy of Science}, 53 (1): 1-30.

 \bibitem{Mundy2}
 ---, 1986, The Physical Content of Minkowski Geometry, \textit{The British Journal for the Philosophy of Science}, 37 (1): 25–54.

\bibitem{Pambuccian} Pambuccian, V., 2007, Alexandrov–Zeeman type theorems expressed in terms of definability, \textit{Aequationes Mathematicae},  74: 249–261.

\bibitem{Quine}
 Quine, W. V. O., 1975, On Empirically Equivalent Systems of the World, \textit{Erkenntnis}, 9 (3) : 313-28.

\bibitem{Reichenbach}
Reichenbach, H., 1924, \textit{Axiomatik der relativistischen Raum-Zeit-Lehre},  Braunschweig: Fried. Vieweg and Sohn.



\bibitem{Robb1} Robb, A., 1914, \textit{A theory of space and time}, Cambridge University Press.

\bibitem{Robb2} Robb, A., 1936, \textit{Geometry of time and space}, Cambridge University Press.

\bibitem{Tarski}
 Schwabhäuser W., Szmielew W. , Tarski A., 1983, \textit{Metamathematische Methoden in der Geometrie}, Springer.
 
\bibitem{Schutz} Schutz, J. W.,  1997, \textit{Independent Axioms for Minkowski Space-Time}, CRC Press.
 
 
\bibitem{Joseph R. Shoenfield}
Shoenfield, J. R., 1971, Unramified Forcing  in \textit{Axiomatic Set Theory, Proc. Sympos. Pure Math., XIII, Part I, }, Providence, R.I.: Amer. Math. Soc.: 357–381
  
 
\bibitem{Sklar} Sklar,  L., 1985, \textit{Philosophy and spacetime physics}, University of California press.
 
\bibitem{spector}
Spector, C., 1958, Measure-theoretic construction of incomparable hyperdegrees, \textit{The Journal of Symbolic Logic}, 23 (3): 280-288.


 \bibitem{Suppes}
Suppes P., 1959, Axioms for relativistic kinematics with or without parity, in \textit{Symposium on the Axiomatic Method}, eds. by L. Henkin, P. Suppes and A. Tarski, University of California, Berkeley: Amsterdam: North-Holland Publishing Co.

 \bibitem{Tarski1}
Tarski, A., 1959, What is elementary geometry?, in \textit{Symposium on the Axiomatic Method}, eds. by L. Henkin, P. Suppes and A. Tarski, University of California, Berkeley: Amsterdam: North-Holland Publishing Co.

\bibitem{Tarskigivant}
Tarski, A. and Givant, S., 1999, Tarski's system of geometry, \textit{The Bulletin of Symbolic Logic}, 5 (2): 175–214. 
 
\bibitem{TarskiSzczerba}
Tarski, A. and Szczerba, L.W., 1979, Metamathematical discussion of some affine geometries, \textit{Fundamentae Mathematicae},  3 (104): 155-192.




\end{thebibliography}
\end{document}